\newtheorem{corollary}{Corollary}
\newtheorem{theorem}{Theorem}
\newtheorem{problemformulation}{Problem Formulation}
\newcommand{\dif}{\mathrm{d}}
\newcommand{\lambdau}{\lambda^\mathrm{u}}
\newcommand{\lfa}{\lambda^\mathrm{FA}}
\newcommand{\traj}{X} 
\newcommand{\stseq}{x} 
\newcommand{\tb}{\beta} 
\newcommand{\td}{\varepsilon} 
\newcommand{\tlen}{\ell} 
\newcommand{\targetStateSpace}{\mathcal{X}} 
\newcommand{\timeseq}[2]{#1:#2}
\newcommand{\timeset}[2]{\mathbb{N}_{#1}^{#2}}
\newcommand{\trajStateSpace}[1]{\mathcal{T}_{#1}} 
\newcommand{\existencespace}[1]{I_{#1}}
\newcommand{\measStateSpace}{\mathcal{Z}} 
\newcommand{\trackTable}{\mathbb{T}}
\newcommand{\conv}[2]{\left\langle #1 ; #2 \right\rangle}
\newcommand{\infvec}{y}
\newcommand{\infmat}{Y}
\newcommand{\trajdensityparams}{\theta}
\newcommand{\thepapertitle}{Poisson Multi-Bernoulli Mixtures\\ for Sets of Trajectories}
\begin{document}
%
\title{\thepapertitle}
%
%
%

\author{Karl Granstr\"om,~\IEEEmembership{Member,~IEEE,}
        Lennart Svensson,~\IEEEmembership{Senior Member,~IEEE,}
        Yuxuan Xia,\\
        Jason Williams,~\IEEEmembership{Senior Member,~IEEE,}
        and~\'Angel F. Garc\'ia-Fern\'andez
\thanks{K. Granstr\"om is with Zoox, San Francisco, CA, USA. He was previously with the Department of Electrical Engineering, Chalmers University of Technology, Gothenburg, Sweden. E-mail: \texttt{kagranstrom@gmail.com}.} 
\thanks{L. Svensson is with the Department of Electrical Engineering, Chalmers university of Technology, Gothenburg, Sweden. E-mail: \texttt{lennart.svensson@chalmers.se}.}
\thanks{Y. Xia is with the Department of Automation, Shanghai Jiaotong University, Shanghai, China. Email: \texttt{yuxuan.xia@sjtu.edu.cn}. }
\thanks{J. Williams is with Whipbird Signals, Australia. E-mail: \texttt{jlw@ieee.org}.}
\thanks{A. F. Garc\'ia-Fern\'andez is with the Department of Electrical Engineering and Electronics, University of Liverpool, Liverpool, UK.  E-mail: \texttt{angel.f.garcia.fernandez@gmail.com}.}}

\maketitle

\begin{abstract}

The Poisson Multi-Bernoulli Mixture (\pmbm) density is a conjugate multi-target density for the standard point target model with Poisson point process birth. This means that both the filtering and predicted densities for the set of targets are \pmbm. In this paper, we first show that the \pmbm density is also conjugate for sets of trajectories with the standard point target measurement model. Second, based on this theoretical foundation, we develop two trajectory \pmbm filters that provide recursions to calculate the posterior density for the set of all trajectories that have ever been present in the surveillance area, and the posterior density of the set of trajectories present at the current time step in the surveillance area. These two filters therefore provide complete probabilistic information on the considered trajectories enabling optimal trajectory estimation. Third, we establish that the density of the set of trajectories in any time window, given the measurements in a possibly different time window, is also a \pmbm. Finally, the trajectory \pmbm filters are evaluated via simulations, and are shown to yield state-of-the-art performance compared to other multi-target tracking algorithms based on random finite sets and multiple hypothesis tracking.

\end{abstract}

\begin{IEEEkeywords}
Multiple target tracking, point targets, sets of trajectories, conjugate priors, Poisson multi-Bernoulli mixtures.
\end{IEEEkeywords}

%
\IEEEpeerreviewmaketitle


\section{Introduction}
\label{sec:Introduction}

Multi-target tracking (\mtt) refers to the processing of sensor measurements to estimate the unknown and time-varying number of targets and their trajectories \cite{BarShalomWT:2011}. In this paper, we consider discrete-time multi-target tracking in which measurements at taken at discrete time steps, and each trajectory is characterised by a start time, and a sequence of target states. An alternative definition of a trajectory is to consider it a function of time, which can then be estimated using optimisation techniques \cite{Milan16,Li19d,Zhou21}. This approach has the benefit of estimating continuous-time trajectories, but it does not provide associated uncertainties.

This paper uses the Bayesian framework of random finite sets (\rfs) of trajectories to address the \mtt problem \cite{GarciaFernandezSM:2019}. Within this framework, we are given probabilistic models for multi-target birth, dynamics and death, and also models for the sensor measurements, and the goal is to compute or approximate the posterior density over the set of trajectories, as it is the density that enables us to answer all trajectory-related questions we may have, such as the estimation of the set of trajectories, and their associated uncertainties. 

This paper focuses on the sequential calculation of two important densities: the density for the set of all trajectories that have ever been present in the surveillance area, and the density for the set of trajectories present in the surveillance area at the current time step. We show that, in both cases, for the standard point target measurement model and standard dynamic model with Poisson point process (\ppp) birth, the conjugate density of the set of trajectories is a Poisson Multi-Bernoulli Mixture (\pmbm), which was first introduced for sets of targets in \cite{Wil12}. The resulting algorithms are referred to as trajectory \pmbm (\tpmbm) filters.

Solutions to the \mtt problem presuppose the ability to solve single target Bayesian filtering. Let $x_{k}\in\mathbb{R}^{n_x}$ and  $z_{k}\in\mathbb{R}^{n_z}$ denote the single target state and the measurement state at time step $k$. In single target Bayesian estimation, linear and Gaussian models are important because they yield closed-form solutions to Bayesian filtering, provided by the Kalman filter, and smoothing, provided by the Rauch-Tung-Striebel (\textsc{rts}) smoother or the two-filter smoother \cite{Sarkka:2013}. Specifically, the density of the state $x_{k}$ conditioned on the sequence $z_{\timeseq{1}{k'}}=(z_1,...,z_{k'})$ is Gaussian \cite{Sarkka:2013}
\begin{align}
p(x_k | z_{\timeseq{1}{k'}}) = \Npdfbig{x_k}{m_{k | k'}}{P_{k | k'}}
\label{eq:Gaussian_state}
\end{align}
where $m_{k | k'}$ and $P_{k | k'}$ are the mean and covariance matrix, and $k>k'$ for prediction, $k=k'$ for filtering and $k<k'$ for smoothing. 

Even more generally, if $x_{\timeseq{\alpha}{\gamma}}$ and $z_{\timeseq{\xi}{\chi}}$ denote the concatenation of target state and measurements in any (finite) time intervals $\timeseq{\alpha}{\gamma}$ and $\timeseq{\xi}{\chi}$, it holds that \cite{Sarkka:2013,Koch11}
\begin{align}
p(\stseq_{\timeseq{\alpha}{\gamma}}| z_{\timeseq{\xi}{\chi}}) = \Npdfbig{\stseq_{\timeseq{\alpha}{\gamma}}}{m_{\timeseq{\alpha}{\gamma}|\timeseq{\xi}{\chi}}}{P_{\timeseq{\alpha}{\gamma}|\timeseq{\xi}{\chi}}}
\label{eq:Gaussian_state_sequence}
\end{align}
where ${m_{\timeseq{\alpha}{\gamma}|\timeseq{\xi}{\chi}}}$ and ${P_{\timeseq{\alpha}{\gamma}|\timeseq{\xi}{\chi}}}$ are the corresponding mean and covariance matrix. That is, regardless of the time intervals we consider in the state or in the measurement, the posterior density is Gaussian. 

This paper is an extension of the conference paper \cite{GranstromSXGFW:2018}. In this paper, in addition to deriving the \tpmbm filters, we also show that a result similar to \eqref{eq:Gaussian_state_sequence} holds for \mtt with the standard point target models. Specifically, the density of the set of trajectories in some time interval $\timeseq{\alpha}{\gamma}$ given the measurements in any time interval $\timeseq{\xi}{\chi}$ is \pmbm. Consequently, for prediction, filtering and smoothing, the set of trajectories in a given time interval is \pmbm distributed. A special case of this result is obtained if we consider the density of the current set of targets and the sequence of measurements up to the current time step, which is a \pmbm and can be calculated by the \pmbm filtering recursion \cite{Wil12}. 

To summarise the contributions in this paper, for the standard point target model, we
\begin{enumerate}
	\item extend \pmbm conjugacy to sets of trajectories and derive the \tpmbm filters for the set of all trajectories, and the set of alive trajectories,
	\item present computationally efficient algorithms for computing the \pmbm posterior,
	\item show the time marginalisation theorem for \pmbm densities. This theorem states that the posterior density for the set of trajectories in any time window whose trajectories are alive at some point in a time sub-interval is a \pmbm. In addition, the associated \pmbm parameters can be computed from the posterior \pmbm on the set of all trajectories.
	\item show that the posterior density for the set of trajectories in any time window, given the measurements in a possibly different time window, is a \pmbm.
	\item derive relevant properties of the probability mass functions (pmfs) of the start time step and of the end time step for each Bernoulli trajectory.
\end{enumerate}

The rest of the paper is organised as follows. Related work is discussed in the next section. In Section~\ref{sec:problem_formulation}, we present the two considered problem formulations and the background theory on sets of trajectories. The prediction and update steps of the \tpmbm filter are presented in Section~\ref{sec:pmbm_trackers}. In Section~\ref{sec:pmbm_relations}, we show the time marginalisation theorem and that the distribution of the set of trajectories over any time window is \pmbm. Linear and Gaussian implementations of the \tpmbm filters are presented in Section~\ref{sec:linear_gaussian_implementation}, and Section~\ref{sec:simulation_study} analyses the performance of \tpmbm filters via simulations. The paper is concluded in Section~\ref{sec:conclusion}.

\section{Related work}

Approaches to the \mtt problem include the joint probabilistic data association filter (\jpda) \cite{BarShalomWT:2011}, multiple hypothesis tracking (\mht) \cite{Rei79}, and filters based on \rfs \cite{Mahler:2014}. Links between \rfs, \jpda, and \mht have been established in \cite{Wil12,Brekke18, Brekke21}. 

\mht refers to a collection of algorithms that address the data association problem in \mtt by forming hypotheses across multiple scans \cite{Chong19}. In the original \mht \cite{Rei79}, each measurement could potentially be the first detection of a new target, and the number of newly detected targets was given a Poisson distribution. \mht can be implemented by forming tracks for each target and assigning them a score, which is then used for track confirmation and deletion, and to determine the score of a data association hypothesis \cite[Chap. 16]{BlackmanP:1999} \cite{werthmann1992step}. In \cite{MorCho86}, the \mht model was made rigorous using a multi-target state-space which allows for an unknown number of targets, under the assumption that the number of targets is unknown but constant, with a Poisson prior. Target state sequences were formed under each global hypothesis, and the Poisson distribution of targets remaining to be detected provided a Bayes prior for events involving newly detected targets. Hypotheses were constructed as being data-to-data, since no a priori data was assumed on target identity. This formulation was revisited in \cite{MorCho16}, and compared to random finite sets, and finite point processes. \mht was extended in \cite{CorCar14}, to include a time-varying number of targets by using a target birth distribution in the measurement space and assuming stationary object dynamics. 

The Bayesian multi-target filter, using a \ppp birth model and the standard point target models, was derived in \cite{Wil12} resulting in a \pmbm posterior. This \pmbm provides all information on the current set of targets, but not their trajectories. The \pmbm posterior represents the union of two independent \rfs processes: a \ppp that has information on targets that have not been detected yet, and a multi-Bernoulli mixture (\textsc{mbm}) that has information on the potential targets that have been detected at some time up to now. Information on potentially undetected targets is important, for example, in search-and-track sensor management applications so that sensor resources can be directed to the regions of the surveillance area that may contain undetected targets  \cite{Bostrom-Rost21,Bostrom-Rost22}. The \mbm contains information on all data-to-data global hypotheses, their weights, and a multi-Bernoulli (\mb) density for each global hypothesis with information on the detected targets that may be present. Therefore, the data-to-data global hypotheses structure of \pmbm resembles the \mht approach. Nevertheless, to our knowledge, no \mht algorithm considers a \mb density for each global hypothesis, which results in computational benefits. Therefore, the \pmbm filter \cite{Wil12,GarciaFernandezWGS:2018} can be regarded as a state-of-the-art Bayesian \mht-type multi-target filter. Explicit links between \pmbm and \mht are provided in \cite{Wil12,Brekke18}. 

Three important advantages of the \pmbm filter are its efficient representation of the global hypotheses \cite{GarciaFernandezWGS:2018}, that it maintains information on undetected targets, and that it is the result of the Bayesian filtering recursion. The \pmbm filter has been applied to solve \mtt problems with different types of data such as radar \cite{CamentAC:2018}, lidar \cite{CamentAC:2018,GranstromSRXF:2018,pang2021multi}, audio \cite{zhao2022audio} and images \cite{ScheideggerG:2018,CamentAC:2018}. The \pmbm filter has also been successfully applied to mapping of stationary objects \cite{FatemiGSRH:2016_PMBradarmapping}, simultaneous localisation and mapping \cite{Ge22_prov,Kim2022PMBM}, and joint tracking and sensor localisation \cite{FrohleLGW:multisensorPMB}.

If the birth model is an \mb rather than a \ppp, the posterior of the current set of targets becomes an \mbm with the same recursion as the \pmbm filter with the difference that the birth Bernoulli components are added to the \mbm in the prediction step \cite{GarciaFernandezWGS:2018}. That is, tracks in the \mbm filter are initiated using a priori birth information, while in the \pmbm filter, track initiation is measurement-driven, which improves performance \cite{GarciaFernandezXGSW:2019}. After each \mbm filter prediction step, we can also represent the \mbm with global hypotheses that consider targets that either exist or not (deterministic target existence). This approach has the drawback of requiring an exponential increase in the number of global hypotheses, giving rise to the \mbmo filter \cite{GarciaFernandezWGS:2018}. Deterministic target existence for each global hypothesis is also the approach in standard \mht algorithms \cite{BlackmanP:1999}.

In \mtt, apart from having information on the set of targets, it is important to represent information on the target trajectories \cite{Koch11}. One approach to estimate target trajectories sequentially is to augment the target states with labels, which are unique for each potential target and not allowed to change in time. Labels have been used in non-Bayesian \mtt algorithms in computer vision to design an optimisation problem to estimate trajectories \cite{Milan16}. In a Bayesian setting with labelled targets, we can calculate the corresponding posterior, estimate the labelled targets at each time step, and estimate trajectories by connecting target state estimates with the same label \cite{VoV:2013, VoVH:2017, GarciaFernandezGM:2013,AokiMSBB:2016}. This approach gives rise to \mtt algorithms based on labelled \rfs if the multi-target state is a set \cite{VoV:2013,VoVH:2017}. There are also related approaches for sequential track formation such as the use of \pmbm filter metadata (auxiliary variables) \cite{Angel20_e} or the methods in  \cite{PantaCV2009,HoussineauC:2018}. Labelled versions of the \mbm and \mbmo filters can be obtained by using an \mb birth model where each Bernoulli of the birth process gives rise to a unique and fixed label \cite{VoVH:2017}. The posterior can then be calculated using the equations of the \mbm or the \mbmo filters \cite{GarciaFernandezXGSW:2019}. The $\delta$-generalised labelled \mb (\dglmb) recursion \cite{VoVH:2017} is equivalent to the \mbmo recursion, which means that the \dglmb filter makes use of an less efficient representation of the global hypotheses than \mbm and \pmbm filters \cite[Sec. IV]{GarciaFernandezWGS:2018}. 

The \dglmb filter often estimates trajectories well. However, \dglmb filter trajectory estimation has some weaknesses. For example, the \dglmb filter sequential trajectory estimator can produce trajectory estimates with holes even though the existence of holes in the trajectories is not a possible event according to the standard dynamic model, in which targets may die but are not allowed to resurrect. In addition, the \dglmb filter has problems estimating trajectories of multiple targets born at the same time if the birth process is an independent and identically distributed cluster process covering a large area \cite[Example 2]{GarciaFernandezSM:2019}. A practical solution to solve this problem that has been used in large-scale problems \cite{Beard20} is to use an adaptive birth model \cite{ReuterVVD:2014}, in which each measurement gives rise to a birth Bernoulli, resembling how the \pmbm filter initiates Bernoulli components. However, using adaptive birth implies that the \dglmb filter is no longer a fully Bayesian filter. For example, target birth is no longer independent of past measurements, as required in \dglmb modelling assumptions, and an extra user-defined parameter is needed \cite{Angel22_b}.

One way to avoid the above issues with trajectory estimation is to estimate trajectories directly from the posterior on the sequence of sets of labelled targets \cite{Vu14} or the posterior on the set of trajectories \cite{GarciaFernandezSM:2019}. The benefits of using sets of trajectories instead of sequences of sets of labelled targets are thoroughly discussed in \cite{GarciaFernandezSM:2019}. The paper \cite{GarciaFernandezSM:2019} introduced the \mbmo filter for the set of all trajectories: the trajectory \mbmo (\tmbmo) filter. In the \tmbmo filter, the set of trajectories can also be labelled, without adding relevant information or changes to the \tmbmo recursion \cite[Sec. IV.A]{GarciaFernandezSM:2019}, and the marginal distribution on the current set of targets corresponds to the posterior \dglmb density \cite[Sec. IV.C]{GarciaFernandezSM:2019}. In addition, the \tmbmo filter has also been derived using the sequence of labelled sets of targets and \dglmb notation in \cite{VoV:2019}, and was referred to as multi-scan \dglmb. However, the \tmbmo (and the multi-scan \dglmb) filter still suffers from an inefficient global hypothesis representation, since hypotheses have deterministic target existence. The trajectory \mbm (\tmbm) filter \cite{XiaGSGFW:jaifMultiScanPMBMtrackers} improves the global hypothesis representation of the \tmbmo filter by enabling probabilistic target existence. 

In this paper, we show that the benefits of the global hypotheses representation of the \pmbm filter can be combined with sets of trajectories to derive trajectory \pmbm (\tpmbm) filters. These filters provide full information on the set of trajectories, including undetected ones, and make use of a more efficient representation of the global hypotheses than \tmbmo and \tmbm filters. The proposed \tpmbm filters are fully Bayesian methods that can be considered state-of-the-art \mht-type algorithms that provide full trajectory information on all targets.

Several subsequent papers\footnote{\matlab code of these algorithms is provided at \url{https://github.com/yuhsuansia} and \url{https://github.com/Agarciafernandez}.} have been proposed based on the derivations presented in this paper, through the preprint \cite{Granstrom19_prov2} or the preliminary conference version  \cite{GranstromSXGFW:2018}. For instance, the trajectory Poisson \mb (\tpmb) filter in \cite{Angel20_e} approximates the \tpmbm posterior at each update step as a \tpmb, which only has one mixture component, via a Kullback-Leibler divergence minimisation, see \cite[Figure 1]{Angel20_e}. The extended target \tpmb filter in \cite{xia2023trajectory} obtains the \tpmb approximation at each update step using set-type belief propagation. \tpmb filters are computationally faster than \tpmbm filters but they not provide a closed-form solution to the posterior. Reference \cite{XiaGSGFW:jaifMultiScanPMBMtrackers} proposed a multi-scan implementation of the  \tpmbm filter, as well as the \tmbm filtering recursion, which results from an \mb birth model instead of a \ppp birth model. When target spawning is considered, rather than sets of trajectories, it is more appropriate to consider sets of tree trajectories to keep full genealogy information. The resulting \tpmbm filter for sets of tree trajectories is proposed in \cite{Angel22}. A continuous-discrete \tpmbm filter that optimally processes out-of-sequence measurements is proposed in \cite{Angel21_d}.


Differently from the above-mentioned works, this paper provides the theoretical foundation for the \tpmbm filters, presents three possible single-trajectory filtering implementations, includes a thorough comparison with other state-of-the-art \rfs filters and also with an \mht implementation, derives pertinent properties of the pmfs of the start time step and of the end time step for each Bernoulli trajectory, derives the time marginalisation theorem for \pmbm densities on sets of trajectories, and proves the theorem on \tpmbm densities for arbitrary time intervals, which shows that the posterior of the set of trajectories on any time window given measurements in a possibly different time window is a \tpmbm. Paper \cite{Granstrom20b} is based this theorem, and shows that if we add spatio-temporal constraints to a \tpmbm density, the density remains \tpmbm.


\section{Problem formulations and background}
\label{sec:problem_formulation}
In this section, we first present the problem formulations. Then, we review the trajectory state representation, and present the densities for sets of trajectories. All details on how to define densities and integration for sets of trajectories are given in \cite{GarciaFernandezSM:2019} using finite set statistics \cite{Mahler:2014}, and in \cite{XiaGSGFW:jaifMultiScanPMBMtrackers} using measure theory.

\subsection{Problem formulations}
We use notation $x_{k}\in\targetStateSpace$ to denote a single target state at time step $k$, where $\targetStateSpace$ is the single target state space. Similarly, we use $z_{k}\in\measStateSpace$ to denote a single measurement at time step  $k$, where $\measStateSpace$ is the single measurement state space. 

\begin{assumption}
	The multi-target dynamic model is \cite{Mahler:2014}:
	\begin{itemize}
		\item Targets are born at each time step following an independent \ppp with intensity $\lambda^\mathrm{b}(x_k)$.
		\item A target with state $x_k$ survives to the next time step with probability of survival $P^{\rm S}(x_k)$ and single-target transition density $\pi^{x}(x_k|x_{k-1})$, independently of the rest of the targets.
	\end{itemize}
	\label{ass:Dynamics}%
\end{assumption}%
\begin{assumption}%
	The measurement model is \cite{Mahler:2014}:
	\begin{itemize}
		\item A target $x_k$ is detected with a probability  $P^{\rm D}(x_k)$, independently of the rest of the targets, and its associated single measurement has a density $\varphi^{z}(z_k|x_k)$.
		\item Clutter measurements are modelled by an independent \ppp with intensity $\lambda^\mathrm{FA}(z_k)$.
	\end{itemize}%
	\label{ass:Measurement}%
\end{assumption}%
Under these assumptions, we consider the following two problem formulations (\probform) of \mtt:
\begin{problemformulation}\label{pf:all_trajectories} 
 \emph{The set of all trajectories:} the aim is to compute the density of the set of trajectories of all targets that have been present in the surveillance area at some point up to the current time step. 
\end{problemformulation}%
\begin{problemformulation}%
\label{pf:current_trajectories} 
\emph{The set of current trajectories:} the aim is to compute the density of the set of trajectories of the targets that are currently present in the surveillance area.
\end{problemformulation}%

\subsection{Single trajectories}
\label{sec:SingleTrajectoryStateRepresentation}

We first present the notation for time intervals. The (ordered) sequence of consecutive time steps between time steps  $\alpha$ and $\gamma$ is $\timeseq{\alpha}{\gamma} = \left( \alpha,\alpha+1,\ldots,\gamma-1,\gamma \right)$. The corresponding (unordered) set of  time steps is $\timeset{\alpha}{\gamma} = \left\{\alpha,\alpha+1,\ldots,\gamma-1,\gamma\right\}.$

The single trajectory state is $\traj = \left(\tb,\td,\stseq_{\timeseq{\tb}{\td}}\right)$ where $\tb$ is the time step when the trajectory starts, $\td$ is the time step when the trajectory ends, $\stseq_{\timeseq{\tb}{\td}}=\left(x_{\tb},x_{\tb+1},\ldots,x_{\td-1},x_{\td}\right)$, where $ x_{k}\in\targetStateSpace,\ \forall k\in\timeset{\tb}{\td}$. The length in time steps of a trajectory $\traj$ is $\tlen = \td-\tb+1$. Note that it is also possible to define the trajectory state $\traj$ without including $\td$, as $\td$ is implicitly given in the length of $\stseq_{\timeseq{\tb}{\td}}$ \cite{GarciaFernandezSM:2019}.

The single trajectory space in the time interval $\timeseq{\alpha}{\gamma}$ is \cite{GarciaFernandezSM:2019}
\begin{align}
	& \trajStateSpace{\timeseq{\alpha}{\gamma}} = \uplus_{(\tb,\td)\in \existencespace{\timeseq{\alpha}{\gamma}}} \{\tb\}\times\{\td\}\times\targetStateSpace^{\td-\tb+1},
\end{align}
where $\uplus$ is the union of disjoint sets, $\existencespace{\timeseq{\alpha}{\gamma}} = \{ (\tb,\td) : \alpha\leq \tb \leq \td \leq \gamma \}$ and $\targetStateSpace^{\tlen}$ represents $\tlen$ Cartesian products of $\targetStateSpace$. This is a slight generalisation of the definition in \cite{GarciaFernandezSM:2019}, where $\trajStateSpace{k}$ is used to denote the trajectory state space for trajectories in $\timeseq{0}{k}$. The possible lengths of trajectories in $\trajStateSpace{\timeseq{\alpha}{\gamma}}$ are constrained to $1\leq\tlen\leq\gamma-\alpha+1$. It should be noted that state spaces that are the union of different, bounded size state spaces, as the trajectory space, have been used in interacting multiple models \cite{Mazor98} and unknown clutter models \cite{Mahler11}.

A density on the single trajectory space $\trajStateSpace{\timeseq{\alpha}{\gamma}}$ factorises as
\begin{align} 
	p(\traj) = p(\stseq_{\timeseq{\tb}{\td}} | \tb,\td) P(\tb,\td), \label{eq:trajectory_state_density}
\end{align}
where the domain of $P(\tb,\td)$ is $\existencespace{\timeseq{\alpha}{\gamma}}$. Integration is defined as \cite{GarciaFernandezSM:2019}

\begin{equation}\label{eq:trajectoryDensityIntegration}
	\int_{\trajStateSpace{\timeseq{\alpha}{\gamma}}} p(\traj) \diff \traj = \sum_{(\tb,\td)\in\existencespace{\timeseq{\alpha}{\gamma}}} \left[ \int_{\targetStateSpace^{\tlen}} p(\stseq_{\timeseq{\tb}{\td}} | \tb,\td) \diff \stseq_{\timeseq{\tb}{\td}} \right] P(\tb,\td).
\end{equation}

\subsection{Sets of trajectories}
The set of trajectories in the time interval ${\timeseq{\alpha}{\gamma}}$ is $\settraj_{\timeseq{\alpha}{\gamma}}$, such that  $\settraj_{\timeseq{\alpha}{\gamma}} \in \mathcal{F}(\trajStateSpace{\timeseq{\alpha}{\gamma}})$, the set of all finite subsets of $\trajStateSpace{\timeseq{\alpha}{\gamma}}$. The subset of the trajectories in $\settraj_{\timeseq{\alpha}{\gamma}}$ that were alive at some time in the time interval $\timeseq{\eta}{\zeta}$, where $\alpha\leq\eta\leq\zeta\leq\gamma$, is
\begin{align}\label{eq:subset_trajectories}
	\settraj_{\timeseq{\alpha}{\gamma}}^{\timeseq{\eta}{\zeta}} = \left\{ \left(\tb,\td,\stseq_{\timeseq{\tb}{\td}}\right) \in \settraj_{\timeseq{\alpha}{\gamma}} \ : \ \timeset{\tb}{\td}\cap \timeset{\eta}{\zeta} \neq\emptyset \right\}.
\end{align} 
For example, in \probform~\ref{pf:current_trajectories}, we consider \eqref{eq:subset_trajectories} with $\eta=\zeta=\gamma$. As in RFS theory for sets of targets \cite{Mahler:2014}, we assume that two different trajectories can never have the same state at the same time step \cite[Sec. II.A]{GarciaFernandezSM:2019}. This is  actually an event of probability zero in most settings of interest, that is, when at least one variable of the single-target space has a continuous distribution, e,g., one variable takes values in $\mathbb{R}$.

Let $g\left(\settraj_{\timeseq{\alpha}{\gamma}}\right)$ be a real-valued function on a set of trajectories $\settraj_{\timeseq{\alpha}{\gamma}}$. Its corresponding set integral is \cite{Mahler:2014, GarciaFernandezSM:2019}
\begin{multline}
\int{g\left(\settraj_{\timeseq{\alpha}{\gamma}}\right) \delta \settraj_{\timeseq{\alpha}{\gamma}}}
\triangleq \sum_{n=0}^{\infty}\frac{1}{n!}\int{g(\{\traj^1,\dots,\traj^n\})\dif \traj^{1:n}} \label{eq:SetTrajIntegral}
\end{multline}
where $\traj^{1:n}=(\traj^1,\dots,\traj^n)$.

A (multi-trajectory) density $f\left(\settraj_{\timeseq{\alpha}{\gamma}}\right)$, which represents the information on an \rfs of trajectories, is non-negative and integrates to one using \eqref{eq:SetTrajIntegral}. For instance, a trajectory \ppp has density
	\begin{align}
		f\left(\settraj_{\timeseq{\alpha}{\gamma}}\right) & = e^{-\int \lambda(\traj)\diff\traj} \lambda^{\settraj_{\timeseq{\alpha}{\gamma}}}, 
		\label{eq:PPPdensityDefinition}
	\end{align}
with intensity $\lambda(\traj)$, $\lambda^{\settraj_{\timeseq{\alpha}{\gamma}}}=\prod_{\traj\in\settraj_{\timeseq{\alpha}{\gamma}}} \lambda(\traj)$ if $\settraj_{\timeseq{\alpha}{\gamma}}\neq\emptyset$, and $\lambda^{\settraj_{\timeseq{\alpha}{\gamma}}}=1$ if $\settraj_{\timeseq{\alpha}{\gamma}} = \emptyset$. 

A trajectory Bernoulli process has a density
\begin{align}
f\left(\settraj_{\timeseq{\alpha}{\gamma}}\right) &= \begin{cases}
1-r, & \settraj_{\timeseq{\alpha}{\gamma}} = \emptyset , \\
r f(\traj), & \settraj_{\timeseq{\alpha}{\gamma}} = \{\traj\} , \\
0, & \mbox{otherwise} ,
\end{cases}
\label{eq:BernoulliDensityDefinition}%
\end{align}%
where $f(\traj)$ is a single trajectory density \eqref{eq:trajectory_state_density}, and $r$ is the probability of existence. In the same way, the densities of trajectory \mbm and \pmbm \rfs are natural extensions of their definitions for \rfs of targets.


\section{TPMBM filters}
\label{sec:pmbm_trackers}
In this section we present the prediction and update steps of the \tpmbm filters for the two problem formulations. For the set of all trajectories, we seek the density for the \rfs $\settraj_{\timeseq{0}{k}}$, in other words, all trajectories existing at some point in time between the initial time step $0$ to the current time step $k$. For the set of current trajectories, we seek the density for the \rfs	$\settraj_{\timeseq{0}{k}}^{k} \triangleq \settraj_{\timeseq{0}{k}}^{\timeseq{k}{k}}$, see  \eqref{eq:subset_trajectories}.

\subsection{\pmbm density}

As in \cite{Wil12}, we hypothesise that the set of trajectories density is a multi-target conjugate prior of the \pmbm form, and we show that the \pmbm form is preserved through prediction and update. The \ppp represents trajectories that are hypothesised to exist at some time in the considered time interval, but have not been detected. The \mbm contains information on the trajectories that have been detected at some point up to the current time step. Each \mb in the \mbm represents the information on the detected trajectories given a global data association hypothesis. 

Using the shorthand notation $\settraj_k \triangleq \settraj_{\timeseq{0}{k}}$, the \pmbm density is\footnote{For the sake for brevity, we use sub-scripts $_{k}$ and $_{k|k'}$ instead of sub-scripts $_{\timeseq{0}{k}}$ $_{\timeseq{0}{k}|\timeseq{0}{k'}}$ to denote a density for trajectories in $\trajStateSpace{\timeseq{0}{k}}$, conditioned on measurements in the interval $\timeseq{0}{k'}$.}
\begin{subequations}
\begin{align}
	& f_{k|k'}(\settraj_{k}) = \sum_{\settraj_{k}^{\rm u}\uplus\settraj_{k}^{\rm d}=\settraj_{k}} f_{k|k'}^{\rm u}(\settraj_{k}^{\rm u}) f_{k|k'}^{\rm d}(\settraj_{k}^{\rm d}) \label{eq:PMBMdensityDefinition1} \\
	& f_{k|k'}^{\rm d}(\settraj_{k}^{\rm d}) = \sum_{\assoc_{k|k'}\in\assocspace_{k|k'}} w_{k|k'}^{\assoc} \sum_{\uplus_{i\in\trackTable_{k|k'}} \settraj_{k}^{i} = \settraj_{k}^{\rm d}} \prod_{i\in\trackTable_{k|k'}} f_{k|k'}^{i,\assoc^i}(\settraj_{k}^i)
	\label{eq:MBMdensityDefinition}
\end{align}%
\label{eq:PMBMdensityDefinition}%
\end{subequations}%
where $\uplus$ is the union of disjoint sets, $f_{k|k'}^{\rm u}(\settraj_k^{\rm u})$ is a \ppp density \eqref{eq:PPPdensityDefinition} with intensity $\lambdau_{k|k'}(\traj)$, $f_{k|k'}^{i,\assoc^i}(\settraj_k^i)$ are Bernoulli densities \eqref{eq:BernoulliDensityDefinition} with probabilities of existence $r_{k|k'}^{i,\assoc^i}$ and trajectory densities $f_{k|k'}^{i,\assoc^i}(\traj)$. In the \mbm density \eqref{eq:MBMdensityDefinition}, ${\trackTable}_{k|k'}$ is a track table with $n_{k|k'}$ tracks indexed $1$ to $n_{k|k'}$, and $\assoc\in\assocspace_{k|k'}$ is a global data association hypothesis. For each global hypothesis $\assoc$, there is a track local hypothesis $\assoc^i$ for each track $i\in{\trackTable}_{k|k'}$. The number of local hypotheses for track $i$ is denoted as $h_{k|k'}^{i}$. The weight of the global hypothesis $\assoc$ is proportional to the product of the local hypothesis weights $w_{k|k'}^{i,\assoc^i}$  for each track such that $w_{k|k'}^{\assoc} \propto \prod_{i\in{\trackTable}_{k|k'}} w_{k|k'}^{i,\assoc^i}$.

Following \cite{Wil12}, the set of all measurement indices up to time $k$ is $\assocset^k$, and $\assocset^{k}(i,\assoc^i)$ is the subset of indices that are associated to track $i$ with local hypothesis $\assoc^i$. The set of global data association hypotheses $\assocspace_{k|k'}$ can be written in terms of $\assocset^k$, and $\assocset^{k}(i,\assoc^i)$, see \cite[Eq. 35]{Wil12}.

A \pmbm density \eqref{eq:PMBMdensityDefinition} is then characterised by
\begin{align}
	\lambdau_{k|k'}(\cdot), \left\{ \left( w_{k|k'}^{i,\assoc^i} , r_{k|k'}^{i,\assoc^i} , f_{k|k'}^{i,\assoc^i}(\cdot) \right)  \right\}_{\assoc \in\assocspace_{k|k'}, \ i\in\trackTable_{k|k'}} , \label{eq:PMBMparameters}
\end{align}
The following subsections show the prediction and update steps of the trajectory \pmbm density for the two considered problem formulations.

\subsection{Prediction step}
\label{sec:PMBMtrackerPrediction}
In this section we describe the time evolution of the set of trajectories under Assumption \ref{ass:Dynamics}. The birth \ppp in terms of trajectories is
\begin{align}
	\lambda_{k}^{{\rm B}}(\traj) &= \begin{cases} \lambda_{k}^{\mathrm{b}}(x_k) & (\tb,\td) = (k,k), \\ 0 & \text{otherwise.} \end{cases}
	\label{eq:PoissonBirthIntensity}%
\end{align}%

The trajectory state dependent probability of survival at time step $k$ is $P_{k}^{\rm S}(\traj) = P^{\rm S}(x_{\td})\Delta_{k}(\td)$, where $\Delta_{k}(\cdot)$ denotes Kronecker's delta function located at $k$. 

\subsubsection{Dynamic model for the set of all trajectories}
Given a set of trajectories $\settraj_{\timeseq{0}{k-1}}$ with at maximum one trajectory, the Bernoulli transition density without birth is
\begin{subequations}
\begin{align}
	& f_{k|k-1}^{\rm a}(\settraj_{\timeseq{0}{k}} | \settraj_{\timeseq{0}{k-1}}) = \\
	& \qquad \left\{ \begin{array}{lcl}
			1 & & \settraj_{\timeseq{0}{k-1}}=\emptyset, \settraj_{\timeseq{0}{k}}=\emptyset , \\
			\pi^{\rm a}(\traj|\traj') & \text{if} & \settraj_{\timeseq{0}{k-1}}=\{\traj'\},\settraj_{\timeseq{0}{k}}=\{\traj\} , \\
			0 & & \text{otherwise} ,
		\end{array} \right. \nonumber \\
	& \pi^{\rm a}(\traj | \traj') = \pi_{}^{{\rm a},\stseq}(\stseq_{\timeseq{\tb}{\td}} | \tb,\td,\traj') \pi^{\td}_{}(\td | \traj') \Delta_{\tb'}(\tb), \\
	& \pi^{\td}_{}(\td | \traj') = 
		\left\{\begin{array}{ll} 
			1, & \td = {\td}' < k-1 , \\
			1-P_{k-1}^{\rm S}(\traj'), & \td = {\td}' = k-1 ,\\ 
			P_{k-1}^{\rm S}(\traj'), & \td = {\td}'+1 = k , \\
			0, & \text{otherwise} ,
		\end{array} \right. \\
	&\pi^{{\rm a},\stseq}_{}(\stseq_{\timeseq{\tb}{\td}} | \tb,\td,\traj') = \\
	& \qquad 
		\left\{\begin{array}{ll} 
			\delta_{\stseq_{\tb':\td'}'}(\stseq_{\timeseq{\tb}{\td}}), & \td = \td' ,\\
			\pi^{x}_{}(x_{\td} | x'_{\td'}) \delta_{\stseq'_{\tb':\td'}}(\stseq_{\timeseq{\tb}{\td-1}}), & \td = \td'+1 . 
		\end{array}\right. \nonumber
\end{align}%
\label{eq:set_of_all_trajectories_state_transition_density}%
\end{subequations}%
where $\delta(\cdot)$ denotes the Dirac delta function. 
In this problem formulation, every trajectory that has ever existed remains in the set of all trajectories. The probability of survival $P^{\rm S}_{k-1}(\cdot)$ is used to determine whether the trajectory ends, or if it is extended by one more time step with a new state.

The prediction step is presented in the theorem below, where $\conv{f}{g} = \int f(\traj) g(\traj) \diff \traj$.

\begin{theorem}[\tpmbm prediction for all trajectories]\label{th:PredictionAllTrajectories}
	If the posterior density of the set of all trajectories at the previous time step is a \pmbm of the form \eqref{eq:PMBMdensityDefinition}, the predicted density of the set of all trajectories is a \pmbm of the form \eqref{eq:PMBMdensityDefinition} with:
	\begin{subequations}
	\begin{align}
		\lambdau_{k|k-1}(\traj) &= \lambda_{k}^{\rm B}(\traj) + \conv{\lambdau_{k-1|k-1}}{\pi^{\rm a}} , \label{eq:allUndetectedPropagation} \\
		n_{k|k-1} &= n_{k-1|k-1} , \\
		h^i_{k|k-1} & = h^i_{k-1|k-1} \; \forall \; i , \\
		w_{k|k-1}^{i,\assoc^i} &= w_{k-1|k-1}^{i,\assoc^i} \; \forall \; i, \assoc^i , \\
		r_{k|k-1}^{i,\assoc^i} & = r_{k-1|k-1}^{i,\assoc^i}, \; \forall \; i, \assoc^i , \label{eq:allExistProbPropagation} \\
		f_{k|k-1}^{i,\assoc^i} & = \conv{f_{k-1|k-1}^{i,\assoc^i}}{\pi^{\rm a}} , \; \forall \; i, \assoc^i .\label{eq:allKinematicDistPropagation}
	\end{align}%
	\label{eq:PredictionAllTrajectories}%
	\end{subequations}%
\end{theorem}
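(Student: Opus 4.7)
The plan is to apply the standard prediction equation
\[
f_{k|k-1}(\mathbf{X}_k) = \int f_{k|k-1}(\mathbf{X}_k\mid \mathbf{X}_{k-1})\, f_{k-1|k-1}(\mathbf{X}_{k-1})\, \delta \mathbf{X}_{k-1}
\]
and to show that the \pmbm factorisation \eqref{eq:PMBMdensityDefinition} is preserved componentwise. The full multi-trajectory transition is the disjoint union of the Poisson birth process with intensity $\lambda_{k}^{\rm B}$ from \eqref{eq:PoissonBirthIntensity} and independent Bernoulli survival/extension kernels $\pi^{\rm a}$ from \eqref{eq:set_of_all_trajectories_state_transition_density}, one per pre-existing trajectory. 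Using a \pgfl argument, or equivalently the decomposition of the set integral over disjoint partitions in \eqref{eq:SetTrajIntegral}, the predicted set decomposes into (i) the birth \ppp, (ii) the predicted undetected \ppp, and (iii) the predicted Bernoulli components; these three pieces are mutually independent and can therefore be analysed in isolation.

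For the Poisson piece, independence of birth and survival of undetected trajectories means that the superposition of the two resulting \ppp{}s is again a \ppp whose intensity is the sum of the individual intensities. Passing the single-trajectory kernel $\pi^{\rm a}$ through a \ppp of intensity $\lambdau_{k-1|k-1}$ yields a \ppp of intensity $\conv{\lambdau_{k-1|k-1}}{\pi^{\rm a}}$; adding $\lambda_{k}^{\rm B}$ then delivers \eqref{eq:allUndetectedPropagation}.

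For each Bernoulli component, I would substitute \eqref{eq:set_of_all_trajectories_state_transition_density} into the single-Bernoulli prediction integral. Writing the prior as in \eqref{eq:BernoulliDensityDefinition}, the empty-set branch contributes $1-r$ via the ``$1\to 1$'' entry in \eqref{eq:set_of_all_trajectories_state_transition_density}, and the singleton branch contributes $r\,\conv{f}{\pi^{\rm a}}$ at $\{\traj\}$, with no cross-terms since $\pi^{\rm a}$ maps singletons only to singletons. The result is again Bernoulli, with the \emph{same} probability of existence $r$ and state density $\conv{f}{\pi^{\rm a}}$, giving \eqref{eq:allExistProbPropagation} and \eqref{eq:allKinematicDistPropagation}. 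The subtle point, which I expect to be the main conceptual obstacle, is to explain why $r$ is not weighted by any survival probability: because $\pi^{\rm a}$ keeps the trajectory in $\trajStateSpace{\timeseq{0}{k}}$ whether the target has already died ($\td=\td'<k-1$), dies at this step ($\td=\td'=k-1$), or survives ($\td=\td'+1=k$), the Bernoulli never transitions from nonempty to empty, so the survival factor $P_{k-1}^{\rm S}$ is absorbed into $\pi^{\td}$ inside the kernel rather than attached to $r$. This is precisely what distinguishes the ``set of all trajectories'' prediction from the ``set of current trajectories'' one, where dead trajectories must leave the set and $r$ would indeed decrease.

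Finally, since each track evolves independently of the others and of which global hypothesis is active, the track table $\trackTable_{k-1|k-1}$, the per-track hypothesis counts $h^i_{k-1|k-1}$, and both the global and local hypothesis weights carry through unchanged, establishing the remaining identities of Theorem~\ref{th:PredictionAllTrajectories}. Once the independence/\pgfl bookkeeping that licenses the per-component decomposition is in place, the remainder is routine Kronecker/Dirac integration against \eqref{eq:set_of_all_trajectories_state_transition_density}.
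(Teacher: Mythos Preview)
Your proposal is correct and follows essentially the same route the paper relies on: the paper's own proof is simply ``Analogous to proof of \cite[Thm.~1]{Wil12}'', i.e., the standard \pmbm prediction argument (Chapman--Kolmogorov plus independent per-component propagation via the transition kernel and superposition of \ppp{}s), which is precisely what you have spelled out. Your remark on why $r$ is unchanged---because $\pi^{\rm a}$ never maps a singleton to the empty set, so the survival probability is absorbed into $\pi^{\td}$ rather than multiplying $r$---is the one trajectory-specific point worth making explicit, and you have it right.
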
%
\begin{proof}
	 Analogous to the \pmbm prediction proof \cite[Thm. 1]{Wil12} \cite{GarciaFernandezWGS:2018}, using the transition density \eqref{eq:set_of_all_trajectories_state_transition_density} and the \ppp birth intensity \eqref{eq:PoissonBirthIntensity}.
\end{proof}

\subsubsection{Dynamic model for the set of current trajectories}
Given a set of trajectories $\settraj_{\timeseq{0}{k-1}}^{k-1}$ with at maximum one trajectory that is present at time step $k-1$, the Bernoulli transition density without birth is
{
\begin{subequations}
\begin{align}
	& f_{k|k-1}^{\rm c}(\settraj_{\timeseq{0}{k}}^{k} | \settraj_{\timeseq{0}{k-1}}^{k-1}) = \label{eq:set_of_current_trajectories_transition_density}\\
	& \quad \left\{ \begin{array}{ll}
			1, & \settraj_{\timeseq{0}{k-1}}^{k-1}=\emptyset, \settraj_{\timeseq{0}{k}}^{k}=\emptyset , \\
			1-P_{k-1}^{\rm S}(\traj'), & \settraj_{\timeseq{0}{k-1}}^{k-1}=\{\traj'\},\settraj_{\timeseq{0}{k}}^{k}=\emptyset , \\
			P_{k-1}^{\rm S}(\traj')\pi^{\rm c}(\traj|\traj'), & \settraj_{\timeseq{0}{k-1}}^{k-1}=\{\traj'\},\settraj_{\timeseq{0}{k}}^{k}=\{\traj\} , \\
			0, & \text{otherwise} ,
		\end{array} \right. \nonumber \\
	& \pi^{\rm c}(\traj | \traj') =  \pi_{}^{{\rm c},\stseq}(\stseq_{\timeseq{\tb}{\td}} | \tb,\td,\traj')   \Delta_{\td'+1}(\td)  \Delta_{\tb'}(\tb), \\
	& \pi^{{\rm c},\stseq}_{}(\stseq_{\timeseq{\tb}{\td}} | \tb,\td,\traj') = \pi^{x}_{}(x_{\td} | x'_{\td'}) \delta_{\stseq'_{\tb':\td'}}(\stseq_{\timeseq{\tb}{\td-1}}) .
\end{align}%
\label{eq:set_of_current_trajectories_state_transition_density}%
\end{subequations}%
}%
In this problem formulation, if a trajectory dies, it will no longer be a element of the set of current trajectories. If the trajectory survives, then it is kept in the set of current trajectories and is extended by one time step. For the set of current trajectories, the probability of survival $P^{\rm S}_{k-1}(\cdot)$ is therefore used as in tracking a set of targets \cite{Mahler:2014}. 

The prediction step is given in the following theorem.
\begin{theorem}[\tpmbm prediction for current trajectories]\label{th:PredictionCurrentTrajectories}
If the posterior density of the set of current trajectories at the previous time step is a \pmbm of the form \eqref{eq:PMBMdensityDefinition}, the predicted density of the set of current trajectories is a \pmbm of the form \eqref{eq:PMBMdensityDefinition} with:
\begin{subequations}
	\begin{align}
		\lambdau_{k|k-1}(\traj) &= \lambda_{k}^{\rm B}(\traj) + \conv{\lambdau_{k-1|k-1}}{\pi^{\rm c}P_{k-1}^{\rm S}} ,\label{eq:UndetectedPropagation} \\
		n_{k|k-1} &= n_{k-1|k-1} , \label{eq:PredictedNumberOfTracks}\\
		h^i_{k|k-1} & = h^i_{k-1|k-1} \; \forall \; i  , \label{eq:PredictedNumberOfHypotheses} \\
		w_{k|k-1}^{i,\assoc^i} &= w_{k-1|k-1}^{i,\assoc^i} \; \forall \; i, \assoc^i , \label{eq:PredictedAssociationWeight}\\
		r_{k|k-1}^{i,\assoc^i} & = r_{k-1|k-1}^{i,\assoc^i}\conv{f_{k-1|k-1}^{i,\assoc^i}}{P_{k-1}^{\rm S}}, \; \forall \; i, \assoc^i , \label{eq:ExistProbPropagation} \\
		f_{k|k-1}^{i,\assoc^i} & = \frac{\conv{f_{k-1|k-1}^{i,\assoc^i}}{\pi^{\rm c}P_{k-1}^{\rm S}}}{\conv{f_{k-1|k-1}^{i,\assoc^i}}{P_{k-1}^{\rm S}}}, \; \forall \; i, \assoc^i . \label{eq:KinematicDistPropagation}
	\end{align}%
\label{eq:PredictionCurrentTrajectories}%
\end{subequations}%
\end{theorem}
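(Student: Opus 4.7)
\emph{Proof proposal.} The plan is to mirror the argument of \cite[Thm. 1]{Wil12} (which is also invoked as the proof outline of Theorem~\ref{th:PredictionAllTrajectories}), but replacing the ``all trajectories'' transition \eqref{eq:set_of_all_trajectories_state_transition_density} by the ``current trajectories'' transition \eqref{eq:set_of_current_trajectories_state_transition_density}. Two structural facts drive the argument. First, a \pmbm density is the convolution of an independent \ppp with an independent \mbm, so the Chapman--Kolmogorov prediction respects this independence provided the multi-trajectory kernel acts separately on each trajectory. Second, under Assumption~\ref{ass:Dynamics} and the birth model \eqref{eq:PoissonBirthIntensity}, the multi-trajectory transition density factorises as an independent birth \ppp with intensity $\lambda_{k}^{\rm B}$ combined with a product of the per-trajectory Bernoulli transitions \eqref{eq:set_of_current_trajectories_transition_density}. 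These two facts reduce the problem to (i) predicting one \ppp through the Bernoulli kernel \eqref{eq:set_of_current_trajectories_transition_density} and adding the birth \ppp, and (ii) predicting one Bernoulli through \eqref{eq:set_of_current_trajectories_transition_density}.

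For the undetected \ppp, I would compute the predicted intensity via the standard set-integral / \pgfl calculation: the $1-P_{k-1}^{\rm S}$ branch of \eqref{eq:set_of_current_trajectories_transition_density} kills a point, while the $P_{k-1}^{\rm S}\pi^{\rm c}$ branch transports it, so by the thinning-and-mapping property of \ppp's, the surviving undetected trajectories form a \ppp on $\trajStateSpace{\timeseq{0}{k}}$ with intensity $\int \pi^{\rm c}(\traj|\traj')P_{k-1}^{\rm S}(\traj')\lambdau_{k-1|k-1}(\traj')\,\dif\traj' = \conv{\lambdau_{k-1|k-1}}{\pi^{\rm c}P_{k-1}^{\rm S}}(\traj)$. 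Adding the independent birth \ppp with intensity $\lambda_{k}^{\rm B}$ yields \eqref{eq:UndetectedPropagation}.

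For each Bernoulli component with parameters $(r,f)$, integrating \eqref{eq:set_of_current_trajectories_transition_density} against $rf$ gives empty-set probability $1 - r\conv{f}{P_{k-1}^{\rm S}}$ and singleton density $r\conv{f}{\pi^{\rm c}P_{k-1}^{\rm S}}(\traj)$. Identifying the new existence probability as $r\conv{f}{P_{k-1}^{\rm S}}$ and renormalising the singleton branch recovers \eqref{eq:ExistProbPropagation} and \eqref{eq:KinematicDistPropagation}. Because the kernel factorises across Bernoullis and because, in the \trackori formulation, no new tracks are spawned at prediction (all births live in the \ppp component via \eqref{eq:PoissonBirthIntensity}), the track table, the per-track hypothesis counts and the per-hypothesis global weights are unchanged, giving \eqref{eq:PredictedNumberOfTracks}--\eqref{eq:PredictedAssociationWeight}.

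The main obstacle, and the only place the argument departs non-trivially from the target-state proof of \cite[Thm. 1]{Wil12}, is the trajectory-level bookkeeping. One must unfold the trajectory integral \eqref{eq:trajectoryDensityIntegration} over $(\tb',\td')\in\existencespace{\timeseq{0}{k-1}}$, and verify that the Kronecker deltas $\Delta_{\tb'}(\tb)$ and $\Delta_{\td'+1}(\td)$ in \eqref{eq:set_of_current_trajectories_state_transition_density}, together with the Dirac factor $\delta_{\stseq'_{\tb':\td'}}(\stseq_{\timeseq{\tb}{\td-1}})$, correctly lift a density on $\trajStateSpace{\timeseq{0}{k-1}}$ to a density on $\trajStateSpace{\timeseq{0}{k}}$ whose support lies in the current-trajectories subset $\{\traj:\td=k\}$; in particular this is where one checks that $\int\pi^{\rm c}(\traj|\traj')\,\dif\traj=1$ so that the Bernoulli normalisation in \eqref{eq:ExistProbPropagation}--\eqref{eq:KinematicDistPropagation} is consistent. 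Once this lifting is verified, the remainder of the calculation is the standard \pmbm prediction recursion from \cite[Thm. 1]{Wil12}.
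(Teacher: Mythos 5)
Your proposal is correct and follows essentially the same route as the paper, which simply states that the proof is analogous to that of the PMBM filter prediction in \cite[Thm. 1]{Wil12}: decompose the PMBM into its independent PPP and MBM parts, predict the PPP by thinning and mapping (adding the birth PPP), and predict each Bernoulli through the kernel \eqref{eq:set_of_current_trajectories_state_transition_density}, with the Kronecker/Dirac factors handling the lift from $\trajStateSpace{\timeseq{0}{k-1}}$ to $\trajStateSpace{\timeseq{0}{k}}$. Your explicit verification that $\int\pi^{\rm c}(\traj|\traj')\,\dif\traj=1$ and the resulting bookkeeping for $(\tb,\td)$ are exactly the trajectory-specific details the paper leaves implicit.
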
%
\begin{proof}
	Analogous to the \pmbm prediction proof \cite[Thm. 1]{Wil12} \cite{GarciaFernandezWGS:2018}, using the transition density \eqref{eq:set_of_current_trajectories_state_transition_density} and the \ppp birth intensity \eqref{eq:PoissonBirthIntensity}.
\end{proof}

\subsection{Update step}
\label{sec:PMBMtrackerUpdate}
The measurement model is the same regardless of the problem formulation. Given a set of trajectories $\settraj$ with at most one trajectory, the Bernoulli measurement density for detections of Assumption \ref{ass:Measurement} can be written in terms of $\settraj$ as:
\begin{subequations}
\begin{align}
	& \varphi_{k}(\setw_{k} | \settraj) = \\
	& \qquad 
		\left\{ \begin{array}{ll}
			1, & \settraj=\emptyset, \setw_k=\emptyset , \\
			1-P_{k}^{\rm D}(\traj), & \settraj=\{\traj\},\setw_k=\emptyset , \\
			P_{k}^{\rm D}(\traj)\varphi(z_k | \traj), & \settraj=\{\traj\},\setw_k=\{z_k\} , \\
			0, & \text{otherwise} ,
		\end{array} \right. \nonumber \\
	 & P_{k}^{\rm D}(\traj) = P^{\rm D}(x_{\td})\Delta_{k}(\td), \label{eq:probability_of_detection}\\ 
	& \varphi(z | \traj) = \varphi^{z}(z | x_{\td}).
\end{align}
\label{eq:single_object_measurement_model}%
\end{subequations}

The measurement model for sets of trajectories is of the same form as for sets of targets, so the update is analogous to the target measurement update in \cite{Wil12, GarciaFernandezWGS:2018}. %
\begin{theorem}[\tpmbm update]\label{th:Update}
Assume that the predicted density of the set of trajectories is a \pmbm of the form \eqref{eq:PMBMdensityDefinition} and the set of measurements is $\{z_k^1,\dots,z_k^{m_k}\}$. Then, the posterior density of the set of trajectories is a \pmbm of the form \eqref{eq:PMBMdensityDefinition} with $n_{k|k} = n_{k|k-1} + m_k$, and	
	{
	\begin{align}
		\lambdau_{k|k}(\traj) &= \left( 1-P_{k}^{\rm D}(\traj)\right)\lambdau_{k|k-1}(\traj),\label{eq:UndetectedUpdate} \\
		\assocset^k &= \assocset^{k-1} \cup \big\{(k,j)|j\in\{1,\dots,m_k\}\big\} .
	\end{align}
	}

	For tracks continuing from previous time steps ($i\in\{1,\dots,n_{k|k-1}\}$), a local hypothesis is created for each combination of a previous local hypothesis and either a missed detection or an update with one of the $m_k$ measurements. This implies that the number of local hypotheses is $h^i_{k|k} = h^i_{k|k-1}(1+m_k)$. For missed detection hypotheses ($i\in\{1,\dots,n_{k|k-1}\}$, $\assoc^i\in\{1,\dots,h_{k|k-1}^{i}\})$, the parameters are
	\begin{subequations}
		\begin{align}
			\assocset^k(i,\assoc^i) &= \assocset^{k-1}(i,\assoc^i) , \label{eq:MissUpdateMeasSet} \\
			w_{k|k}^{i,\assoc^i} &= w^{i,\assoc^i}_{k|k-1}\left(1-r_{k|k-1}^{i,\assoc^i} \conv{f_{k|k-1}^{i,\assoc^i}}{P_{k}^{\rm D}} \right) , \label{eq:MissUpdateW} \\
			r^{i,\assoc^i}_{k|k} &= \frac{r_{k|k-1}^{i,\assoc^i}\conv{f_{k|k-1}^{i,\assoc^i}}{1-P_{k}^{\rm D}}}{1-r_{k|k-1}^{i,\assoc^i}\conv{f_{k|k-1}^{i,\assoc^i}}{P_{k}^{\rm D}}} , \label{eq:MissUpdatePex}\\
			f^{i,\assoc^i}_{k|k}(\traj) &= \frac{\left(1-P_{k}^{\rm D}(\traj)\right)f_{k|k-1}^{i,\assoc^i}(\traj)}{\conv{f_{k|k-1}^{i,\assoc^i}}{1-P_{k}^{\rm D}}} . \label{eq:MissUpdateKin}
		\end{align}
		\label{eq:BernoulliMissedUpdate}
	\end{subequations}
	For a previous local hypothesis $\tilde{\assoc}^i\in\{1,\dots,h^i_{k|k-1}\}$ of an existing track ($i\in\{1,\dots,n_{k|k-1}\}$), the local hypothesis generated by the association with measurement $z_k^j$ has $\assoc^i=\tilde{\assoc}^i+h^i_{k|k-1} j$ and
	\begin{subequations}
	\begin{align}
	\assocset^k(i,\assoc^i) &= \assocset^{k-1}(i,\tilde{\assoc}^i) \cup \{(k,j)\} , \label{eq:DetUpdateMeasSet} \\
	w_{k|k}^{i,\assoc^i} &= w^{i,\tilde{\assoc}^i}_{k|k-1}r_{k|k-1}^{i,\tilde{\assoc}^i} \conv{f_{k|k-1}^{i,\tilde{\assoc}^i}}{\varphi(z_k^j|\cdot) P_{k}^{\rm D}} , \label{eq:DetUpdateW}\\
	r^{i,\assoc^i}_{k|k} &= 1 , \label{eq:DetUpdatePex}\\
	f^{i,\assoc^i}_{k|k}(\traj) &= \frac{\varphi(z_k^j|\traj)P_{k}^{\rm D}(\traj)f_{k|k-1}^{i,\tilde{\assoc}^i}(\traj)}{\conv{f_{k|k-1}^{i,\tilde{\assoc}^i}}{\varphi(z_k^j|\cdot) P_{k}^{\rm D}}} .\label{eq:DetUpdateKin}
	\end{align}
	\label{eq:BernoulliMeasurementUpdate}
	\end{subequations}
	Finally, the new track initiated by measurement $z_k^j$ has $i=n_{k|k-1}+j$, $h^i_{k|k}= 2$, and
	\begin{subequations}
	\begin{align}
	\assocset^k(i,1) &= \emptyset, \quad
	w^{i,1}_{k|k} = 1, \quad r^{i,1}_{k|k} = 0 , \label{eq:NewTargetNonExistWQ} \\
	\assocset^k(i,2) &= \{(t,j)\} \label{eq:PoisUpdateMeasSet} \\
	w^{i,2}_{k|k} &=  \lfa(z_k^j) + \conv{\lambdau_{k|k-1}}{\varphi(z_k^j|\cdot)P_{k}^{\rm D}} , \label{eq:PoisUpdateW} \\
	r_{k|k}^{i,2} &= \frac{\conv{\lambdau_{k|k-1}}{\varphi(z_k^j|\cdot)P_{k}^{\rm D}}}{\lfa(z_k^j) + \conv{\lambdau_{k|k-1}}{\varphi(z_k^j|\cdot)P_{k}^{\rm D}}} , \label{eq:PoisUpdatePex}\\
	f_{k|k}^{i,2}(\traj) &= \frac{\varphi(z_k^j|\traj)P_{k}^{\rm D}(\traj)\lambdau_{k|k-1}(\traj)}{\conv{\lambdau_{k|k-1}}{\varphi(z_k^j|\cdot)P_{k}^{\rm D}}} . \label{eq:PoisUpdateKin} 
	\end{align}%
	\label{eq:new_target_update}%
	\end{subequations}%
\end{theorem}%
\begin{proof}
	Analogous to the \pmbm update proof \cite[Thm. 2]{Wil12} \cite{GarciaFernandezWGS:2018} considering the measurement model \eqref{eq:single_object_measurement_model}.
\end{proof}

\subsection{Properties of the resulting \tpmbm filters}\label{sec:PMBMtrackerProperties}
Two \tpmbm filters are obtained from the theorems:
\begin{enumerate}
	\item A \tpmbm filter for the set of all trajectories (\probform~\ref{pf:all_trajectories}). Its filtering recursion is provided by the prediction in Theorem~\ref{th:PredictionAllTrajectories} and the update in Theorem~\ref{th:Update}.
	\item A \tpmbm filter for the set of current trajectories (\probform~\ref{pf:current_trajectories}). Its filtering recursion is provided by the prediction in Theorem~\ref{th:PredictionCurrentTrajectories} and the update in Theorem~\ref{th:Update}.
\end{enumerate}

Both \tpmbm filters are track oriented. For each measurement, a potential new track is initiated, see \eqref{eq:new_target_update}. In the update, additional hypotheses are created, as indicated in \eqref{eq:BernoulliMissedUpdate} and \eqref{eq:BernoulliMeasurementUpdate}. In the prediction, the number of tracks and hypotheses remains constant, see \eqref{eq:PredictedNumberOfTracks} and \eqref{eq:PredictedNumberOfHypotheses}.

We proceed to discuss the representation of the \ppp intensity and the Bernoulli densities in the \tpmbm filters. We also present relevant properties of the pmf of the start time step and the pmf of the end time step for each Bernoulli trajectory.

\subsubsection{Density/intensity representation}%
\label{sec:DensityIntensityEstimation}
Consider a trajectory mixture density of the form
\begin{subequations}
\begin{align}
	f_{}(\traj) & =  \sum_{j\in\indexSetJ} \weight_{}^{j} f_{}^{j}(\traj ; \trajdensityparams^j), \label{eq:Mixture_Density_Intensity}\\
	f_{}^{j}(\traj ; \trajdensityparams^j) & = \begin{cases} p^{j}(\stseq_{\timeseq{b^j}{e^j}}) & (\tb,\td)=(b^j,e^j), \\
	0 & \text{otherwise}, \end{cases} \\
	\trajdensityparams^j & = \left(b^j,e^j,p^j(\cdot)\right), 
\end{align}%
where $\indexSetJ$ is an index set, and each mixture component is characterised by a weight $\weight_{}^{j}$ and a parameter $\trajdensityparams^j$. The parameter consists of a birth time $b_{}^{j}$, a most recent time $e_{}^{j}$, where $b^{j} \leq e^{j}$, and a density $p_{}^{j}(\stseq_{\timeseq{b^j}{e^j}})$. 
For the density \eqref{eq:Mixture_Density_Intensity}, the pmf of $(\tb,\td)$ is
\begin{align}
	P(\tb,\td) = \int f_{}(\traj) \diff\stseq_{\timeseq{\tb}{\td}} =   \sum_{j\in\indexSetJ} \weight_{}^{j} \Delta_{b^j}(\tb) \Delta_{e^j}(\td) .
\end{align}%
\label{eq:mixture_trajectory_density}%
\end{subequations}%

For the mixture weights, it holds that $\sum_{j}\weight_{}^{j} = 1$ if $f_{}(\traj)$ is a density, and $\sum_{j}\weight_{}^{j} \geq 0$ if $f_{}(\traj)$ is a \ppp intensity. Note that there is no restriction in \eqref{eq:mixture_trajectory_density} that $b^j$ and $e^j$ must be unique, i.e., we may have $b^j=b^{j'}$ and/or $e^j=e^{j'}$ for $j,j'\in\indexSetJ$, $j\neq j'$. Densities/intensities of the form \eqref{eq:mixture_trajectory_density} facilitate representations for the state sequence $\stseq_{\timeseq{\tb}{\td}}$, conditioned on $\tb$ and $\td$.


The target birth \ppp intensity $\lambda_{k}^{\mathrm{b}}(x_k)$ is often modelled as an unnormalised distribution mixture, often a Gaussian mixture. It then follows that the trajectory birth \ppp intensity $\lambda_{k}^{{\rm B}}(\traj)$, cf. \eqref{eq:PoissonBirthIntensity}, is of the form \eqref{eq:mixture_trajectory_density}, with the special structure that $b^j=e^j=k$. From this, it follows that the \ppp intensity $\lambdau_{k|k'}(\traj)$, and all Bernoulli densities $f_{k|k'}^{i,\assoc^i}(\traj)$ will be of the form \eqref{eq:mixture_trajectory_density}. 

In other words, the parameters of the \tpmbm posterior are
\begin{subequations}
\begin{align}
	\lambdau_{}(\cdot), \left\{ \left( w_{}^{i,\assoc^i} , r_{}^{i,\assoc^i} , f_{}^{i,\assoc^i}(\cdot) \right)  \right\}_{\assoc \in\assocspace_{}, \ i\in\trackTable_{}} ,
\end{align}
with intensity and state densities of the form \eqref{eq:mixture_trajectory_density},
\begin{align}
	\lambdau_{}(\traj) & = \sum_{j\in\indexSetJ_{}^{\rm u}} \weight^{{\rm u},j} f^{{\rm u},j}(\traj ; \trajdensityparams^{{\rm u},j}), \\
	f_{}^{i,\assoc^i}(\traj) & = \sum_{j\in\indexSetJ_{}^{i,\assoc^i}} \weight^{i,\assoc^i,j} f^{i,\assoc^i,j}(\traj ; \trajdensityparams^{i,\assoc^i,j}),
\end{align}%
\label{eq:PosteriorTrajectoryPMBMparameters_time_k}%
\end{subequations}%
where $\indexSetJ_{}^{\rm u}$ and $\indexSetJ_{}^{i,\assoc^i}$ are index sets for the mixture density components, and $\weight^{{\rm u},j}$ and $\weight^{i,\assoc^i,j}$ are weights such that $ \sum_{j\in\indexSetJ_{}^{\rm u}} \weight^{{\rm u},j} \geq 0$ and $ \sum_{j\in\indexSetJ_{}^{i,\assoc^i}} \weight^{i,\assoc^i,j} = 1$.


\subsubsection{Properties of the pmf of the time of birth}

Consider a data association in which a measurement $z_{k}$ at time step $k$ is associated to a potential new target. Conditioned on the association, for the trajectory end time $\td$, we have that $\Pr(\td=k)=1$. We proceed to focus on the time of birth pmf. The new Bernoulli track density is of the form \eqref{eq:PoisUpdateKin},
\begin{align}
	f_{k|k}(\traj) =& \frac{\varphi(z_k|\traj)P_{k}^{\rm D}(\traj)\lambdau_{k|k-1}(\traj)}{\conv{\lambdau_{k|k-1}}{\varphi(z_k|\cdot)P_{k}^{\rm D}}} .
\end{align}
With a \ppp intensity of the mixture form \eqref{eq:mixture_trajectory_density},
\begin{align}
	\lambdau_{k|k-1}(\traj) & = \sum_{j\in\indexSetJ_{k|k-1}^{\rm u}} \weight_{k|k-1}^{{\rm u},j} f_{k|k-1}^{{\rm u},j}(\traj ; \trajdensityparams_{k|k-1}^{{\rm u},j}),
\end{align}%
we get a multi-modal posterior trajectory density $f_{k|k}(\traj)$, which can be pruned if necessary.  The pmf for $\tb$ is
\begin{align}
	P_{k|k}(\tb) = 
		\begin{cases}  
			\frac{  \sum_{j\in\indexSetJ_{k|k-1}^{{\rm u},\td=k}} \weight_{k|k-1}^{{\rm u},j} q_{k}^{{\rm u},j}(z_k) \Delta_{ b_{k|k-1}^{{\rm u},j} }( \tb ) }{  \sum_{j\in\indexSetJ_{k|k-1}^{{\rm u},\td=k}} \weight_{k|k-1}^{{\rm u},j} q_{k}^{{\rm u},j}(z_k)  } & \tb \leq k, \\
			0 & \tb > k ,
		\end{cases}
\end{align}
where $\indexSetJ_{k|k-1}^{{\rm u},\td=k} = \left\{ j \in \indexSetJ_{k|k-1}^{\rm u} : e_{k|k-1}^{{\rm u},j} = k \right\}$, and
\begin{align}
	q_{k}^{{\rm u},j}(z_k) = \int \varphi^{z}(z_{k} | \stseq_{k}) P_{k}^{\rm D}(\stseq_{k}) p_{k|k-1}^{{\rm u},j}\left(\stseq_{k} \right) \diff\stseq_{k} .
\end{align}
This implies that the time of birth of a newly detected trajectory is not deterministic, but has a certain pmf. In addition, as more measurements are associated, the trajectory density changes, which implies that the pmf of the time of birth can also change.


\subsubsection{Properties of the pmf of the trajectory end time}
Consider a posterior Bernoulli density $f_{k|k}(\traj)$ of the form \eqref{eq:mixture_trajectory_density}, to which a measurement was associated at time $k$.  For the sake of brevity, assume that it has a single mixture component with parameter $\trajdensityparams~=~\left(b,k,p_{k|k}(\cdot)\right)$.
In addition, the probability of survival is a constant $P^{\rm S}(x_{k}) = P^{\rm S}$, and $Q^{\rm S}=1-P^{\rm S}$. Then, the predicted density at time $k+1$ is a mixture
\begin{subequations}
\begin{align}
	& f_{k+1|k}(\traj_{k+1}) = Q^{\rm S}f^{0}(\traj_{k+1} ; \trajdensityparams^{0}) + P^{\rm S} f^{1}(\traj_{k+1} ; \trajdensityparams^{1}) , \\
	& \trajdensityparams^{0} = \left(b, k , p_{k|k}(\stseq_{\timeseq{b}{k}}) \right) , \label{eq:TrajectoryEnded} \\
	& \trajdensityparams^{1} = \left(b, k+1 , p_{k|k}(\stseq_{\timeseq{b}{k}}) \pi^{x}(x_{k+1} | x_{k}) \right).  \label{eq:TrajectoryContinued}
\end{align}
\end{subequations}
Note that the state sequence density for $\td=k$, \eqref{eq:TrajectoryEnded}, is given by marginalising $x_{k+1}$ from the state sequence density for $\td=k+1$, \eqref{eq:TrajectoryContinued}. This has important implications for the implementation of the \tpmbm filter for the set of all trajectories: during the prediction step, the hypothesis space for the trajectory density increases, due to the fact that we do not know if the trajectory ended at time $k$, \eqref{eq:TrajectoryEnded}, or continued to time $k+1$, \eqref{eq:TrajectoryContinued}. However, it is not necessary to explicitly represent both state sequence densities, instead it is sufficient to explicitly represent the state sequence density that continued to time $k+1$, as well as a pmf $P_{k+1|k}(\td)$. This is especially important when there are several consecutive misdetections associated: for $N$ consecutive misdetections, with a single pmf $P_{k|k'}(\td)$ and a single density $p_{k|k'}(\stseq_{\timeseq{b}{k}})$ we can compactly represent $N+1$ different hypotheses for $\td$ and $\stseq_{\timeseq{b}{\td}}$. Similar observations were made for \mht in \cite[Sec. IV]{CorCar14}. 

We now also assume that $P^{\rm D}(\stseq_k) = P^{\rm D}$, with $Q^{D}=1-P^{\rm D}$, as this facilitates the closed-form expression for the pmf of the time of the latest state $P_{k|k}(\td)$. Consider a Bernoulli density where $\tau$ is the last time step that a measurement was associated. Given this association, we have that $\Pr(\td\geq\tau)=1$. Then, at time $k > \tau$, if $\qdps = Q^{\rm D}P^{\rm S}<1$,\footnote{$Q^{\rm D}P^{\rm S}=1 \Rightarrow P^{\rm S}=1,\ P^{\rm D}=0$. Note that \mtt without detections ($P^{\rm D}=0$) is a problem of little interest.} the posterior pmf of $\td$ is
\begin{align}
	P_{k|k}(\td) & = 
		\begin{cases} 
			0 & \td < \tau \text{ or } \td > k , \\
			\frac{1}{C}Q^{\rm S}\qdps^{i} & \td=\tau+i,  i\in\timeset{0}{k-\tau-1} , \\
			\frac{1}{C}\qdps^{k-\tau} & \td = k , 
		\end{cases}
\end{align}%
where $C = Q^{\rm S}\left({1 - \qdps^{k-\tau}}\right)\left({1 - \qdps}\right)^{-1} + \qdps^{k-\tau}$.
When only misdetections are associated, in the limit, the pmf of $\td$ is
\begin{align}
	\lim_{k\rightarrow\infty}P_{k|k}(\td) =
		\begin{cases} 
			0 & \td < \tau, \\
			\qdps^i (1-\qdps) & \td=\tau+i, i\geq 0,
		\end{cases}
\end{align}
which is the pmf of a geometric distribution with success probability $1-\qdps$.
From this it follows that, eventually, for the given association (only misdetections after time step $\tau$), the maximum a posteriori (\map) estimate of $\td$ is $\tau$ and the expected value of $\td$ is $\tau+\frac{\qdps}{1-\qdps}$, which can be rounded to the nearest integer. Note that $\Pr(\td>\tau)=\phi$, which may be significant for high $P^{\rm S}$ and/or low $P^{\rm D}$. 
The pmf $P_{k|k}(\td)$ can also be computed recursively. 


\subsection{Estimator}
\label{sec:estimator}
We proceed to explain how to estimate the set of trajectories. 

\subsubsection{Single trajectory estimator}
\label{sec:SingleTrajectoryEstimator}
Given a trajectory density $f(\traj) = p(\stseq_{\timeseq{\tb}{\td}} | \tb,\td) P(\tb,\td)$, in this work, we obtain a trajectory estimate $\hat{\traj} = (\hat{\tb}, \ \hat{\td}, \ \hat{\stseq}_{\hat{\tb}:\hat{\td}} )$ as follows. The \map estimates of the trajectory start and end times are
\begin{align}
	\left(\hat{\tb}, \hat{\td}\right) = \arg \max_{(\tb,\td)} P(\tb,\td).
\end{align}
For trajectory densities of the form \eqref{eq:mixture_trajectory_density}, this corresponds to finding the component with maximum weight.
The trajectory estimate from the estimated birth time to the estimated most recent state time is computed as the conditional expectation
\begin{align}
	\hat{\stseq}_{\hat{\tb}: \hat{\td}} = \mathbb{E}\left[\stseq_{\hat{\tb}: \hat{\td}} \left | \hat{\tb}, \hat{\td} \right.\right]. \label{eq:ExpectedStateSequence}
\end{align}

\subsubsection{Multiple trajectory estimator}
\label{sec_estimator}

This paper uses the following estimator of the set of trajectories given the \tpmbm posterior density. We first select the global hypothesis with the highest weight
\begin{subequations}
\begin{align}
	\assoc_{\star} = \arg \max_{\assoc} w_{k|k}^{\assoc}.
\end{align}
Then, the estimated set of trajectories is
\begin{align}
	\hat{\settraj}_{k|k} = \left\{ \hat{\traj}_{k|k}^{i,\assoc_{\star}^{i}}  \right\}_{i \in \mathbb{T}_{k|k} \ : \ r_{k|k}^{i,\assoc_{\star}^{i}} > r^{\rm e}} ,
\end{align}%
\label{eq:MultiTrajectoryEstimator}%
\end{subequations}%
where the detection threshold $r^{\rm e}$ on the probability of existence is a parameter, and the trajectory estimate $\hat{\traj}_{k|k}^{i,\assoc_{\star}^{i}}$ is computed as in Section~\ref{sec:SingleTrajectoryEstimator}. The estimator \eqref{eq:MultiTrajectoryEstimator} corresponds to \pmbm Estimator 1 \cite[Sec. VI.A]{GarciaFernandezWGS:2018}.



\section{Properties of \pmbm densities on sets of trajectories}
\label{sec:pmbm_relations}
This section provides some properties on \pmbm densities on sets of trajectories. Section~\ref{sec:time_marginalisation} states the time marginalisation theorem. Section \ref{subsec:pmbm_arbitrary_time} provides a result analogous to \eqref{eq:Gaussian_state_sequence} for \pmbm densities. Finally, Section~\ref{sec:pmbm_tracker_filter_relation} establishes relations between the \tpmbm filters and the \pmbm filter.

\subsection{\tpmbm time marginalisation}
\label{sec:time_marginalisation}
Given a density on a set of trajectories defined on a time interval, time marginalisation is the process of marginalising states for particular times from the trajectories, such that only states related to a different time interval remain. 

We use $\int \diff \stseq_{\timeseq{\tb}{\td} \backslash \timeseq{b}{e}}$ to denote the integral of the states in the sequence $\stseq_{\timeseq{\tb}{\td}}$ that are not in the interval $ \timeseq{b}{e}$, with $\tb\leq b\leq e\leq \td$. The following theorem explains the time marginalisation operation for \pmbm densities on sets of trajectories.


\begin{figure*}
	\centering

	\includegraphics[width=0.2\textwidth]{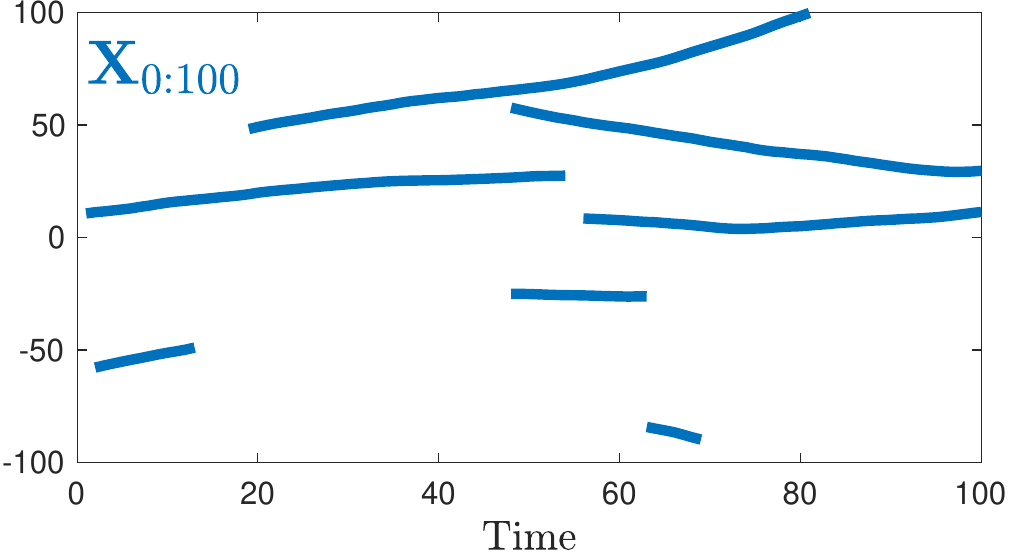}\includegraphics[width=0.2\textwidth]{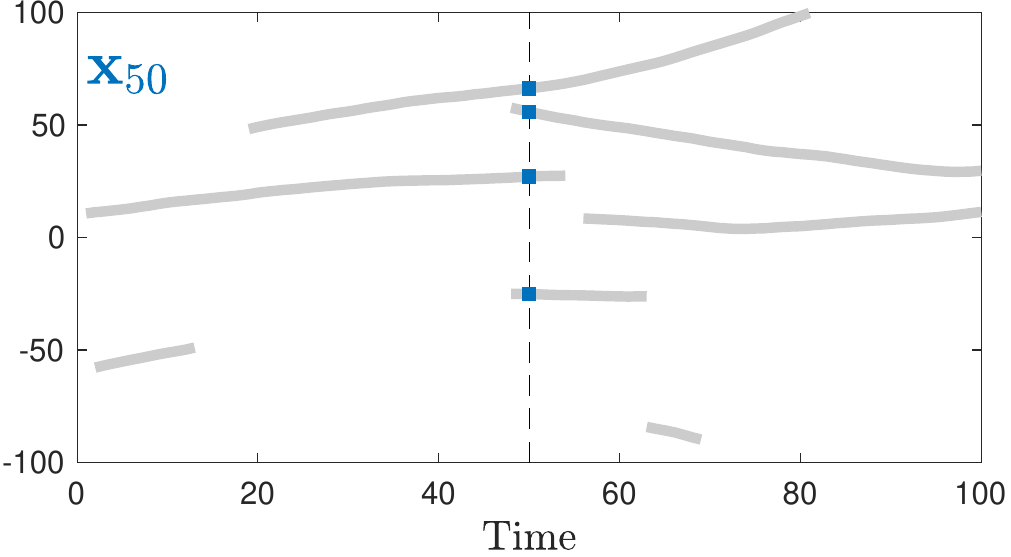}\includegraphics[width=0.2\textwidth]{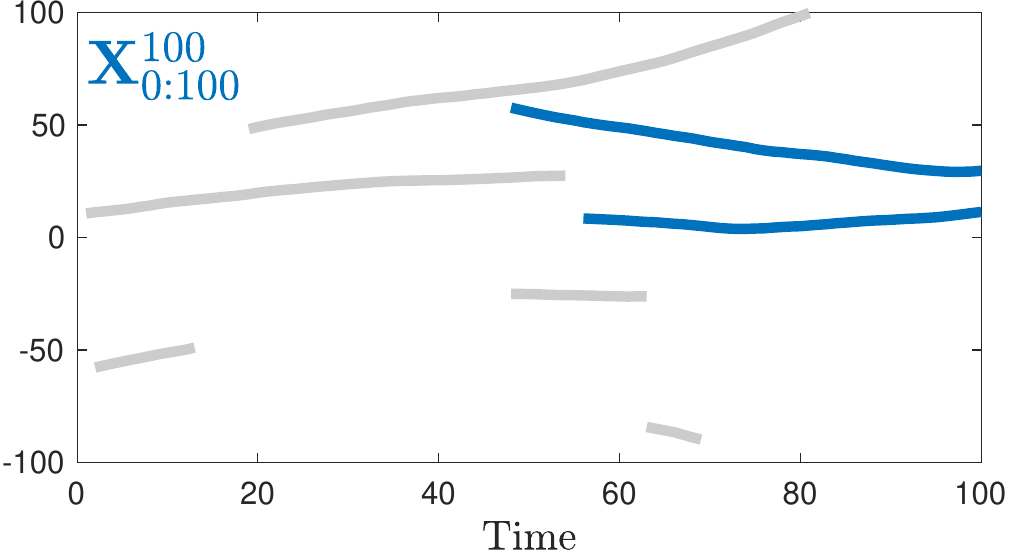}\includegraphics[width=0.2\textwidth]{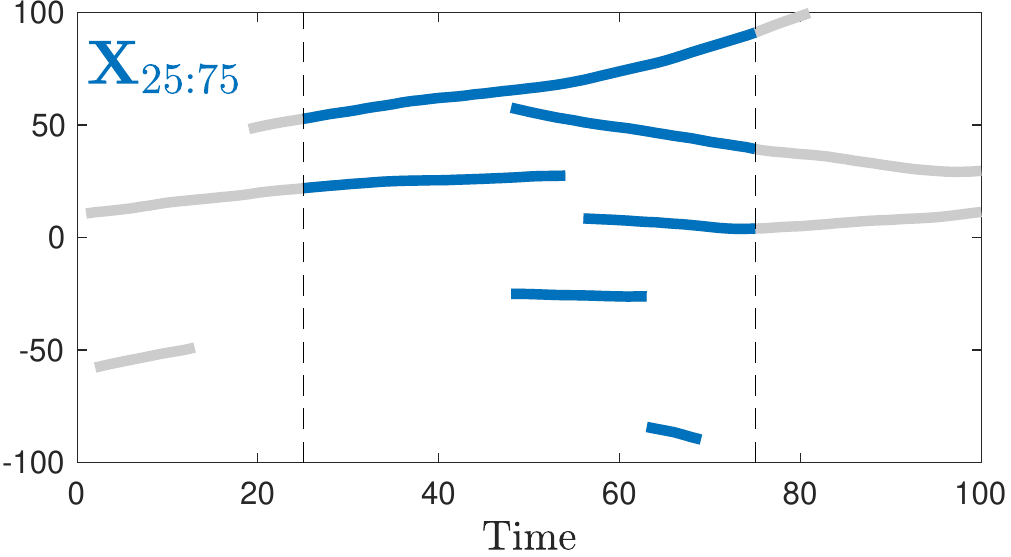}\includegraphics[width=0.2\textwidth]{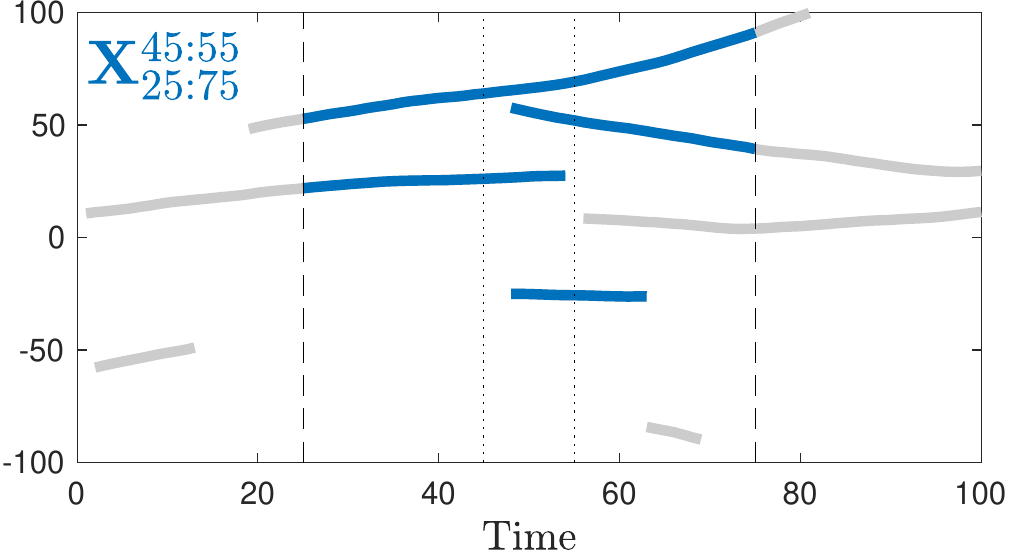}
	
	\caption{Illustration of the trajectories considered in Theorem~\ref{thm:all_to_trajs_in_a2g_alive_in_e2z}. Left to right: set of all (1-D) trajectories up to the current time step $100$, $\settraj_{\timeseq{0}{100}}$; corresponding target set at time step $50$, $\settarget_{50}$; corresponding set of current trajectories $\settraj_{\timeseq{0}{100}}^{100}$; corresponding set of trajectories in the time interval $25$ to $75$, $\settraj_{\timeseq{25}{75}}$; corresponding set of trajectories in the time interval $25$ to $75$ that alive at some time in the interval $45$ to $55$, $\settraj_{\timeseq{25}{75}}^{\timeseq{45}{55}}$. If $\settraj_{\timeseq{0}{100}}$ is \pmbm distributed, so are the rest of the sets of trajectories.}
	\label{fig:example_trajectory_sets}
\end{figure*}

\begin{theorem}\label{thm:all_to_trajs_in_a2g_alive_in_e2z}
	Given a posterior \pmbm density, with parameters \eqref{eq:PosteriorTrajectoryPMBMparameters_time_k}, for the set of all trajectories at time $k$, $\settraj_{\timeseq{0}{k}}$,  the posterior density for the set of trajectories in the interval $\timeseq{\alpha}{\gamma}$ whose trajectories were present in the surveillance area at least in one time step in the interval $\timeseq{\eta}{\zeta}$, $\settraj_{\timeseq{\alpha}{\gamma}}^{\timeseq{\eta}{\zeta}}$ with $\timeset{\eta}{\zeta} \subseteq \timeset{\alpha}{\gamma}\subseteq\timeset{0}{k}$, is a \pmbm characterised by
	\begin{subequations}
		\begin{align}
		\tilde{\lambda}^{\rm u}_{}(\cdot), \left\{ \left( w_{}^{i,\assoc^i} , \tilde{r}_{}^{i,\assoc^i} , \tilde{f}_{}^{i,\assoc^i}(\cdot) \right)  \right\}_{\assoc \in\assocspace_{}, \ i\in\trackTable_{} :  \tilde{r}_{}^{i,\assoc^i} \neq 0}
		\end{align}
		with \ppp intensity
		\begin{align}
		& \tilde{\lambda}^{\rm u}_{}(\traj) = \sum_{j\in \indexSetJ_{}^{{\rm u},\timeseq{\eta}{\zeta}}} \weight^{{\rm u},j}  \tilde{f}^{{\rm u},j}(\traj ; \tilde{\trajdensityparams}^{{\rm u},j}) , \\
		& \indexSetJ_{}^{{\rm u},\timeseq{\eta}{\zeta}} = \left\{ j \in \indexSetJ_{}^{\rm u} : \timeset{b^{{\rm u},j}}{e^{{\rm u},j}} \cap  \timeset{\eta}{\zeta} \neq \emptyset  \right\} , \label{eq:PoissonCompsAlive} \\
		& \tilde{\trajdensityparams}^{{\rm u},j}  = \left( \tilde{b}^{{\rm u},j} , \tilde{e}^{{\rm u},j} , \tilde{p}^{{\rm u},j}(\cdot) \right), \\
		& \tilde{b}^{{\rm u},j} = \max\left( b^{{\rm u},j}, \alpha \right), \quad \tilde{e}^{{\rm u},j} = \min\left( e^{{\rm u},j}, \gamma \right) ,\label{eq:PoissonCompBirthDeath} \\
		& \tilde{p}^{{\rm u},j}(\stseq_{\timeseq{\tilde{b}^{{\rm u},j}}{\tilde{e}^{{\rm u},j}}}) = \int p^{{\rm u},j}(\stseq_{\timeseq{b^{{\rm u},j}}{e^{{\rm u},j}}}) \diff \stseq_{ \timeseq{b^{{\rm u},j}}{e^{{\rm u},j}} \backslash \timeseq{\tilde{b}^{{\rm u},j}}{\tilde{e}^{{\rm u},j}} } ,\label{eq:PoissonCompStateSeqDensity} 
		\end{align}%
		and Bernoulli parameters
		\begin{align}
		& \tilde{r}_{}^{i,\assoc^i} = r_{}^{i,\assoc^i}  \sum_{j\in\indexSetJ_{}^{i,\assoc^i,\timeseq{\eta}{\zeta}}} \weight^{i,\assoc^i,j} , \\
		& \tilde{f}_{}^{i,\assoc^i}(\traj) = \frac{ \sum_{j\in\indexSetJ_{}^{i,\assoc^i,\timeseq{\eta}{\zeta}}} \weight^{i,\assoc^i,j} \tilde{f}^{i,\assoc^i,j}(\traj ; \tilde{\trajdensityparams}^{i,\assoc^i,j}) }{ \sum_{j\in\indexSetJ_{}^{i,\assoc^i,\timeseq{\eta}{\zeta}}} \weight^{i,\assoc^i,j} }, \\
		& \indexSetJ_{}^{i,\assoc^i,\timeseq{\eta}{\zeta}} = \left\{ j\in\indexSetJ_{}^{i,\assoc^i} : \timeset{b^{i,\assoc^i,j}}{e^{i,\assoc^i,j}} \cap \timeset{\eta}{\zeta} \neq \emptyset \right\} , \label{eq:BernoulliCompsAlive} \\
		& \tilde{\trajdensityparams}^{i,\assoc^i,j} = \left( \tilde{b}^{i,\assoc^i,j}, \tilde{e}^{i,\assoc^i,j}, \tilde{p}^{i,\assoc^i,j}(\cdot) \right) , \\
		& \tilde{b}^{i,\assoc^i,j} = \max\left( b^{i,\assoc^i,j}, \alpha \right), \tilde{e}^{i,\assoc^i,j} = \min\left( e^{i,\assoc^i,j}, \gamma \right) ,\label{eq:BernoulliCompBirthDeath}\\
		& \tilde{p}^{i,\assoc^i,j}(\stseq_{\timeseq{\tilde{b}^{i,\assoc^i,j}}{\tilde{e}^{i,\assoc^i,j}}}) \label{eq:BernoulliCompStateSeqDensity}\\
		& = \int p^{i,\assoc^i,j}(\stseq_{\timeseq{b^{i,\assoc^i,j}}{e^{i,\assoc^i,j}}}) \diff \stseq_{ \timeseq{b^{i,\assoc^i,j}}{e^{i,\assoc^i,j}} \backslash \timeseq{\tilde{b}^{i,\assoc^i,j}}{\tilde{e}^{i,\assoc^i,j}} } . \nonumber
		\end{align}%
		
	\end{subequations}%
\end{theorem}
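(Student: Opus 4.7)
The plan is to exploit the product-of-independent-components structure of a \pmbm density: the \ppp factor and each Bernoulli component within each MB hypothesis of the \mbm are statistically independent, and the ``restriction-and-clipping'' map $\Phi$ that sends $\settraj_{\timeseq{0}{k}}$ to $\settraj_{\timeseq{\alpha}{\gamma}}^{\timeseq{\eta}{\zeta}}$ acts trajectory-by-trajectory: any trajectory with $\timeset{\tb}{\td}\cap\timeset{\eta}{\zeta}=\emptyset$ is deleted, and each surviving trajectory has its interval clipped to $\timeset{\max(\tb,\alpha)}{\min(\td,\gamma)}$ with the states outside that sub-interval integrated out. Because the action is trajectory-wise, \ppp--MB independence and Bernoulli--Bernoulli independence are preserved, so it suffices to treat each independent factor separately and reassemble the resulting density in \pmbm form.

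For the \ppp factor with intensity $\lambdau(\traj)$, I would invoke two standard \ppp calculus facts: closure under thinning, and closure under measurable state-space maps. Thinning away trajectories with $\timeset{\tb}{\td}\cap\timeset{\eta}{\zeta}=\emptyset$ yields a \ppp with intensity $\lambdau(\traj)$ restricted to the survival region; since each mixture component $f^{{\rm u},j}$ in \eqref{eq:mixture_trajectory_density} has deterministic $(b^{{\rm u},j},e^{{\rm u},j})$, this restriction simply selects the index set $\indexSetJ^{{\rm u},\timeseq{\eta}{\zeta}}$ defined in \eqref{eq:PoissonCompsAlive}. Pushing the surviving intensity forward through the clipping map, which for component $j$ marginalizes the states outside $\timeseq{\tilde{b}^{{\rm u},j}}{\tilde{e}^{{\rm u},j}}$ as in \eqref{eq:PoissonCompStateSeqDensity}, yields exactly $\tilde{\lambda}^{\rm u}$ in the stated mixture form, by linearity of the forward map over the mixture sum.

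For each Bernoulli $(r^{i,\assoc^i},f^{i,\assoc^i})$, the map produces a new Bernoulli whose non-empty probability is $r^{i,\assoc^i}\,\Pr(\text{survives }\Phi)$ with $\Pr(\text{survives }\Phi)=\sum_{j\in\indexSetJ^{i,\assoc^i,\timeseq{\eta}{\zeta}}}w^{i,\assoc^i,j}$; this follows by integrating \eqref{eq:BernoulliDensityDefinition} against the survival indicator and using the mixture form \eqref{eq:mixture_trajectory_density}. The conditional single-trajectory density is then the renormalized sum of the surviving components after clipping and marginalization as in \eqref{eq:BernoulliCompStateSeqDensity}, which matches $\tilde{f}^{i,\assoc^i}$. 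Since Bernoulli independence is preserved by trajectory-wise maps, each MB hypothesis in the mixture transforms to an MB with the same global weight $w^{\assoc}\propto\prod_i w^{i,\assoc^i}$, so the overall density remains \pmbm. Bernoullis with $\tilde{r}^{i,\assoc^i}=0$, i.e.\ those for which every mixture component misses $\timeseq{\eta}{\zeta}$, are degenerate and can be dropped from the track table without changing the density.

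The main technical obstacle I anticipate is bookkeeping rather than conceptual: I must verify that the clip-and-marginalize action on the mixture density \eqref{eq:mixture_trajectory_density} produces the parameters $(\tilde{b},\tilde{e},\tilde{p})$ componentwise through the single integral $\int p(\stseq_{\timeseq{b}{e}})\,\mathrm{d}\stseq_{\timeseq{b}{e}\setminus\timeseq{\tilde{b}}{\tilde{e}}}$, and that the trajectory set integral \eqref{eq:SetTrajIntegral} over the new state space $\trajStateSpace{\timeseq{\alpha}{\gamma}}$ is consistent with the component-wise marginalizations in \eqref{eq:PoissonCompStateSeqDensity} and \eqref{eq:BernoulliCompStateSeqDensity}. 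Once this change of variables is carried out carefully, the remainder of the argument follows immediately from the \pmbm form of the prior and the independence of its \ppp and Bernoulli constituents.
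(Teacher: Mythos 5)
Your proposal is correct and follows essentially the same route as the paper's proof: decompose the PMBM into its independent \textsc{ppp} and Bernoulli factors, observe that the clip-and-restrict map acts trajectory-wise so this independence is preserved (the paper formalises this via a multi-trajectory Dirac-delta transition density and Lemma~\ref{lem:pmbm_function_integration_general}), and then compute the transformed \textsc{ppp} intensity and Bernoulli parameters component by component (Lemmas~\ref{lem:all_Poisson_to_Poisson_in_time_interval_general} and~\ref{lem:all_Bernoulli_to_Bernoulli_in_time_interval_general}). The only cosmetic difference is that you invoke the standard \textsc{ppp} thinning and mapping theorems where the paper carries out the corresponding set integrals explicitly.
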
%
\begin{proof} See Appendix~\ref{app:pmbm_relations_theorem_proofs}. 
\end{proof}

Theorem~\ref{thm:all_to_trajs_in_a2g_alive_in_e2z} is understood as follows: from \eqref{eq:PoissonCompsAlive}/\eqref{eq:BernoulliCompsAlive}, we obtain the densities of the trajectories that are alive in the time interval $\timeseq{\eta}{\zeta}$. From \eqref{eq:PoissonCompBirthDeath}/\eqref{eq:BernoulliCompBirthDeath}, we get the time of birth and time of most recent state parameters, and from \eqref{eq:PoissonCompStateSeqDensity}/\eqref{eq:BernoulliCompStateSeqDensity}, we get the state sequence densities corresponding to $\timeseq{\alpha}{\gamma}$. Some examples of time marginalisation are illustrated in Fig. \ref{fig:example_trajectory_sets}.

\subsection{\tpmbm density for arbitrary time intervals}\label{subsec:pmbm_arbitrary_time}
It follows from Theorem~\ref{th:PredictionAllTrajectories} and Theorem~\ref{th:Update} that the density
\begin{align}
	f_{k|k'} (\settraj_{\timeseq{0}{k}}) \triangleq f_{\timeseq{0}{k}|\timeseq{0}{k'}} (\settraj_{\timeseq{0}{k}}) \label{eq:General_PMBM_tracker_density}
\end{align}
is \pmbm, regardless if we are doing prediction ($k>k'$) or filtering ($k=k'$). In other words, in terms of prediction and filtering, we have a result that is analogous to \eqref{eq:Gaussian_state} for the Gaussian distribution in linear Gaussian models. The following theorem presents more general results.
\begin{theorem}\label{thm:density_traj_alpha2gamma_alive_eta2zeta_meas_xi2chi}
	Given the measurements in any time interval $\timeseq{\xi}{\chi}$, the density of the set of trajectories in any time interval $\timeseq{\alpha}{\gamma}$ that are alive at any point in the time interval $\timeseq{\eta}{\zeta}$, such that $\timeset{\eta}{\zeta}\subseteq\timeset{\alpha}{\gamma}$, is \pmbm. This density is denoted by $f_{\timeseq{\alpha}{\gamma} | \timeseq{\xi}{\chi} }^{\timeseq{\eta}{\zeta}}\left(\settraj_{\timeseq{\alpha}{\gamma}}^{\timeseq{\eta}{\zeta}}\right)$. 
\end{theorem}
The following corollary to Theorem~\ref{thm:density_traj_alpha2gamma_alive_eta2zeta_meas_xi2chi} is analogous to \eqref{eq:Gaussian_state_sequence}.
\begin{corollary}
	If $\eta=\alpha$ and $\zeta=\gamma$ in Theorem~\ref{thm:density_traj_alpha2gamma_alive_eta2zeta_meas_xi2chi}, then 
	\begin{align}
		f_{\timeseq{\alpha}{\gamma} | \timeseq{\xi}{\chi} }\left(\settraj_{\timeseq{\alpha}{\gamma}}\right)
	\end{align}
	is \pmbm for any $\timeseq{\alpha}{\gamma}$ and $\timeseq{\xi}{\chi}$.
\end{corollary}
We prove Theorem~\ref{thm:density_traj_alpha2gamma_alive_eta2zeta_meas_xi2chi} in Appendix \ref{app:pmbm_relations_theorem_proofs}, making use of the time marginalisation theorem for \tpmbm densities (Theorem \ref{thm:all_to_trajs_in_a2g_alive_in_e2z}). To illustrate Theorem \ref{thm:density_traj_alpha2gamma_alive_eta2zeta_meas_xi2chi}, let us consider for example the sets $\settraj_{\timeseq{0}{100}}$, $\settarget_{50}$, and $\settraj_{\timeseq{25}{75}}^{\timeseq{45}{55}}$ in Figure \ref{fig:example_trajectory_sets}. Theorem~\ref{thm:density_traj_alpha2gamma_alive_eta2zeta_meas_xi2chi} indicates that these sets are \pmbm distributed given the measurements in any time interval.

\subsection{Relation between the \tpmbm filters and the \pmbm filter}
\label{sec:pmbm_tracker_filter_relation}
Theorems \ref{thm:all_to_trajs_in_a2g_alive_in_e2z} and \ref{thm:density_traj_alpha2gamma_alive_eta2zeta_meas_xi2chi} provide a way to compute the density $f_{\timeseq{\alpha}{\gamma} | \timeseq{\xi}{\chi} }^{\timeseq{\eta}{\zeta}}\left(\settraj_{\timeseq{\alpha}{\gamma}}^{\timeseq{\eta}{\zeta}}\right)$: first use the \tpmbm filter for the set of all trajectories, and then use time marginalisation to obtain the desired density. However, we note that the application of the \tpmbm filter followed by time marginalisation is possibly not the only way to compute the density $f_{\timeseq{\alpha}{\gamma} | \timeseq{\xi}{\chi} }^{\timeseq{\eta}{\zeta}}\left(\settraj_{\timeseq{\alpha}{\gamma}}^{\timeseq{\eta}{\zeta}}\right)$. Compare to the Gaussian result in \eqref{eq:Gaussian_state_sequence}: it is possible to compute the Gaussian density $p(\stseq_{k}| z_{\timeseq{0}{K}})$ by computing the joint density $p(\stseq_{\timeseq{0}{K}}| z_{\timeseq{0}{K}})$ and then marginalizing, however, one can also use forward-backward \textsc{rts}-smoothing, see, e.g., \cite{Sarkka:2013}. 


Using Theorem~\ref{thm:all_to_trajs_in_a2g_alive_in_e2z} for $\alpha=\gamma=\eta=\zeta=k$, the relations between the two \tpmbm filters, and between the \tpmbm filters and the \pmbm filter \cite{Wil12}, become clear. This is illustrated in the following diagram indicating how several densities can be obtained by applying different theorems:
\begin{align*}\small
	\begin{array}{c c c}
		f_{k|k}(\settraj_{\timeseq{0}{k}})  \overset{\text{Thm.~\ref{th:PredictionAllTrajectories}}}{\rightarrow} & f_{k+1|k}(\settraj_{\timeseq{0}{k+1}}) \overset{\text{Thm.~\ref{th:Update}}}{\rightarrow} & f_{k+1|k+1}(\settraj_{\timeseq{0}{k+1}})  \\
		\\
		\big\downarrow{\scriptsize\text{Thm.~\ref{thm:all_to_trajs_in_a2g_alive_in_e2z}}} & \big\downarrow{\scriptsize\text{Thm.~\ref{thm:all_to_trajs_in_a2g_alive_in_e2z}}} & \big\downarrow{\scriptsize\text{Thm.~\ref{thm:all_to_trajs_in_a2g_alive_in_e2z}}} \\
		\\
		f_{k|k}(\settarget_{k}) \overset{\text{\cite[Thm. 1]{Wil12}}}{\rightarrow} & f_{k+1|k}(\settarget_{k+1}) \overset{\text{\cite[Thm. 2]{Wil12}}}{\rightarrow} & f_{k+1|k+1}(\settarget_{k+1})
	\end{array}%
\end{align*}%

It can be seen that marginalising out the past states in the \tpmbm filter posterior, we obtain the \pmbm filter posterior \cite{Wil12}. Predicting and updating the \pmbm posterior yields the same result as predicting and updating the \tpmbm filter, and then marginalising out past states. Thus, the hypotheses in the \pmbm filter are implicitly operating on the latest time step of the trajectory hypotheses in the \tpmbm filter. Then, the target estimates of the \pmbm filter corresponding to the same Bernoulli component (created by a specific measurement) are implicitly estimating the current state of the associated trajectory in the \tpmbm filter. This indicates that we can sequentially connect target state estimates at different time steps to estimate trajectories using \pmbm filter metadata. This sequential track building procedure for the \pmbm filter is analogous to linking target state estimates with the same auxiliary variable, which can be used to mark each track (Bernoulli component) \cite{Angel20_e}. It should anyway be noted that this sequential procedure to estimate the set of trajectories is sub-optimal. The optimal estimation of the set of trajectories must be done directly from the \tpmbm posterior density.


\section{Linear Gaussian implementation}
\label{sec:linear_gaussian_implementation}
Let $\targetStateSpace = \mathbb{R}^{n_x}$, $\measStateSpace = \mathbb{R}^{n_z}$, and let the transition density and measurement model be linear and Gaussian, 
\begin{subequations}
\begin{align}
	\pi_{}^{x}(x | x') &= \Npdfbig{x}{F x'}{Q}, \\
	\varphi^{z}(z | x) &= \Npdfbig{z}{H x}{R},
\end{align}%
\label{eq:linear_gaussian_model}%
\end{subequations}%
where $F \in \mathbb{R}^{n_x \times n_x}$ is a state transition matrix, $H \in \mathbb{R}^{n_z \times n_x}$ is a measurement matrix, $Q\in \mathbb{S}_{n_x}^{++}$ and $R\in \mathbb{S}_{n_z}^{++}$ are the covariance matrices of the process noise and measurement noise, respectively, and $\mathbb{S}_{d}^{++}$ denotes the space of positive definite matrices of size $d\times d$.

For linear and Gaussian models \eqref{eq:linear_gaussian_model} we present three representations of the state sequence density. Let $m_{k|k}^{[a]}$ and $P_{k|k}^{[a]}$ be the parts of the mean and the covariance matrix for time step $a$. Let $P_{k|k}^{[\timeseq{a}{b},\timeseq{c}{d}]}$ denote the part of the covariance matrix with rows for time steps $a$ to $b$ and columns for time steps $c$ to $d$. The predictions and updates can be generalised to non-linear models by linearization \cite{Sarkka:2013,reustice-2006b}. 

\subsection{Gaussian moment form}
\label{sec:GaussianStateSequenceDensity}
The state sequence density can be expressed on Gaussian moment form, with mean $m_{k|k'}$ and covariance $P_{k|k'}$,
\begin{align}
	p_{k|k'}(\stseq_{\timeseq{\tb}{k}}) & = \mathcal{N}\left(\stseq_{\timeseq{\tb}{k}} \ ; \ m_{k|k'}, P_{k|k'} \right) , 
	\label{eq:regular_gaussian}%
\end{align}%

\subsubsection{Prediction}
Given the motion model \eqref{eq:linear_gaussian_model} and posterior parameters  $m_{k|k}$ and $P_{k|k}$, the predicted parameters are
\begin{subequations}
\begin{align}
	m_{k+1|k} & = \begin{bmatrix} m_{k|k} \\ Fm_{k|k}^{[k]} \end{bmatrix}, \\
	P_{k+1|k} & = \begin{bmatrix} 
		P_{k|k} & P_{k|k}^{[\timeseq{\tb}{k},k]}F^{\tp}  \\
		F P_{k|k}^{[k,\timeseq{\tb}{k}]} & FP_{k|k}^{[k]}F^{\tp}+Q
	\end{bmatrix} \label{eq:predicted_covariance_matrix}.
\end{align}%
\label{eq:Gaussian_prediction}%
\end{subequations}%

\subsubsection{Update}
Given the measurement model \eqref{eq:linear_gaussian_model}, predicted parameters $m_{k+1|k}$ and $P_{k+1|k}$,  and an associated detection $z$, the posterior parameters are
\begin{subequations}
\begin{align}
	m_{k+1|k+1} & = m_{k+1|k} + K (z - H m_{k+1 | k}^{[k]}), \\
	P_{k+1|k+1} & = P_{k+1|k} - KHP_{k+1|k}^{[k,\timeseq{\tb}{k}]}, \label{eq:updated_covariance_matrix}  \\
	K & = P_{k+1|k}^{[\timeseq{\tb}{k},k]}H^{\tp}(HP_{k+1|k}^{[k]}H^{\tp}+R)^{-1} .
\end{align}
\label{eq:Gaussian_update}%
\end{subequations}

\subsection{Gaussian information form}\label{subsec:Gaussian_information}
The density \eqref{eq:regular_gaussian} can be expressed in information form with information vector $\infvec_{k|k'}$ and information matrix $\infmat_{k|k'}$, such that $\infvec_{k|k'}=P_{k|k'}^{-1}m_{k|k'}$ and $\infmat_{k|k'}= P_{k|k'}^{-1}$ \cite{reustice-2006b}.

\subsubsection{Prediction}
Given the motion model \eqref{eq:linear_gaussian_model} and posterior parameters $\infvec_{k|k}$ and $\infmat_{k|k}$, the predicted parameters are \cite{reustice-2006b}
\begin{subequations}
\begin{align}
	\infvec_{k+1|k} & = \begin{bmatrix} \text{\footnotesize$\infvec_{k|k}$} \\ \text{\footnotesize$\zeromatrix_{n_x \times 1}$} \end{bmatrix} , \\
	\infmat_{k+1|k} & = 
		\begin{bmatrix} 
			\text{\footnotesize$\infmat_{k|k}^{[\timeseq{\tb}{k-1}]}$}			& \text{\scriptsize$\infmat_{k|k}^{[\timeseq{\tb}{k-1},k]}$}			& \text{\footnotesize$\zeromatrix_{(\tlen-1)n_x \times n_x}$} \\
			\text{\footnotesize$\infmat_{k|k}^{[k,\timeseq{\tb}{k-1}]}$}			& \text{\footnotesize$\infmat_{k|k}^{[k]}+F^{T}Q^{-1}F$}	& \text{\footnotesize$-F^{T}Q^{-1}$} \\
			\text{\footnotesize$\zeromatrix_{n_x\times(\tlen-1)n_x}$}	& \text{\footnotesize$-Q^{-1}F$}					& \text{\footnotesize$Q^{-1}$} 
		\end{bmatrix} , \label{eq:predicted_information_matrix}
\end{align}%
\label{eq:information_prediction}%
\end{subequations}%
where $\zeromatrix_{m\times n}$ is an $m$ by $n$ all-zero matrix. 

\subsubsection{Update}
Given the measurement model \eqref{eq:linear_gaussian_model}, predicted parameters $\infvec_{k+1|k}$ and $\infmat_{k+1|k}$, and an associated detection $z$, the posterior parameters are \cite{reustice-2006b,WalterEL:2007,MahonWPJR:2008}
\begin{subequations}
\begin{align}
	\infvec_{k+1|k+1} & = \infvec_{k+1|k} + \begin{bmatrix} \text{\footnotesize$\zeromatrix_{(\tlen-1)n_x \times 1}$} \\ \text{\footnotesize$H^{T} R^{-1} z$} \end{bmatrix}, \\
	\infmat_{k+1|k+1} & = \infmat_{k+1|k} + \begin{bmatrix} \text{\footnotesize$\zeromatrix_{(\tlen-1)n_x \times (\tlen-1)n_x}$}  & \text{\footnotesize$\zeromatrix_{(\tlen-1)n_x \times n_x}$} \\ \text{\footnotesize$\zeromatrix_{n_x \times (\tlen-1)n_x}$} & \text{\footnotesize$H^{T} R^{-1} H$} \end{bmatrix}. \label{eq:updated_information_matrix}
\end{align}%
\label{eq:information_update}%
\end{subequations}%

\subsection{Gaussian $L$-scan approximation}
In the $L$-scan approximation, the covariance matrix of \eqref{eq:regular_gaussian} is approximated as \cite[Sec. VI.C]{GarciaFernandezS:2019}
\begin{align}
	P_{k|k'} =  \diag{ P_{k|k'}^{[\tb]} , \ \ldots, P_{k|k'}^{[\td-L]}, \ P_{k|k'}^{[\td-L+1:\td]} }  .
\end{align}
In other words, states before the last $L$ time steps are assumed independent of the most recent $L$ states.

\subsubsection{Prediction}
Given the motion model \eqref{eq:linear_gaussian_model} and posterior parameters $m_{k|k}$ and $P_{k|k}$, the predicted parameters are
\begin{subequations}
\begin{align}
	m_{k+1|k} & = \begin{bmatrix} m_{k|k} \\ Fm_{k|k}^{[k]} \end{bmatrix} , \\
	P_{k+1|k} & = \diag{ P_{k|k}^{[\tb]} , \ \ldots, \ P_{k|k}^{[k-L]} , \ P_{k|k}^{[k-L+1]} , \ \Psi } , \label{eq:predicted_L_latest_covariance_matrix} \\
	\Psi & = \begin{bmatrix} 
				\text{\footnotesize$P_{k|k}^{[\timeseq{k-L+2}{k}]}$}	&	\text{\footnotesize$P_{k|k}^{[\timeseq{k-L+2}{k},k]}F^{\tp}$}  \\
				\text{\footnotesize$F P_{k|k}^{[k,\timeseq{k-L+2}{k}]}$}	&	\text{\footnotesize$FP_{k|k}^{[k]}F^{\tp}+Q$}
			\end{bmatrix}.
\end{align}%
\label{eq:Gaussian_L_latest_prediction}%
\end{subequations}%

\subsubsection{Update}
Given the measurement model \eqref{eq:linear_gaussian_model}, predicted parameters $m_{k+1|k}$ and $P_{k+1|k}$ and an associated detection $z$, the posterior parameters are
\begin{subequations}
\begin{align}
	m_{k+1|k+1} &= \begin{bmatrix} m_{k+1|k}^{[\timeseq{\tb}{k-L}]} \\ m_{k+1|k}^{[\timeseq{k-L+1}{k}]} + K (z - H m_{k+1 | k}^{[k]}) \end{bmatrix}, \\
	P_{k+1|k+1}^{[\timeseq{\tb}{k-L}]} &= P_{k+1|k}^{[\timeseq{\tb}{k-L}]} , \\
	P_{k+1|k+1}^{[\timeseq{k-L+1}{k}]} &= P_{k+1|k}^{[\timeseq{k-L+1}{k}]} - KHP_{k+1|k}^{[k,\timeseq{k-L+1}{k}]}, \label{eq:updated_L_latest_covariance_matrix}  \\
	K &= P_{k+1|k}^{[\timeseq{k-L+1}{k},k]}H^{\tp}(HP_{k+1|k}^{[k,k]}H^{\tp}+R)^{-1}.
\end{align}%
\label{eq:Gaussian_L_latest_update}%
\end{subequations}%

\subsection{Discussion}

In all three state sequence predictions, \eqref{eq:Gaussian_prediction}, \eqref{eq:information_prediction} and \eqref{eq:Gaussian_L_latest_prediction}, the mean/covariance, or information vector/matrix, are augmented to account for the additional time step that, following the prediction, is represented by the trajectory. The moment form update~\eqref{eq:Gaussian_update} and the $L$-scan update~\eqref{eq:Gaussian_L_latest_update} affect all states in the state sequence, and the $L$ latest states, respectively. In comparison, the information form update~\eqref{eq:information_update} only affects the parts of the information vector and information matrix corresponding to the single most recent state.


For the information form, a key result is that the bottom left and top right corners of the predicted information matrix \eqref{eq:predicted_information_matrix} are exactly zero, and that the update \eqref{eq:updated_information_matrix} only affects the part of the information matrix related to the current state. This means that the information matrix is sparse \cite{Koch11}.


In theory, when using the Gaussian information form, computing the weights \eqref{eq:DetUpdateW} and \eqref{eq:PoisUpdateW} as well as an estimate $\hat{\stseq}_{\hat{\tb}:\hat{\td}}$ involves the inverse of the information matrix. However, it is not necessary to compute the inverse in practice. Instead, multiplications with the inverse information matrix are solved efficiently as sparse, symmetric, positive-definite, linear system of equations. Utilising the sparseness makes the computations significantly faster but the computational cost increases with the trajectory length. An alternative is to compute, in parallel to the information vector and matrix, the mean and covariance for the current state, making them available for computation of the weights \eqref{eq:DetUpdateW} and \eqref{eq:PoisUpdateW}. 

For the $L$-scan approximation, 
the appropriate choice of $L$ depends on the linear and Gaussian models \eqref{eq:linear_gaussian_model}, and how well the cross-covariances $P_{k|k}^{[k,k']}$ can be approximated by all-zero matrices $\zeromatrix_{n_x \times n_x}$ for time steps $k' < k-L$. The larger $L$ is, the more accurate the resulting tracking filter is, at the price of an increased computational cost. One implementation alternative that is computationally cheap and has low memory requirements is to perform a fixed-lag \textsc{rts}-smoothing for selected Bernoulli densities $f_{k|k}^{i,\assoc^i}(\traj)$ (keeping multiple hypotheses between scans) when it is deemed necessary, e.g., to perform trajectory estimation \cite{XiaGSGFW:jaifMultiScanPMBMtrackers}. In addition, for \tpmbm implementation for the set of all trajectories, we do not update single trajectory hypotheses with a small probability $P(\epsilon=k)$ of being present at the current time step $k$.

Lastly, as with \mht, \dglmb and \pmbm filters, the number of global hypotheses in the \tpmbm filters grows fast in time, so we use pruning methods for computational tractability. The \pmbm posterior density \eqref{eq:PMBMdensityDefinition} remains a \pmbm, keeping the symmetry of the density over the elements of the set, if we 1) prune {\mb}s with small global hypothesis weights (and then re-normalise remaining weights), 2) prune Bernoulli components with a small probability of existence, and 3) prune \ppp intensity components with small weights, see details in \cite{GranstromFS:2016_PMBMETT,Angel20}. That is, if pruning keeps the set of global hypotheses $\widetilde{\assocspace}_{k|k'}\subseteq \assocspace_{k|k'}$ the \pmbm is of the form \eqref{eq:PMBMdensityDefinition1} with an \mbm density
\begin{equation}
f_{k|k'}^{\rm d}(\settraj_{k}^{\rm d}) \propto \sum_{\assoc_{k|k'}\in\widetilde{\assocspace}_{k|k'}} w_{k|k'}^{\assoc} \sum_{\uplus_{i\in\trackTable_{k|k'}} \settraj_{k}^{i} = \settraj_{k}^{\rm d}} \prod_{i\in\trackTable_{k|k'}} f_{k|k'}^{i,\assoc^i}(\settraj_{k}^i).
\end{equation}

In \pmbm pruning, keeping the {\mb}s with largest weights minimises an upper bound of the mean absolute error \cite[Sec. V.D]{GranstromFS:2016_PMBMETT}. Standard methods can then be used to select local and global hypotheses with non-negligible weights. For example, we can use gating \cite{Challa_book11}, single-scan data association solvers such as the Hungarian method and Jonker-Volgenant-Castanon algorithm \cite{Crouse16}, $M$-best single-scan data association methods, such as Murty's algorithm \cite{Murty:1968}, single-scan sampling (e.g. Gibbs sampling)  \cite{Morelande09,VoVH:2017, Hue02}, multi-scan data associations via Lagrangian relaxation \cite{Poore97, Deb97, XiaGSGFW:jaifMultiScanPMBMtrackers},  multi-scan generalised Murty's algorithm \cite{Fortunato07} and multi-scan sampling (e.g. Markov chain Monte Carlo or Gibbs sampling) \cite{OhRS:2009, He18,VoV:2019}. 

The above pruning algorithms have different pros and cons related to accuracy and computational complexity, and which one to use can be application dependent. Importantly, all the pruning strategies can be used in the different filtering frameworks to address the data association problem. For example, Murty's algorithm is ensured to obtain the $k$-best global hypotheses arising from a previous global hypothesis with a quartic complexity in the number of measurements $O(m_k^4)$. Gibbs sampling can achieve a linear complexity in the number of measurements  $O(m_k)$, but it is not ensured to obtain the $k$-best global hypotheses \cite{VoVH:2017}.




\section{Simulation study}
\label{sec:simulation_study}


We present results from three scenarios that include interesting tracking challenges: Scenario 1) has many targets and several births and deaths; Scenario 2) has long trajectories; and Scenario 3) includes complicated target coalescence. We compare the following \mtt algorithms:
\begin{enumerate}
	\item \tpmbm filters for the set of all/alive trajectories, implemented using \textit{Murty's algorithm} \cite{Murty:1968} and the Gaussian information form, abbreviated \pmbmif/\tpmbmaif.
	\item \tpmbm filters for the set of all/alive trajectories, implemented using \textit{Murty's algorithm} and $L=1$ scan Gaussian approximation, abbreviated \pmbmLone and \tpmbmalscan.
  	\item Multi-scan \tpmbm filters for the set of all/alive trajectories, implemented using \textit{dual decomposition} \cite{XiaGSGFW:jaifMultiScanPMBMtrackers}, with Gaussian information form, abbreviated \multiscanpmbmif and \mtpmbmaif.
   	\item Multi-scan \tpmbm filters for the set of all/alive trajectories, implemented using \textit{dual decomposition}, with $L=1$ scan Gaussian approximation, abbreviated \multiscanpmbmLone and \mtpmbmalscan.
    \item \dglmb with joint prediction and update steps \cite{VoVH:2017}, implemented using \textit{Murty's algorithm}.
    \item \dglmb with joint prediction and update steps as well as \textit{adaptive birth} \cite{ReuterVVD:2014}, abbreviated \dglmbab, implemented using \textit{Murty's algorithm}.
    \item Trajectory \mbmo \cite{GarciaFernandezSM:2019}, implemented using \textit{multi-scan Gibbs sampling} \cite{VoV:2019}, with $L=1$ scan Gaussian approximation, abbreviated \multiscanmbmolLone. Adding labels to the trajectories in the \multiscanmbmolLone filter does not change the estimated set of trajectories \cite[Sec. IV]{GarciaFernandezSM:2019}. This filter is also referred to as multi-scan \dglmb in \cite{VoV:2019}.
    \item Track-oriented \mht implementation \cite{werthmann1992step} from MathWorks' Sensor Fusion and Tracking Toolbox, abbreviated \tomht. 
\end{enumerate}

Note that \multiscanpmbmif, \mtpmbmaif, \multiscanpmbmLone, \mtpmbmalscan and \multiscanmbmolLone solve a multi-scan data association problem, whereas the rest solve a single-scan data association problem. In addition, \pmbmif, \tpmbmaif, \multiscanpmbmif and \mtpmbmaif perform smoothing-while-filtering and therefore provide smoothed trajectory estimates.

\subsection{Simulation setup}
In all scenarios, a 2D constant velocity motion model \cite[Sec. III]{RongLiJ:2003} 
with Gaussian acceleration standard deviation $\sigma_{v}$ was used. Target measurements were simulated using linear position measurements with Gaussian noise with $R=\diag{[\sigma_{r}^2,\ \sigma_{r}^2 ]}$. Clutter measurements were simulated uniformly distributed in the surveillance area, with average number of false alarms per time step $\mu^{\rm FA} = \int \lambda^{\rm FA}(z) \diff z$. The simulation parameters for all three scenarios are listed in Table~\ref{tab:simulation_parameters}, where $K$ denotes the number of total time steps, and $|\settraj_{\timeseq{0}{K}}|$ is the cardinality of the set of all trajectories at the final time step. The ground truth set of trajectories in each scenario is shown in Figure \ref{fig_groundtruth}. The target cardinalities over time for the three scenarios are shown in Appendix \ref{sec:experiments_appendix}. The birth models are described in the following.

\begin{figure*}[!t]
	\centering
	\subfloat[Scenario 1]{\includegraphics[width = 0.33\textwidth]{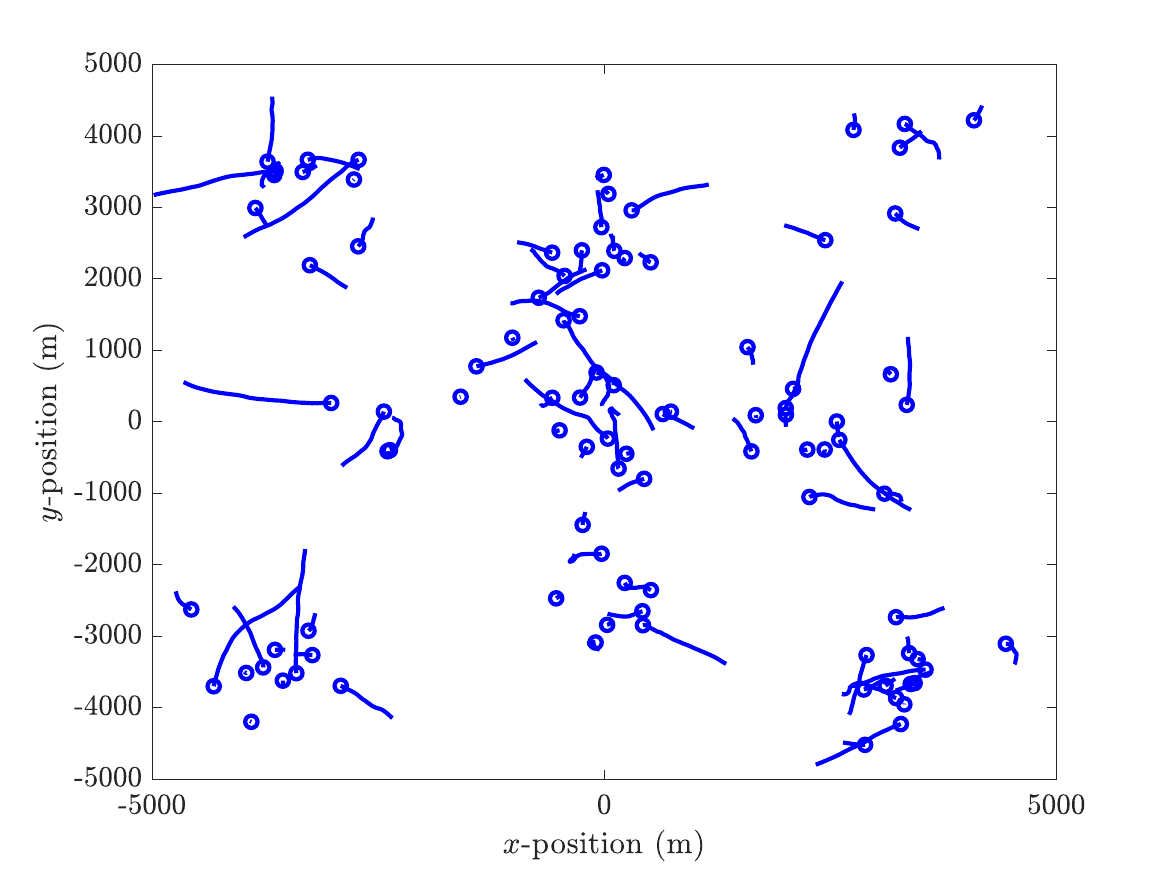}}
	\subfloat[Scenario 2]{\includegraphics[width = 0.33\textwidth]{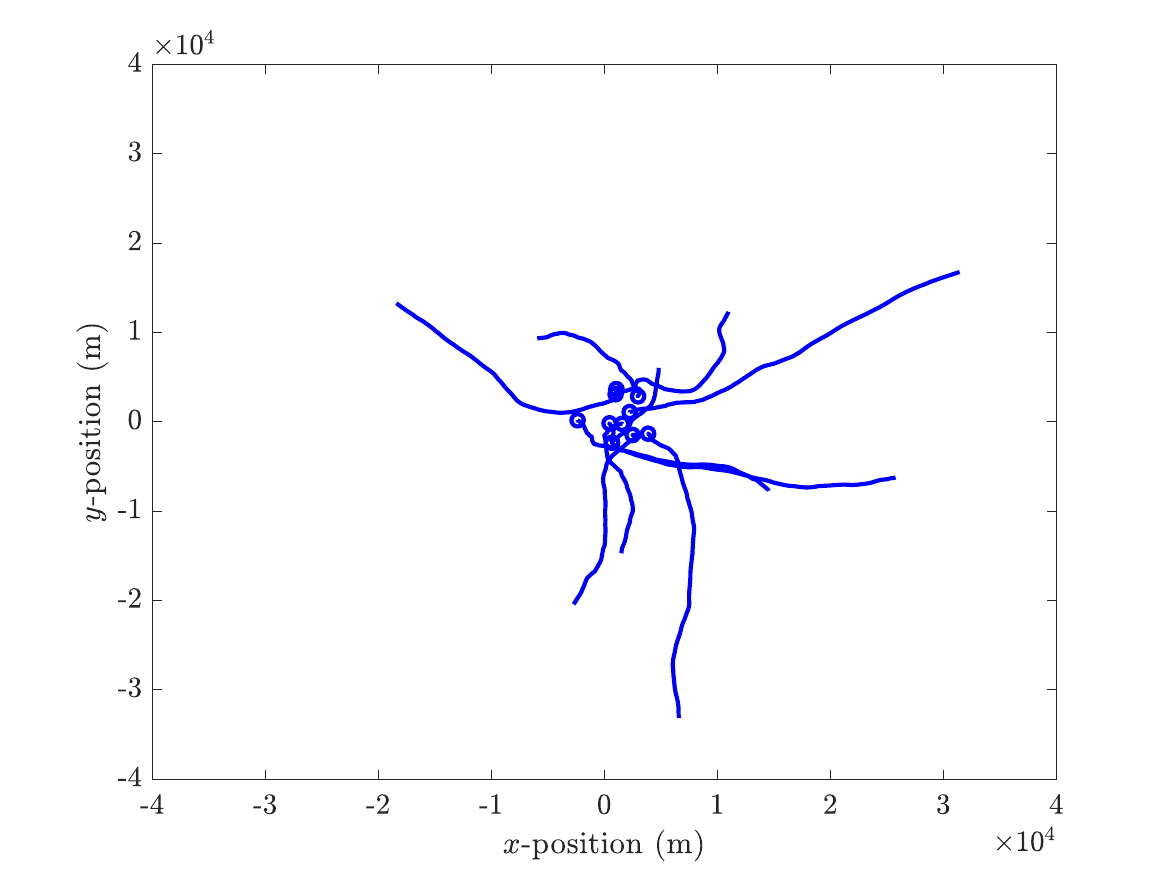}}
	\subfloat[Scenario 3]{\includegraphics[width = 0.33\textwidth]{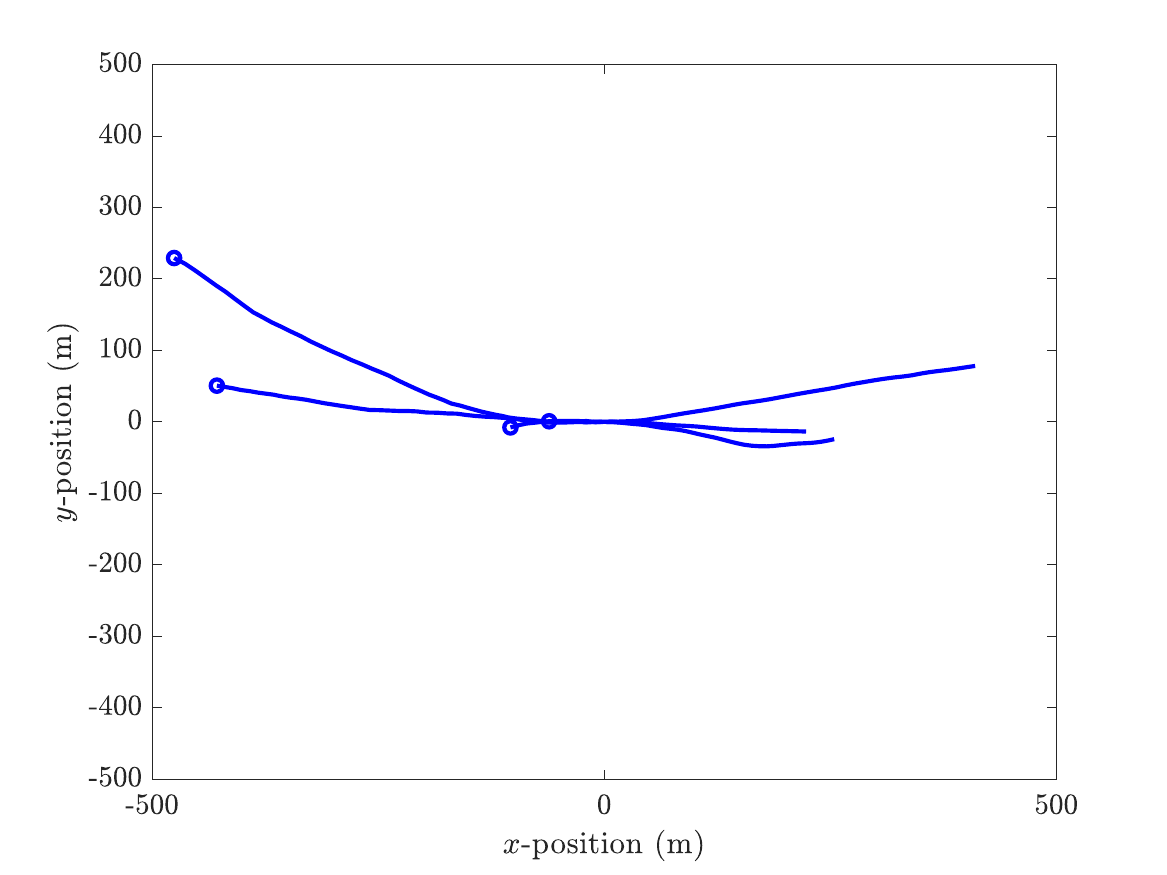}}
	\caption{Ground truth set of trajectories in each of the three scenarios in the simulations. The start time of each trajectory is marked with a circle.}
	\label{fig_groundtruth}
\end{figure*}

\subsubsection{Scenario with a large number of trajectories}
The true trajectories were generated by simulating the models in the surveillance area $[-5\times 10^3,5\times 10^3]\times[-5\times 10^3,5\times 10^3]$. The trajectories vary in length from $2$ to $98$ time steps, with mean/median length $38.9 / 38$.  The birth density had nine components, with mean positions given by the columns in the following matrix,
\begin{align*}
	\begin{bmatrix}
		-\sqrt{2} & \sqrt{2} & -\sqrt{2} & \sqrt{2} & 1 & -1 & 0 & 0 & 0 \\
		-\sqrt{2} & -\sqrt{2} & \sqrt{2} & \sqrt{2} & 0 & 0 & 1 & -1 & 0 
	\end{bmatrix}\cdot \frac{10^4}{4} ,
\end{align*}
zero velocities, and covariances $\diag{[500^2 , 500^2 , 10^2 , 10^2]}$. The expected number of births per time step for each component was set to $1/9$ for all filters. That is, the \mb birth has 9 components, each with a Gaussian single-target density and probability of existence $1/9$. The \ppp birth intensity is a Gaussian mixture, each one with a weight $1/9$. This ensures that the Kullback-Leibler divergence between the \mb birth, used by the \dglmb filters, and the \ppp used by the \pmbm filters \cite{Mahler:2014, Angel20_e}, is minimised. 

\subsubsection{Scenario with long trajectories}
The trajectory birth and death time steps were set deterministically to $10,\ 20, \ldots, 100$ and $990,\ 980, \ldots, 900$, respectively. The trajectories vary in length from $801$ to $981$ time steps. The true trajectories were generated by sampling the initial state from the birth process, and then simulating the motion model in the surveillance area $[-4\times10^4 , 4\times10^4]\times[-4\times10^4 , 4\times10^4]$. The birth density had a single component with mean $[0,0,0,0]^{\tp}$, and covariance $\diag{[4\times 10^6 , 4\times 10^6 , 1 , 1]}$. The expected number of births per time step was set to $0.01$ for the filters.

\subsubsection{Scenario with target coalescence}
As noted in several previous publications \cite{Wil12,GarciaFernandezWGS:2018,xia2023trajectory}, many spatially separated trajectories, or very long trajectories, is not necessarily the most difficult scenario. Therefore, we also simulated a scenario in the surveillance area $[-5\times10^2,5\times10^2]\times[-5\times10^2,5\times10^2]$ with several targets that start separated, move in proximity to each other at the mid-point of the scenario, and then separate again. The birth density had a single component with mean $[0,0,0,0]^{\tp}$, and covariance $\diag{[10^3 , 10^3 , 10^2 , 10^2]}$. The expected number of births per time step was set to $0.04$ for the filters.

\subsubsection{Performance evaluation}
We evaluate the trajectory estimation performance using the generalised optimal subpattern assignment (\gospa) metric for sets of trajectories in \cite{RahmathullahGS16a}, which we refer to as trajectory metric (\trajmetric) in this paper. Specifically, we use its linear programming implementation \cite{RahmathullahGS16a}, which is reviewed in Appendix \ref{app:Trajectory_metric}. It should be noted that the filters based on sets of trajectories directly estimate a set of trajectories, which can be compared with the ground truth set of trajectories with the trajectory metric. In contrast, \dglmb and \tomht require some extra processing. At each time step, \dglmb and \tomht estimate a set of targets with an associated label and track ID, respectively. By linking target state estimates with the same labels or track IDs over time, one can derive an estimated set of trajectories. The labels or track IDs are then discarded after linking.

The parameters of the \trajmetric are: 1-norm base metric, location error cut-off $c=20$, order $p=1$, and switch cost $\gamma=2$. The \trajmetric penalises four types of error: location error (\locationerror), missed target error (\missederror), false target error (\falseerror), and switch error (\switcherror), see \cite{RahmathullahGS16a} for metric definitions and further details. The \trajmetric error is computed at each time step, and the result is normalised by the time step. This enables a comparison of how the trajectory estimation error changes over time.

\begin{table}
	\caption{Simulation parameters}
	\label{tab:simulation_parameters}
	\centering
	\begin{tabular}{c | c c c c c c c}
		Scenario & $K$ & $|\settraj_{\timeseq{0}{K}}|$ & $\sigma_v$ & $\sigma_{r}$ & $P^{\rm S}$ & $P^{\rm D}$ & $\mu^{\rm FA}$ \\
		\hline
		1 & $100$ & $100$ & $1$ & $1$ & 0.99 & 0.9 & $30$ \\
		2 & $1000$ & $10$ & $1$ & $1$ & 0.99 & 0.7 & $60$ \\
		3 & $100$ & $4$ & $0.5$ & $1$ & 0.98 & 0.98 & $10$ \\
	\end{tabular}
\end{table}

\begin{figure*}[!t]
	\centering
	\subfloat[Scenario 1]{\includegraphics[width = 0.33\textwidth]{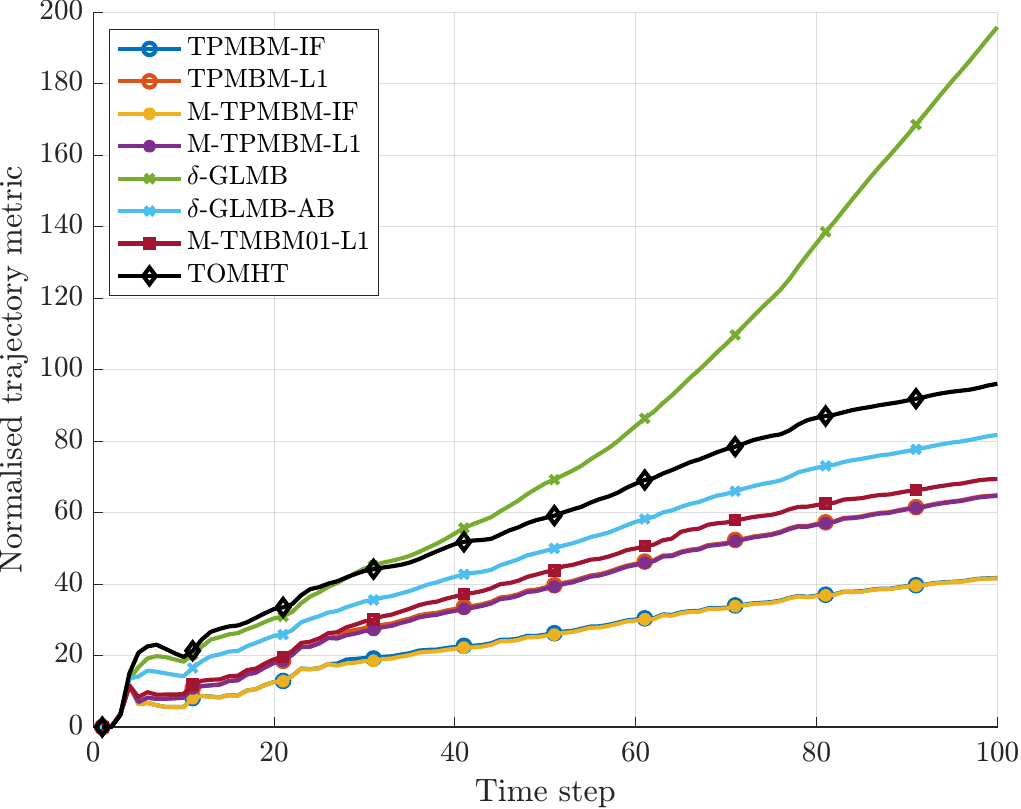}}
	\subfloat[Scenario 2]{\includegraphics[width = 0.33\textwidth]{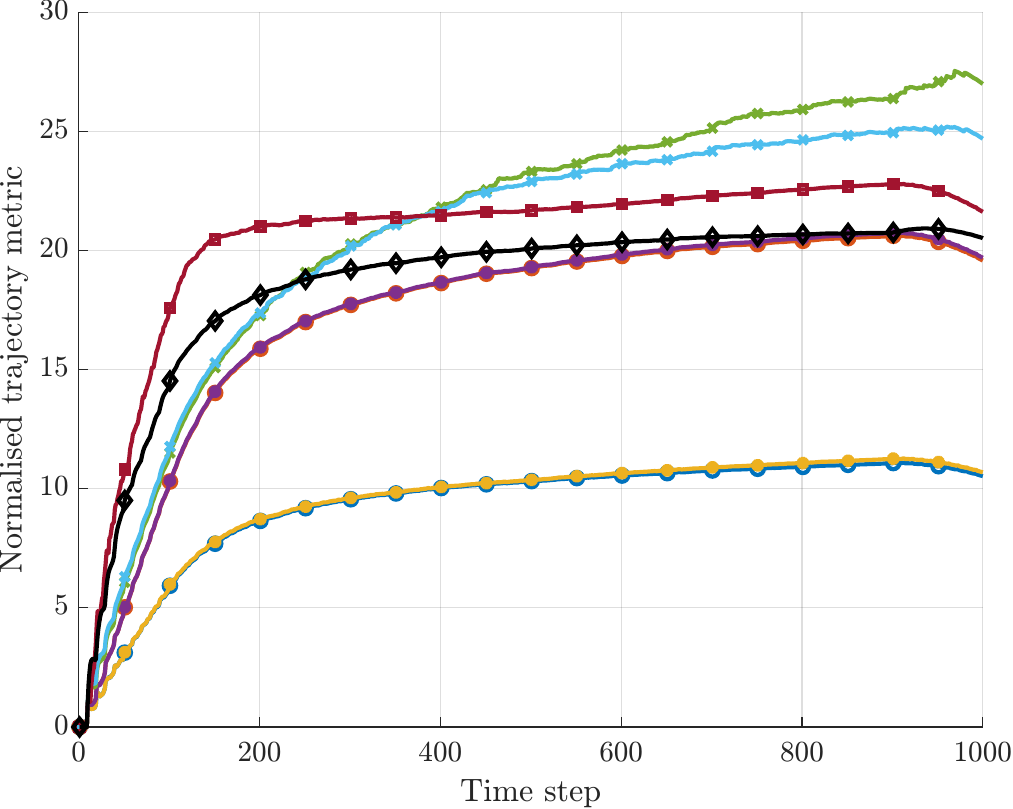}}
	\subfloat[Scenario 3]{\includegraphics[width = 0.33\textwidth]{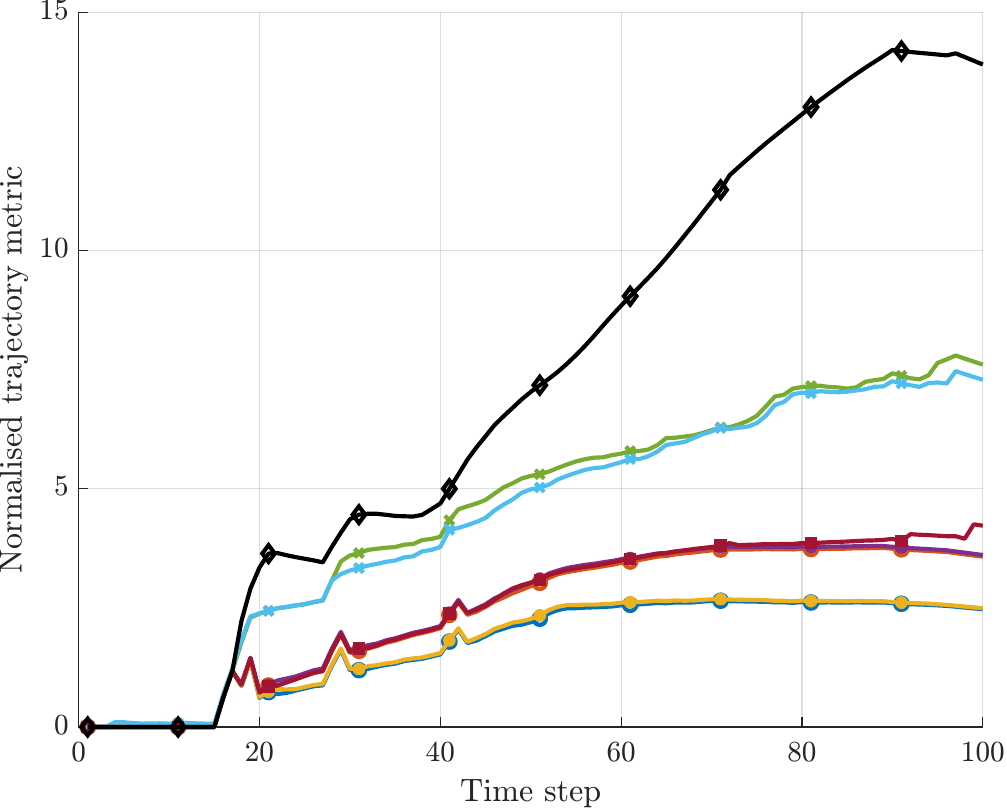}}
	\caption{Trajectory estimation performance for the set of all trajectories in terms of the normalised \trajmetric over time.}
	\label{fig_normalised_traj_metric}
\end{figure*}

\begin{table*}[!t]
	\caption{Trajectory estimation performance for the set of all trajectories in terms of the normalised \trajmetric (summed over all the time steps) and its decomposition}
	\label{tab:normalised_traj_metric}
	\centering
	\resizebox{\textwidth}{!}
	{
		\begin{tabular}{c|ccccc|ccccc|ccccc}
			& \multicolumn{5}{c|}{Scenario 1} & \multicolumn{5}{c|}{Scenario 2} & \multicolumn{5}{c}{Scenario 3} \\ \hline
			Filter & TM   & LE   & ME   & FE   & SE  & TM   & LE   & ME   & FE   & SE  & TM   & LE   & ME   & FE  & SE  \\ \hline
			\pmbmif	 &   2514.6   &   2159.4   &   339.6   &   15.1   &   0.5  &  \underline{9351.0}    &   9118.5   &   231.1   &   0.7   &  0.7   &   \underline{176.2}   &   160.7   &  10.3    &  0   &  5.2   \\
			\pmbmLone	   &  3796.0    &   3362.6   &   378.7   &   54.2   &   0.5  &   17323.7   &   16834.3   &    359.5  &   129.1   &  0.7   &   241.9   &  225.5    &   10.3   &  0   &   6.1  \\
			\multiscanpmbmif	   &   \underline{2490.4}   &   2162.4   &   312.5   &   15.2   &  0.2   &  9443.8    &   9111.6   &   330.9   &   0.6   &  0.7   &  179.8    &   160.7   &  13.5    &  0   &  5.6   \\
			\multiscanpmbmLone	   &   3717.5   &   3373.1   &   287.1   &   56.8   &  0.5   &   17387.0   &  16795.9    &   460.4   &   130.1   &  0.7   &   246.8   &  226.7    &   13.6   &  0.2   &  6.4   \\
			\dglmb	   &   8164.4   &   2841.0   &   3462.8   &   1900.7   &  52.0    &  21046.1    &   15383.4   &  5739.5   &   42.4   & 2.3     &  453.0    &  209.1   &  248.3  & 2.1 &  12.3\\
			\dglmbab	   &   4870.8   &  3158.1    &   1414.7   &   310.0   &   7.8  &   20474.4   &   15650.7   &   4871.5   &   136.0   &   9.3  &   438.9   &   212.5   &   223.9   &  7.4   &  13.0   \\
			\multiscanmbmolLone	   &   4141.8   &   3306.3   &   655.9   &   178.7   &  0.9   &  20545.0    &  16644.3    &   1200.0   &  2696.9    &   3.9  &   250.2   &   225.9   &   15.3   &  2.8   &    6.2 \\
			\tomht & 5861.1 & 2970.7 & 2432.3 & 457.2 & 0.9 & 18608.0 & 16913.6 & 1548.4 & 145.9 & 0.1 & 731.8 & 283.8 & 370.9 & 66.0 & 11.0
		\end{tabular}
	}
\end{table*}

\subsection{Parameters of the filters}
All the implementations used gating with gate probability $0.999$. In all the \tpmbm filters, we prune Bernoulli components with probability of existence smaller than $10^{-5}$ and Gaussian components in the \ppp intensity with weight smaller than $10^{-5}$. In addition, we do not update single trajectory hypotheses whose probability of being alive at current time step is smaller than $10^{-5}$. In \pmbmif, \pmbmLone and the three \dglmb filters, we prune global hypotheses with weight smaller than $10^{-5}$, and their number is capped to $10^3$. In \pmbmif, \pmbmLone, \dglmb and \dglmbab, for global hypothesis $a$ in the prediction step, the $M=\lceil  10^3w^a \rceil$ best global hypotheses are found using Murty's algorithm. 

In \multiscanpmbmif and \multiscanpmbmLone, we consider the data association problem over $3$ scans of measurements ($N=2$ scan pruning \cite{blackman2004multiple}), and the resulting 4D assignment problem is solved using dual decomposition with convergence threshold $0.01$. Moreover, we limit the number of single trajectory hypotheses from the same Bernoulli component to 30. In \multiscanmbmolLone, we also consider the 3-scan data association problem, and the number of iterations in multi-scan Gibbs sampling is set to $\lceil  3\times10^4w^a \rceil$ for global hypothesis $a$ in the prediction step. In \dglmbab, parameter $r_{B,max}$ of adaptive birth \cite{ReuterVVD:2014} is set to the expected number of births per time step in each scenario. For the \dglmbab user-defined single target density, the mean state is the position indicated by the measurement with zero velocity, and the covariance matrix is $\diag{[10^2 , 10^2 , 10^2 , 10^2]}$. For extracting the estimates, \tpmbm filters use the estimator described in Section \ref{sec_estimator} with $r^e = 0.5$, and \dglmb filters use the estimator in \cite{VoVH:2017}. 

As for \tomht, we set the maximum number of hypotheses to 100 and the maximum number of track branches to 50. In addition, we use $N$-scan pruning with maximum 3 scans and the minimum probability required to keep a branch is $10^{-4}$. The performance of \tomht is sensitive w.r.t. the choice of track confirmation and deletion thresholds which are tailored to each scenario to obtain good performance. Specifically, the confirmation and deletion thresholds  are $(10,-8.2)$ for Scenario 1, $(10,-11.7)$ for Scenario 2, and $(20,-7)$ for Scenario 3.

\subsection{Results for estimating the set of all trajectories}

\begin{table}[!t]
	\caption{Average runtime per time step (in seconds)}
	\label{tab:running_time}
	\centering
	\begin{tabular}{c|ccc}
		Filter & Scenario 1 & Scenario 2 & Scenario 3 \\ \hline
		\pmbmif	   &     0.25       &     0.22       &      0.15      \\
		\pmbmLone	   &     0.24       &     0.21       &      0.15      \\
		\multiscanpmbmif	   &     0.51       &     0.23       &      0.03      \\
		\multiscanpmbmLone	   &     0.50       &     0.22       &      0.02      \\
		\tpmbmaif &     0.14       &     0.06       &      0.09      \\
		\tpmbmalscan &     0.14       &     0.05       &      0.08      \\
		\mtpmbmaif &     0.43       &     0.17       &      0.02      \\
		\mtpmbmalscan &     0.42       &     0.15       &      0.02      \\
		\dglmb	   	   &     1.91       &     0.11       &      0.09      \\
		\dglmbab	   &     6.35       &     2.13       &      0.98      \\
		\multiscanmbmolLone  &     11.89       &    2.13        &     1.44      \\
		\tomht & 0.63 & 0.29 & 0.48
	\end{tabular}
\end{table}

We analyse the performance of the algorithms to track the set of all trajectories using Monte Carlo simulation with 100 runs. The trajectory estimation performance is evaluated in terms of the normalised \trajmetric error over time, shown in Fig. \ref{fig_normalised_traj_metric}. In addition, the \trajmetric error and its decomposition into location error (\locationerror), missed target error (\missederror), false target error (\falseerror), and switch error (\switcherror) summed over all time steps and normalised over the time window is given in Table \ref{tab:normalised_traj_metric}. The average runtime per time step with a \matlab implementation on an Apple M2 Pro is provided in Table \ref{tab:running_time}, which also includes the runtimes of the filters that track the set of alive trajectories, whose results will be analysed in the next section. In addition,  the number of targets at each time step and the \trajmetric evaluated at the final time step are given in Appendix \ref{sec:experiments_appendix}.

The results show that for all three scenarios, the \tpmbm filters outperform the two \dglmb filters, \multiscanmbmolLone and \tomht. In particular, the two \tpmbm filters with Gaussian information form outperform their counterparts with Gaussian $L=1$ scan approximation by providing smoothed trajectory estimates, with slightly increased runtime. \tpmbm filters are generally faster than  \dglmb filter implementations, mainly due to a more efficient structure of the global hypotheses \cite{GarciaFernandezWGS:2018}. \tpmbm filters are also faster than the \tomht implementation. 

\tomht uses a track management algorithm that does not fully take into account the target birth and death model information. In Scenarios 1 and 2, \tomht performs better than \dglmb, but worse than the \tpmbm filters. In Scenario 3, which deals with a challenging data association problem, \tomht performs worse than the rest of the filters, especially when targets get in close proximity.

Compared to \tpmbm filters using Murty's algorithm, multi-scan \tpmbm filters using dual decomposition present similar estimation performance in Scenario 1 and 3, worse performance in Scenario 2, and much lower runtime in Scenario 3. By using an adaptive birth model, the performance of \dglmb is improved in all three scenarios, in particular Scenario 1, at the price of significantly increased runtime. \multiscanmbmolLone outperforms \dglmbab by 1) keeping the full trajectory information and 2) solving a more complex data association problem using Gibbs sampling, but it also has the highest runtime. Moreover, we can observe that \multiscanmbmolLone has difficulty initiating new trajectories in Scenario 2, where the expected number of births per time step is low.

In general, Scenario 1 and 2 illustrate that the \tpmbm filters are not limited to handling a high number of relatively short trajectories, but are capable of handling long trajectories. Scenario 3 is challenging due to the complex data association problem when the targets are in proximity for several consecutive time steps, and the \tpmbm filters produce reasonable trajectory estimates. However, the \dglmb filters have a higher  \missederror and \falseerror, and are susceptible to unrealistic track switching given the motion model, see \cite[Fig. 3]{GarciaFernandezSM:2019} and \cite[Fig. 2]{GranstromSXGFW:2018}.

\subsection{Results for estimating the set of alive trajectories}

In this section, we use the same scenarios and parameters as in the previous section, but we analyse the performance to estimate the set of alive trajectories at each time step. In this section, we therefore also include the considered \tpmbm filters that directly approximate the posterior of the set of alive trajectories, namely, \tpmbmaif, \tpmbmalscan, \mtpmbmaif and \mtpmbmalscan. In addition, to estimate the set of alive trajectories we have also tested the previously considered filters, including the \tpmbm filters that approximate the posterior over the set of all trajectories. For these filters, the estimated set of alive trajectories is obtained by extracting the alive trajectories from the estimated set of all trajectories. The average runtime per time step was shown in Table \ref{tab:running_time}, where we can see that the \tpmbm filters for alive trajectories are faster than the \tpmbm filters for all trajectories.

The normalised \trajmetric error over time for the set of alive trajectories is shown in Fig. \ref{fig_normalised_traj_metric_alive}. Also, the \trajmetric error and its decomposition across all time steps is provided in Table \ref{tab:normalised_traj_metric_alive}. We can draw the similar main conclusion as before. \tpmbm filters generally work better than the other algorithms. Specifically, the \multiscanpmbmif filter is the one that provides the most accurate trajectories in the three scenarios.  The \tpmbm filters that directly estimate alive trajectories have a very similar performance to the corresponding \tpmbm filters that track all trajectories but only report the alive trajectories.

\begin{figure*}[!t]
	\centering
	\subfloat[Scenario 1]{\includegraphics[width = 0.33\textwidth]{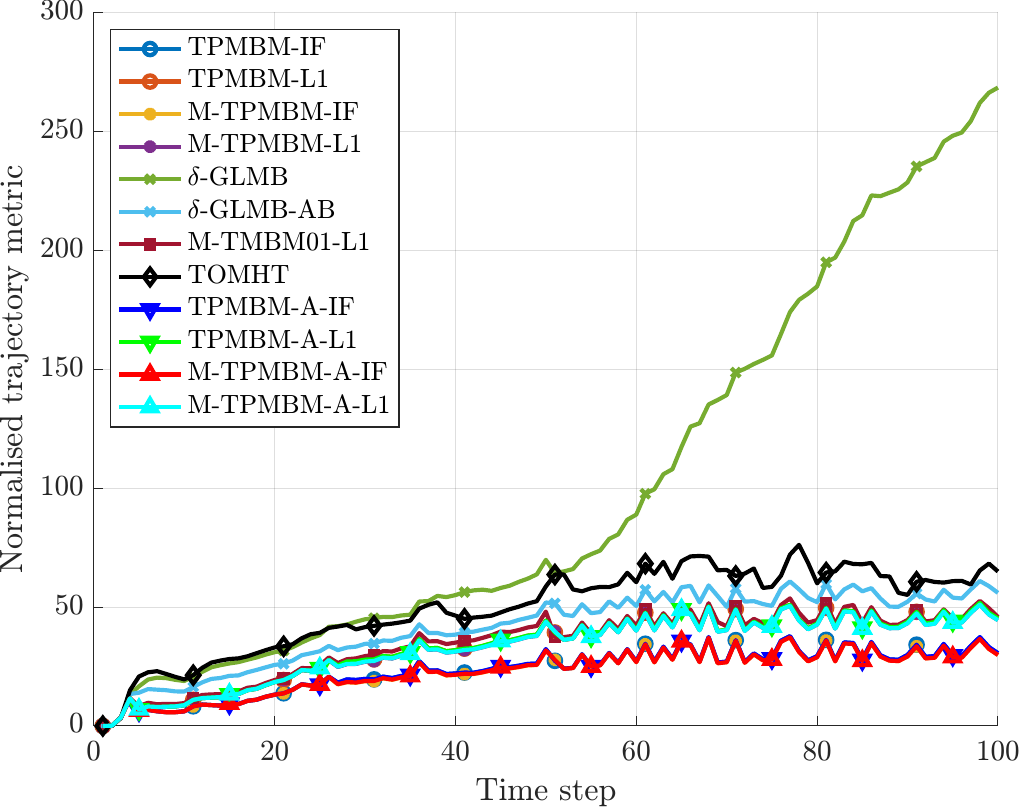}}
	\subfloat[Scenario 2]{\includegraphics[width = 0.33\textwidth]{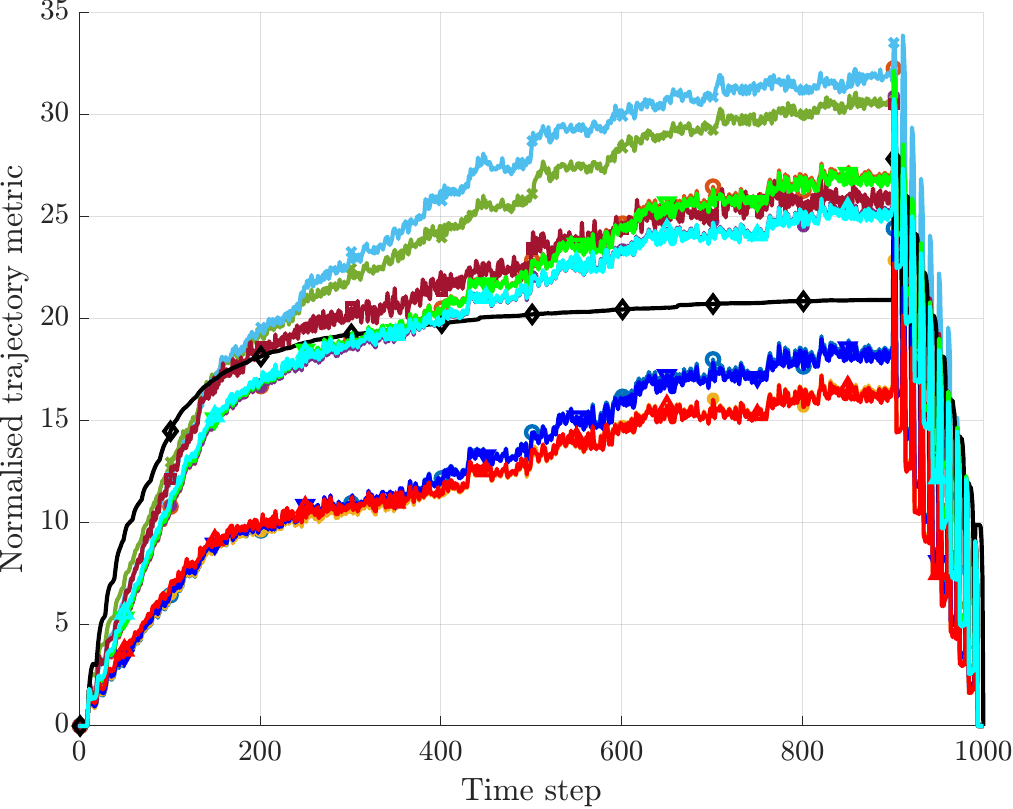}}
	\subfloat[Scenario 3]{\includegraphics[width = 0.33\textwidth]{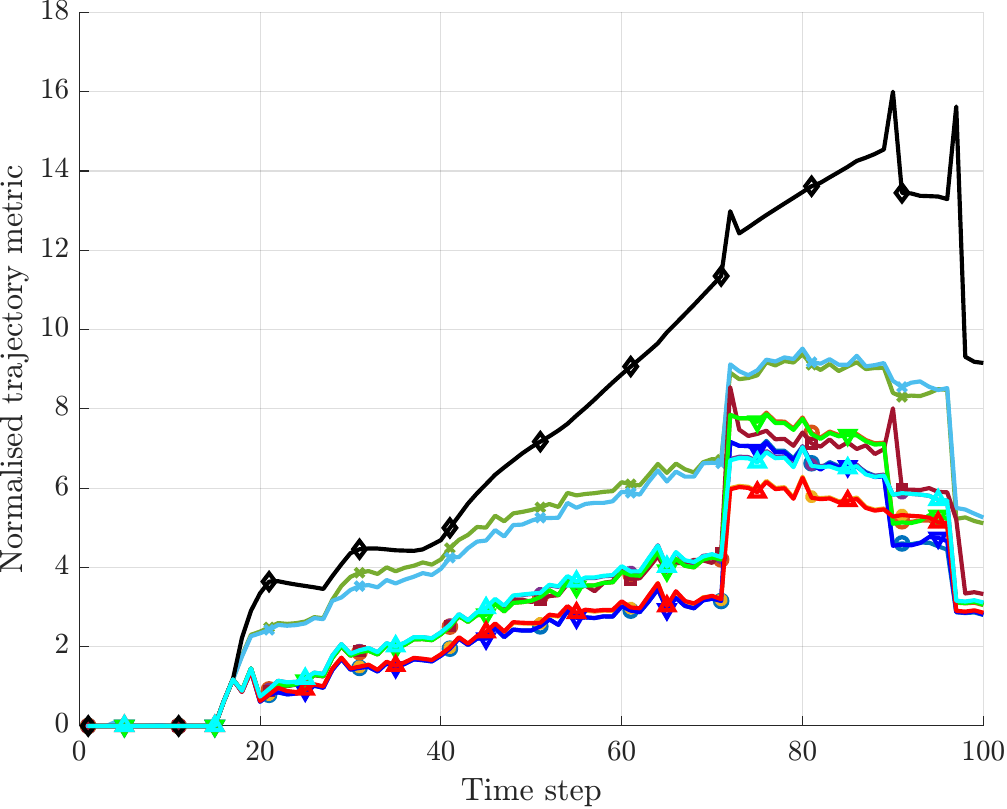}}
	\caption{Trajectory estimation performance for the set of alive trajectories in terms of the normalised \trajmetric over time}
	\label{fig_normalised_traj_metric_alive}
\end{figure*}

\begin{table*}[!t]
	\caption{Trajectory estimation performance for the set of alive trajectories in terms of the normalised \trajmetric \\ (summed over all the time steps) and its decomposition}
	\label{tab:normalised_traj_metric_alive}
	\centering
	\resizebox{\textwidth}{!}
	{
		\begin{tabular}{c|ccccc|ccccc|ccccc}
			& \multicolumn{5}{c|}{Scenario 1} & \multicolumn{5}{c|}{Scenario 2} & \multicolumn{5}{c}{Scenario 3} \\ \hline
			Filter & TM   & LE   & ME   & FE   & SE  & TM   & LE   & ME   & FE   & SE  & TM   & LE   & ME   & FE  & SE  \\ \hline
			\pmbmif	 &   2334.7   &   1695.1   &   453.2   &   186.3   &  0   &    12434.5  &   8098.0   &   4111.9   &   224.5   &  0   &   280.7   &  144.1    &   57.7   &  74.4   &  3.8   \\
			\pmbmLone	   &   3329.8   &   2630.7   &  483.0    &   216.1   &  0   &  19472.5    &  14969.9    &   4195.0   &   307.6   &  0   &   331.2   &  194.5    &   57.7   &  74.5   &  4.5   \\
			\multiscanpmbmif	   &   \underline{2297.0}   &   1700.3   &   409.7   &   186.9  &  0   &   \underline{11569.5}   &   8219.0   &   3117.8   &   232.6   &  0   &   \underline{272.4}   &   145.0   &  52.3    &  70.9   &  4.2   \\
			\multiscanpmbmLone	   &   3294.1   &   2634.6   &   441.2   &   218.4   &  0   &    18705.8  &  15176.1    &   3207.5   &   322.3   &  0   &   325.1   &   197.1   &   52.3   &  70.9   &  4.7   \\
			\dglmb	   &   10122.3   &   2188.1   &   2960.1   &   4972.2   &   13.5  &   22381.4   &   13764.7   &   8473.8   &   222.1   &  0   &   499.2   &  190.4    &   244.7   &  69.6   &  8.7  \\
			\dglmbab	   &   4149.9   &   2442.5   &   1427.3   &   284.5   &   0.8  &   23495.4   &   13473.4   &   9889.5   &   218.5   &  0.1   &   491.8   &   196.1   &   225.4   &  75.9   &  9.0   \\
			\multiscanmbmolLone	   &   3483.4   &   2600.7   &   599.2  &   283.5   &  0   &   19873.2   &   14900.5   &   4544.9   &   427.8   &  0   &   336.9   &   198.3   &   53.3 & 80.7   &  4.6   \\
			\tomht & 4963.7 & 2326.9 & 2055.3 & 580.8 & 0.6 & 18353.9 & 15846.5 & 1610.2 & 897.2 & 0 & 728.1 & 252.6 & 305.7 & 170.0 & 8.8 \\
			\tpmbmaif	   &   2334.7   &   1695.2  &   453.2   &   186.3   &  0   &   12408.3   &   8099.3   &   4084.2   &   224.8   &  0   &   280.1   &   143.9   &   58.0 & 74.4   &  3.8   \\
			\tpmbmalscan	   &   3329.5   &   2630.6   &   482.9   &   216.0   &  0   &   19449.8   &   14974.8   &   4167.2   &   307.8   &  0   &   331.2   &   194.3   &   58.0 & 74.4   &  4.5   \\
			\mtpmbmaif	   &   2299.3   &   1700.0   &   412.5   &   186.8   &  0   &   11651.7   &  8199.0 &  3222.0   &   230.7   &  0  &   272.7   &   144.8   &   53.3 & 70.4   &  4.2   \\
			\mtpmbmalscan	   &   3295.6   &   2634.2   &   443.5   &   217.9   &  0   &   18761.4   &   15135.3   &   3308.7   &   317.4   &  0   &   325.1   &   196.6   &   53.3 & 70.4   &  4.7 
		\end{tabular}
	}
\end{table*}


\section{Conclusions}
\label{sec:conclusion}
In this paper we have shown that the \pmbm density for the set of trajectories is conjugate for \mtt with standard point target models. We have presented two \tpmbm filtering recursions for the set of trajectories. The benefits of the proposed \mtt filters compared to relevant \mtt algorithms is shown via simulations. Further, we have showed that, regardless of what time window we consider, the multi-trajectory posterior is \pmbm, a result analogous to the Gaussian distribution in linear/Gaussian systems.

An interesting line of future work is the development of algorithms that make use of the \tpmbm time marginalisation theorem (Theorem \ref{thm:all_to_trajs_in_a2g_alive_in_e2z}) and the \tpmbm for arbitrary time interval theorem (Theorem \ref{thm:density_traj_alpha2gamma_alive_eta2zeta_meas_xi2chi}) to answer certain types of trajectory-related questions in \mtt.

\ifCLASSOPTIONcaptionsoff
  \newpage
\fi



%

\bibliographystyle{IEEEtran}
\bibliography{references,traj}

%









 
\appendices
\begin{center}\LARGE\textbf{SUPPLEMENTAL MATERIAL}\end{center}

\section{Proofs of Theorem~\ref{thm:all_to_trajs_in_a2g_alive_in_e2z} and Theorem~\ref{thm:density_traj_alpha2gamma_alive_eta2zeta_meas_xi2chi}} 
\label{app:pmbm_relations_theorem_proofs}
In this appendix, we first prove Theorem~\ref{thm:all_to_trajs_in_a2g_alive_in_e2z}, and then we prove Theorem~\ref{thm:density_traj_alpha2gamma_alive_eta2zeta_meas_xi2chi}.

\subsection{Preliminaries}
For trajectories $\traj\in\trajStateSpace{\timeseq{0}{k}}$ we define the function
\begin{subequations}
\begin{align}
	\tau_{\timeseq{\alpha}{\gamma}}^{\timeseq{\eta}{\zeta}}(\traj) = &
		\begin{cases}
			\left\{ \left( b,e,\stseq_{\timeseq{b}{e}} \right) \right\} & \timeset{\tb}{\td} \cap \timeset{\eta}{\zeta} \neq \emptyset, \\
			\emptyset & \text{otherwise,}
		\end{cases} \\
	b = & \max(\tb,\alpha), \quad e = \min(\td,\gamma).
\end{align}
\end{subequations}
where $0\leq\alpha\leq\eta\leq\zeta\leq\gamma\leq k$.
Note that a function corresponding to $\tau_{\timeseq{k}{k}}^{\timeseq{k}{k}}(\traj)$ was defined in \cite[Sec. II.A]{GarciaFernandezSM:2019}. For set inputs, like in \cite[Sec. II.A]{GarciaFernandezSM:2019}, we have
\begin{align}
	\tau_{\timeseq{\alpha}{\gamma}}^{\timeseq{\eta}{\zeta}}(\settraj) & = \begin{cases} \bigcup_{\traj\in\settraj} \tau_{\timeseq{\alpha}{\gamma}}^{\timeseq{\eta}{\zeta}}(\traj) & \settraj \neq \emptyset \\
	\emptyset & \settraj = \emptyset
	\end{cases}
\end{align}%
For $\tau_{\timeseq{\alpha}{\gamma}}^{\timeseq{\eta}{\zeta}}(\settraj)$ it holds that $\tau_{\timeseq{\alpha}{\gamma}}^{\timeseq{\eta}{\zeta}}(\settraj^1 \cup \settraj^2) = \tau_{\timeseq{\alpha}{\gamma}}^{\timeseq{\eta}{\zeta}}(\settraj^1) \cup \tau_{\timeseq{\alpha}{\gamma}}^{\timeseq{\eta}{\zeta}}(\settraj^2) $.

The multi-target Dirac delta is defined as \cite{Mahler:2014}
\begin{align}
	\delta_{\setX'}(\setX) \triangleq 
		\begin{cases}
			0 & |\setX| \neq |\setX'| \\
			1 & \setX = \setX' = \emptyset \\
			\displaystyle\sum_{\sigma\in\Gamma_{n}} \prod_{i=1}^{n} \delta_{\traj_{\sigma_{i}}'}(\traj_i) & 
			\begin{cases}
				\setX = \left\{\traj_{i}\right\}_{i=1}^{n} \\
				\setX' =  \left\{\traj_{i}'\right\}_{i=1}^{n}
			\end{cases}
		\end{cases}
\end{align}
From \cite[Thm. 11]{GarciaFernandezSM:2019} we know that using the multi-target Dirac delta function and $\tau_{\timeseq{\alpha}{\gamma}}^{\timeseq{\eta}{\zeta}}(\settraj)$ we can formulate a transition density from the set $\settraj_{\timeseq{0}{k}}$ to the set $\settraj_{\timeseq{\alpha}{\gamma}}^{\timeseq{\eta}{\zeta}}=\tau_{\timeseq{\alpha}{\gamma}}^{\timeseq{\eta}{\zeta}}(\settraj_{\timeseq{0}{k}})$ as
\begin{align}
	\delta_{\tau_{\timeseq{\alpha}{\gamma}}^{\timeseq{\eta}{\zeta}}(\settraj_{\timeseq{0}{k}})}\left(\settraj_{\timeseq{\alpha}{\gamma}}^{\timeseq{\eta}{\zeta}}\right).
\end{align}
Lemma~\ref{lem:pmbm_function_integration_general} shows how to find the density of $\tau_{\timeseq{\alpha}{\gamma}}^{\timeseq{\eta}{\zeta}}(\settraj)$.

\begin{lemma}\label{lem:pmbm_function_integration_general}
	Suppose $f(\settraj)$ is a \pmbm density parameterised as in Section~\ref{sec:pmbm_trackers}. Then, the density of $\setY=\tau_{\timeseq{\alpha}{\gamma}}^{\timeseq{\eta}{\zeta}}(\settraj)$ is
	\begin{subequations}
	\begin{align}
		g(\setY) & = \int \delta_{\tau_{\timeseq{\alpha}{\gamma}}^{\timeseq{\eta}{\zeta}}(\settraj)}(\setY) f(\settraj) \delta \settraj  \\
		&= \sum_{\substack{\left(\uplus_{i\in\trackTable} \setY^{i} \right) \uplus \setY^{\rm u} = \setY}} \int \delta_{\tau_{\timeseq{\alpha}{\gamma}}^{\timeseq{\eta}{\zeta}}\left( \settraj^{\rm u}\right)}(\setY^{\rm u})  f^{\rm u}(\settraj^{\rm u}) \delta \settraj^{\rm u} \nonumber \\
		& \quad \times \sum_{\assoc\in\assocspace} w_{\assoc} \prod_{i\in\mathbb{T}} \int \delta_{\tau_{\timeseq{\alpha}{\gamma}}^{\timeseq{\eta}{\zeta}}\left( \settraj^i\right)}(\setY^i)  f^{i,a^i}(\settraj^i) \delta \settraj^i 
	\end{align}
	\end{subequations}
\end{lemma}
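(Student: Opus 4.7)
The plan is to unpack the \pmbm structure of $f(\settraj)$, propagate it through the pushforward induced by $\tau_{\timeseq{\alpha}{\gamma}}^{\timeseq{\eta}{\zeta}}$, and show that the \ppp component and each Bernoulli component transform independently. First I would substitute the \pmbm form \eqref{eq:PMBMdensityDefinition} into $g(\setY) = \int \delta_{\tau_{\timeseq{\alpha}{\gamma}}^{\timeseq{\eta}{\zeta}}(\settraj)}(\setY) f(\settraj) \delta \settraj$, writing $f(\settraj)$ as a set convolution of the \ppp density $f^{\rm u}(\settraj^{\rm u})$ and the \mbm density $f^{\rm d}(\settraj^{\rm d})$, with $\settraj^{\rm d}$ further decomposed via $\uplus_{i\in\trackTable} \settraj^i = \settraj^{\rm d}$ as in \eqref{eq:MBMdensityDefinition}. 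The definition of set integration~\eqref{eq:SetTrajIntegral} lets me split the single set integral over $\settraj$ into nested set integrals over $\settraj^{\rm u}$ and each $\settraj^i$.

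The central step is to factor the multi-target delta. Using the preliminary identity $\tau_{\timeseq{\alpha}{\gamma}}^{\timeseq{\eta}{\zeta}}(\settraj^1 \cup \settraj^2) = \tau_{\timeseq{\alpha}{\gamma}}^{\timeseq{\eta}{\zeta}}(\settraj^1) \cup \tau_{\timeseq{\alpha}{\gamma}}^{\timeseq{\eta}{\zeta}}(\settraj^2)$ together with the standard factorisation of the multi-target Dirac delta over disjoint unions,
\begin{align*}
\delta_{\setX^1 \cup \setX^2}(\setY) = \sum_{\setY = \setY^1 \uplus \setY^2} \delta_{\setX^1}(\setY^1)\, \delta_{\setX^2}(\setY^2),
\end{align*}
valid when $\setX^1$ and $\setX^2$ are disjoint, I would iterate across the decomposition $\settraj = \settraj^{\rm u} \uplus \biguplus_{i\in\trackTable} \settraj^i$ to obtain
\begin{align*}
\delta_{\tau(\settraj)}(\setY) = \sum_{\setY = \setY^{\rm u} \uplus \biguplus_i \setY^i} \delta_{\tau(\settraj^{\rm u})}(\setY^{\rm u}) \prod_{i\in\trackTable} \delta_{\tau(\settraj^i)}(\setY^i).
\end{align*}
Substituting this back, interchanging the outer sum over partitions of $\setY$ with the set integrals and with the sum over global hypotheses $\assoc \in \assocspace$, and noting that after the factorisation the $\settraj^{\rm u}$-integral contains only $f^{\rm u}$ and $\delta_{\tau(\settraj^{\rm u})}(\setY^{\rm u})$ while each $\settraj^i$-integral contains only $f^{i,\assoc^i}$ and $\delta_{\tau(\settraj^i)}(\setY^i)$, produces the displayed right-hand side.

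The main obstacle will be the combinatorial bookkeeping needed to justify these rearrangements: aligning the partition of $\setY$ into $\setY^{\rm u} \uplus \biguplus_i \setY^i$ with the three-level partition of $\settraj$, and correctly handling trajectories that $\tau_{\timeseq{\alpha}{\gamma}}^{\timeseq{\eta}{\zeta}}$ sends to $\emptyset$. Such trajectories contribute factors $\delta_\emptyset(\setY^i)$ that force the corresponding $\setY^i$ to be empty, which is consistent with and ultimately gives rise to the restriction sets $\indexSetJ_{}^{{\rm u},\timeseq{\eta}{\zeta}}$ and $\indexSetJ_{}^{i,\assoc^i,\timeseq{\eta}{\zeta}}$ appearing in \eqref{eq:PoissonCompsAlive} and \eqref{eq:BernoulliCompsAlive}. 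Once the disjointness hypothesis needed for the multi-target delta factorisation is verified and Fubini is applied to swap sums and set integrals, collecting the \ppp integral together with $\sum_{\assoc} w_\assoc$ times the product of Bernoulli integrals yields precisely the claimed expression for $g(\setY)$.
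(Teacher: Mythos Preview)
Your argument is correct: unfolding the \pmbm convolution structure, using $\tau_{\timeseq{\alpha}{\gamma}}^{\timeseq{\eta}{\zeta}}(\settraj^1\cup\settraj^2)=\tau_{\timeseq{\alpha}{\gamma}}^{\timeseq{\eta}{\zeta}}(\settraj^1)\cup\tau_{\timeseq{\alpha}{\gamma}}^{\timeseq{\eta}{\zeta}}(\settraj^2)$ together with the multi-object Dirac factorisation, and then separating the set integrals, is exactly the right mechanism. The paper does not carry this out explicitly; its proof is the one-line remark that the result follows from \cite[Lem.~2]{Williams:2015} and \cite[Thm.~11]{GarciaFernandezSM:2019}. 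The latter supplies the transition density $\delta_{\tau(\settraj)}(\setY)$, and the former is the general statement that a set convolution of independent \rfs{} densities, when propagated through a Markov transition kernel, yields the set convolution of the individually propagated densities. Your proposal is precisely a self-contained derivation of that lemma specialised to the Dirac kernel, so the two routes coincide in substance; yours is more elementary and does not require the reader to consult the cited references, while the paper's is terser. One small comment: the remark about the index sets $\indexSetJ_{}^{{\rm u},\timeseq{\eta}{\zeta}}$ and $\indexSetJ_{}^{i,\assoc^i,\timeseq{\eta}{\zeta}}$ belongs to the downstream Lemmas~\ref{lem:all_Bernoulli_to_Bernoulli_in_time_interval_general} and~\ref{lem:all_Poisson_to_Poisson_in_time_interval_general}, not to this lemma, whose conclusion stops at the factored integral expression.
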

\begin{proof}
	Follows from \cite[Lem. 2]{Williams:2015} and \cite[Thm. 11]{GarciaFernandezSM:2019}.
\end{proof}
Note that in Lemma~\ref{lem:pmbm_function_integration_general} we have integrals with the transition density and \ppp densities and Bernoulli densities. The following Lemmas provide the solutions to these integrals.

\begin{lemma}\label{lem:all_Bernoulli_to_Bernoulli_in_time_interval_general}
	Let $f(\setX_{0:k})$ be a Bernoulli density, with parameters $r$ and $f(\traj)$. The trajectory \rfs density
	\begin{align}
		f(\settraj_{\timeseq{\alpha}{\gamma}}^{\timeseq{\eta}{\zeta}}) = \int \delta_{ \tau_{\timeseq{\alpha}{\gamma}}^{\timeseq{\eta}{\zeta}}(\settraj_{\timeseq{0}{k}}) }( \settraj_{\timeseq{\alpha}{\gamma}}^{\timeseq{\eta}{\zeta}} )f(\settraj_{\timeseq{0}{k}})\delta\settraj_{\timeseq{0}{k}}
	\end{align}
	is a trajectory Bernoulli density with parameters
	\begin{subequations}
	\begin{align}
		\tilde{r} = & r \Pr\left( \timeset{\tb}{\td}\cap \timeset{\eta}{\zeta} \neq\emptyset \right)  \\
		\tilde{f}(\traj) = & \frac{\displaystyle\sum_{\substack{\tb,\td : \\ \timeset{\tb}{\td}\cap \timeset{\eta}{\zeta} \neq\emptyset}} \int p(\stseq_{\timeseq{\tb}{\td}} | \tb,\td) P(\tb,\td) \diff \stseq_{\timeseq{\tb}{\td}\backslash\timeseq{b}{e}}}{ \Pr\left( \timeset{\tb}{\td}\cap \timeset{\eta}{\zeta} \neq\emptyset \right)}
	\end{align}
	\end{subequations}
	where
	\begin{align}
		\Pr\left(\timeset{\tb}{\td}\cap \timeset{\eta}{\zeta} \neq\emptyset \right) = \sum_{\substack{\tb,\td : \\ \timeset{\tb}{\td}\cap \timeset{\eta}{\zeta} \neq\emptyset}}  P(\tb,\td).
	\end{align}
	Specifically, for
	\begin{align}
		 f(\traj) = \sum_{j\in\indexSetJ} \weight^{j} f^{j} (\traj ; \trajdensityparams^j),
	\end{align}
	we get
	\begin{subequations}
	\begin{align}
		& \tilde{r} = r \sum_{j\in\indexSetJ^{\timeseq{\eta}{\zeta}}} \weight^{j}  \\
		& \tilde{f}(\traj) = \frac{\sum_{j\in\indexSetJ^{\timeseq{\eta}{\zeta}}} \weight^{j}  \tilde{f}^{j} (\traj ; \tilde{\trajdensityparams}^j) }{\sum_{j\in\indexSetJ^{\timeseq{\eta}{\zeta}}} \weight^{j}} \\
		& \tilde{\trajdensityparams}^j = \left(\tilde{b}^j, \tilde{e}^j, \tilde{p}^j(\stseq_{\timeseq{\tilde{b}^j}{\tilde{e}^j}})  \right)  \\
		& \tilde{b}^j = \max(b^j,\alpha), \quad \tilde{e}^j = \min(e^j,\gamma) \\
		& \tilde{p}^j(\stseq_{\timeseq{\tilde{b}^j}{\tilde{e}^j}})  = \int p^j(\stseq_{\timeseq{b^j}{e^j}}) \diff \stseq_{\timeseq{b^j}{e^j}\backslash\timeseq{\tilde{b}^j}{\tilde{e}^j}}
	\end{align}
	\end{subequations}
	where $\indexSetJ^{\timeseq{\eta}{\zeta}} = \left\{ j \ : \ \timeset{b^j}{e^j}\cap \timeset{\eta}{\zeta} \neq\emptyset \right\}$.
\end{lemma}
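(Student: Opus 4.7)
The plan is to evaluate the set integral defining $f(\settraj_{\timeseq{\alpha}{\gamma}}^{\timeseq{\eta}{\zeta}})$ by using the two-term Bernoulli expansion \eqref{eq:BernoulliDensityDefinition} of $f(\settraj_{\timeseq{0}{k}})$ and the definition \eqref{eq:SetTrajIntegral} of trajectory set integration. Since a Bernoulli is concentrated on sets of cardinality zero or one, and $\tau_{\timeseq{\alpha}{\gamma}}^{\timeseq{\eta}{\zeta}}$ cannot increase cardinality, only $|\settraj_{\timeseq{\alpha}{\gamma}}^{\timeseq{\eta}{\zeta}}|\in\{0,1\}$ contributes. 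Writing the trajectory density as $f(\traj)=p(\stseq_{\timeseq{\tb}{\td}}|\tb,\td)P(\tb,\td)$ and using the integration rule \eqref{eq:trajectoryDensityIntegration}, the singleton output case becomes a sum over $(\tb,\td)\in\existencespace{\timeseq{0}{k}}$ and an integral over $\stseq_{\timeseq{\tb}{\td}}$.

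The key observation is that $\tau_{\timeseq{\alpha}{\gamma}}^{\timeseq{\eta}{\zeta}}(\{\traj\})=\emptyset$ exactly when $\timeset{\tb}{\td}\cap\timeset{\eta}{\zeta}=\emptyset$; in those cases the contribution lands in the $\settraj_{\timeseq{\alpha}{\gamma}}^{\timeseq{\eta}{\zeta}}=\emptyset$ branch. Summing over all $(\tb,\td)$ with $\timeset{\tb}{\td}\cap\timeset{\eta}{\zeta}=\emptyset$ gives a contribution of $r\bigl(1-\Pr(\timeset{\tb}{\td}\cap\timeset{\eta}{\zeta}\neq\emptyset)\bigr)$, which together with the $1-r$ from the input Bernoulli's empty case produces the total $1-\tilde{r}$ assigned by the output Bernoulli. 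When $\timeset{\tb}{\td}\cap\timeset{\eta}{\zeta}\neq\emptyset$, the map $\tau_{\timeseq{\alpha}{\gamma}}^{\timeseq{\eta}{\zeta}}$ truncates the trajectory to times $\timeseq{b}{e}$ with $b=\max(\tb,\alpha)$ and $e=\min(\td,\gamma)$; the multi-target Dirac delta $\delta_{\tau_{\timeseq{\alpha}{\gamma}}^{\timeseq{\eta}{\zeta}}(\{\traj\})}(\settraj_{\timeseq{\alpha}{\gamma}}^{\timeseq{\eta}{\zeta}})$ then forces the output singleton trajectory to agree with the truncation, and integrating $\stseq_{\timeseq{\tb}{\td}\backslash\timeseq{b}{e}}$ yields the marginal state sequence density on $\timeseq{b}{e}$.

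Collecting the two cases gives a Bernoulli with the stated $\tilde{r}$ and $\tilde{f}(\traj)$, where the numerator of $\tilde{f}$ is the marginal and the denominator is the normalising factor $\Pr(\timeset{\tb}{\td}\cap\timeset{\eta}{\zeta}\neq\emptyset)$. For the mixture specialisation, substitute $f(\traj)=\sum_{j\in\indexSetJ}\weight^j f^j(\traj;\trajdensityparams^j)$ into the general formula. Each mixture component has the pmf $P^j(\tb,\td)=\Delta_{b^j}(\tb)\Delta_{e^j}(\td)$, so component $j$ contributes to the overlap sum if and only if $j\in\indexSetJ^{\timeseq{\eta}{\zeta}}$. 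For such components, the marginalisation described above produces exactly $\tilde{f}^j(\traj;\tilde{\trajdensityparams}^j)$ with the claimed $(\tilde{b}^j,\tilde{e}^j,\tilde{p}^j)$, and the sum telescopes to the stated closed form.

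The main obstacle will be the careful bookkeeping of which $(\tb,\td)$ intervals overlap $\timeseq{\eta}{\zeta}$ and how the multi-target Dirac delta selects precisely the truncated trajectory $(b,e,\stseq_{\timeseq{b}{e}})$; everything else is direct substitution and marginalisation. The interaction between $\timeseq{\alpha}{\gamma}$ (the time window retained in the state sequence) and $\timeseq{\eta}{\zeta}$ (the liveness window) must be kept separate throughout, since they play fundamentally different roles.
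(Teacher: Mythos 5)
Your proposal is correct and follows essentially the same route as the paper's proof: expand the Bernoulli and the set integral, split the singleton contribution according to whether $\timeset{\tb}{\td}\cap\timeset{\eta}{\zeta}$ is empty (folding those terms into the empty-set branch to obtain $1-\tilde{r}$), and marginalise the truncated states to get $\tilde{f}$, with the mixture case following by substitution since each component's pmf is a point mass at $(b^j,e^j)$. No gaps.
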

\begin{proof}
See \eqref{eq:proof_all_Bernoulli_to_Bernoulli_in_time_interval_general}.
\begin{figure*}
	\rule{\textwidth}{1pt}
	\begin{subequations}
	\begin{align}
		f&\left(\settraj_{\timeseq{\alpha}{\gamma}}^{\timeseq{\eta}{\zeta}}\right) = \int \delta_{ \tau_{\timeseq{\alpha}{\gamma}}^{\timeseq{\eta}{\zeta}}(\settraj_{\timeseq{0}{k}}) }\left(\settraj_{\timeseq{\alpha}{\gamma}}^{\timeseq{\eta}{\zeta}}\right)f(\settraj_{\timeseq{0}{k}})\delta\settraj_{\timeseq{0}{k}} = \delta_{ \tau_{\timeseq{\alpha}{\gamma}}^{\timeseq{\eta}{\zeta}}(\emptyset) }\left(\settraj_{\timeseq{\alpha}{\gamma}}^{\timeseq{\eta}{\zeta}}\right) (1-r) + \int \delta_{ \tau_{\timeseq{\alpha}{\gamma}}^{\timeseq{\eta}{\zeta}}(\{\traj\}) }\left(\settraj_{\timeseq{\alpha}{\gamma}}^{\timeseq{\eta}{\zeta}}\right) r f(\traj)\diff\traj \\
		= & \delta_{\emptyset}\left(\settraj_{\timeseq{\alpha}{\gamma}}^{\timeseq{\eta}{\zeta}}\right) (1-r) + \sum_{\substack{\tb,\td:\\ \timeset{\tb}{\td}\cap\timeset{\eta}{\zeta}=\emptyset}} \int \delta_{ \emptyset }\left(\settraj_{\timeseq{\alpha}{\gamma}}^{\timeseq{\eta}{\zeta}}\right) r f((\stseq_{\timeseq{\tb}{\td}},\tb,\td))\diff\stseq_{\timeseq{\tb}{\td}}  + \sum_{\substack{\tb,\td:\\ \timeset{\tb}{\td}\cap\timeset{\eta}{\zeta}\neq\emptyset}} \int \delta_{ \{ \left( b,e,\stseq_{\timeseq{b}{e}} \right) \} }\left(\settraj_{\timeseq{\alpha}{\gamma}}^{\timeseq{\eta}{\zeta}}\right) r f((\stseq_{\timeseq{\tb}{\td}},\tb,\td))\diff\stseq_{\timeseq{\tb}{\td}} \\
		= & \delta_{\emptyset}\left(\settraj_{\timeseq{\alpha}{\gamma}}^{\timeseq{\eta}{\zeta}}\right) \left(1-r\sum_{\substack{\tb,\td:\\ \timeset{\tb}{\td}\cap\timeset{\eta}{\zeta}\neq\emptyset}}P(\tb,\td)\right) + r \left(\sum_{\substack{\tb,\td:\\ \timeset{\tb}{\td}\cap\timeset{\eta}{\zeta}\neq\emptyset}}P(\tb,\td)\right) \frac{\sum_{\substack{\tb,\td:\\ \timeset{\tb}{\td}\cap\timeset{\eta}{\zeta}\neq\emptyset}} \int \delta_{ \{ \left( b,e,\stseq_{\timeseq{b}{e}} \right) \} }\left(\settraj_{\timeseq{\alpha}{\gamma}}^{\timeseq{\eta}{\zeta}}\right) p(\stseq_{\timeseq{\tb}{\td}}|\tb,\td)P(\tb,\td)\diff\stseq_{\timeseq{\tb}{\td}}}{\sum_{\substack{\tb,\td:\\ \timeset{\tb}{\td}\cap\timeset{\eta}{\zeta}\neq\emptyset}}P(\tb,\td)} \\
		= & \begin{cases}
			1-r\sum_{\substack{\tb,\td:\\ \timeset{\tb}{\td}\cap\timeset{\eta}{\zeta}\neq\emptyset}}P(\tb,\td) & \text{if } \settraj_{\timeseq{\alpha}{\gamma}}^{\timeseq{\eta}{\zeta}} = \emptyset \\
			r \left(\sum_{\substack{\tb,\td:\\ \timeset{\tb}{\td}\cap\timeset{\eta}{\zeta}\neq\emptyset}}P(\tb,\td)\right) \frac{\sum_{\substack{\tb,\td:\\ \timeset{\tb}{\td}\cap\timeset{\eta}{\zeta}\neq\emptyset}} \int p(\stseq_{\timeseq{\tb}{\td}}|\tb,\td) \diff \stseq_{\timeseq{\tb}{\td}\backslash\timeseq{b}{e}}P(\tb,\td)}{\sum_{\substack{\tb,\td:\\ \timeset{\tb}{\td}\cap\timeset{\eta}{\zeta}\neq\emptyset}}P(\tb,\td)} & \text{if } \settraj_{\timeseq{\alpha}{\gamma}}^{\timeseq{\eta}{\zeta}} = \{\traj\}
		\end{cases}
	\end{align}
	\label{eq:proof_all_Bernoulli_to_Bernoulli_in_time_interval_general}
	\end{subequations}
	\rule{\textwidth}{1pt}
\end{figure*}
\end{proof}

\begin{lemma}\label{lem:all_Poisson_to_Poisson_in_time_interval_general}
	Let $f(\setX_{0:k})$ be a trajectory \ppp density, with intensity $\lambda(\traj)= \mu p(\stseq_{\timeseq{\tb}{\td}} | \tb,\td) P(\tb,\td)$. The trajectory \rfs density
	\begin{align}
		f(\settraj_{\timeseq{\alpha}{\gamma}}^{\timeseq{\eta}{\zeta}}) = \int \delta_{ \tau_{\timeseq{\alpha}{\gamma}}^{\timeseq{\eta}{\zeta}}(\settraj_{\timeseq{0}{k}}) }( \settraj_{\timeseq{\alpha}{\gamma}}^{\timeseq{\eta}{\zeta}} )f(\settraj_{\timeseq{0}{k}})\delta\settraj_{\timeseq{0}{k}}
	\end{align}
	is a trajectory \ppp density with intensity
	\begin{align}
		\tilde{\lambda}(\traj) = & \sum_{\substack{\tb,\td : \\ \timeset{\tb}{\td}\cap \timeset{\eta}{\zeta} \neq\emptyset}} \int \mu p(\stseq_{\timeseq{\tb}{\td}} | \tb,\td) P(\tb,\td) \diff \stseq_{\timeseq{\tb}{\td} \backslash \timeseq{b}{e}}
	\end{align}
	Specifically, for
	\begin{align}
		 \lambda(\traj) = \sum_{j\in\indexSetJ} \weight^{j} f^{j} (\traj ; \trajdensityparams^j),
	\end{align}
	we get
	\begin{subequations}
	\begin{align}
		&\tilde{\lambda}(\traj) = \sum_{j\in\indexSetJ^{\timeseq{\eta}{\zeta}}} \weight^{j} \tilde{f}^{j} (\traj ; \tilde{\trajdensityparams}^j) \\
		& \tilde{\trajdensityparams}^j = \left(\tilde{b}^j, \tilde{e}^j, \tilde{p}^j(\stseq_{\timeseq{\tilde{b}^j}{\tilde{e}^j}})  \right)  \\
		& \tilde{b}^j = \max(b^j,\alpha), \quad \tilde{e}^j = \min(e^j,\gamma) \\
		& \tilde{p}^j(\stseq_{\timeseq{\tilde{b}^j}{\tilde{e}^j}})  = \int p^j(\stseq_{\timeseq{b^j}{e^j}}) \diff \stseq_{\timeseq{b^j}{e^j}\backslash\timeseq{\tilde{b}^j}{\tilde{e}^j}}
	\end{align}
	\end{subequations}
	where $\indexSetJ^{\timeseq{\eta}{\zeta}} = \left\{ j \ : \ \timeset{b^j}{e^j}\cap \timeset{\eta}{\zeta} \neq\emptyset \right\}$.
\end{lemma}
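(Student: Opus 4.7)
The plan is to evaluate the set integral $\int\delta_{\tau_{\timeseq{\alpha}{\gamma}}^{\timeseq{\eta}{\zeta}}(\settraj_{\timeseq{0}{k}})}(\settraj_{\timeseq{\alpha}{\gamma}}^{\timeseq{\eta}{\zeta}})f(\settraj_{\timeseq{0}{k}})\delta\settraj_{\timeseq{0}{k}}$ and recognise the result as a trajectory \ppp. The cleanest route is via the probability generating functional (PGFL). For the input \ppp we have $G[h]=\exp(\int(h(\traj)-1)\lambda(\traj)\diff\traj)$. Because $\tau_{\timeseq{\alpha}{\gamma}}^{\timeseq{\eta}{\zeta}}$ acts pointwise on $\settraj_{\timeseq{0}{k}}$, mapping each trajectory independently to either a singleton or the empty set, the PGFL of $\settraj_{\timeseq{\alpha}{\gamma}}^{\timeseq{\eta}{\zeta}}=\tau_{\timeseq{\alpha}{\gamma}}^{\timeseq{\eta}{\zeta}}(\settraj_{\timeseq{0}{k}})$ equals $G[\tilde{h}]$, where $\tilde{h}(\traj)=h(\tau_{\timeseq{\alpha}{\gamma}}^{\timeseq{\eta}{\zeta}}(\traj))$ on the kept set $\{\traj:\timeset{\tb}{\td}\cap\timeset{\eta}{\zeta}\neq\emptyset\}$ and $\tilde{h}(\traj)=1$ elsewhere.

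Substituting, the exponent reduces to an integral over kept trajectories only. Changing variables from $\traj=(\tb,\td,\stseq_{\timeseq{\tb}{\td}})$ to the clipped trajectory $\traj'=(b,e,\stseq_{\timeseq{b}{e}})$ with $b=\max(\tb,\alpha)$ and $e=\min(\td,\gamma)$, and marginalising out $\stseq_{\timeseq{\tb}{\td}\backslash\timeseq{b}{e}}$ for each admissible $(\tb,\td)$ pair, assembles exactly the claimed pushforward intensity $\tilde{\lambda}(\traj')$. The PGFL of the output is therefore $\exp(\int(h(\traj')-1)\tilde{\lambda}(\traj')\diff\traj')$, uniquely identifying $\settraj_{\timeseq{\alpha}{\gamma}}^{\timeseq{\eta}{\zeta}}$ as a \ppp with intensity $\tilde{\lambda}$.

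The mixture specialisation then follows by linearity: substituting $\lambda(\traj)=\sum_{j}\weight^{j}f^{j}(\traj;\trajdensityparams^{j})$ into the pushforward and exchanging sum and integral, only components with $\timeset{b^{j}}{e^{j}}\cap\timeset{\eta}{\zeta}\neq\emptyset$, i.e., $j\in\indexSetJ^{\timeseq{\eta}{\zeta}}$, survive; the others have time support disjoint from $\timeset{\eta}{\zeta}$ and contribute zero. For each surviving component, marginalising over the excluded time steps yields $\tilde{p}^{j}$ with clipped parameters $\tilde{b}^{j}=\max(b^{j},\alpha)$ and $\tilde{e}^{j}=\min(e^{j},\gamma)$, matching the stated formulae.

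The main obstacle, should one prefer a direct set-integral computation without PGFLs, is the combinatorial bookkeeping in expanding $\delta_{\tau(\{\traj^{1},\ldots,\traj^{n}\})}(\settraj_{\timeseq{\alpha}{\gamma}}^{\timeseq{\eta}{\zeta}})$: one must sum over which subset of the $n$ indices contribute to the output (the rest mapping to $\emptyset$), handle the multinomial factor from the multi-target Dirac delta, and explicitly sum the resulting series over the number of discarded trajectories so that $\exp(\int_{\tau(\traj)=\emptyset}\lambda(\traj)\diff\traj)$ combines with the original \ppp normaliser to give $\exp(-\int\tilde{\lambda}(\traj')\diff\traj')$. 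Given the factorisation of the \ppp density, this bookkeeping is routine and mirrors the Bernoulli case of Lemma~\ref{lem:all_Bernoulli_to_Bernoulli_in_time_interval_general}, but the PGFL approach avoids it entirely.
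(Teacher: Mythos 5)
Your proposal is correct, but it takes a genuinely different route from the paper. The paper first invokes the restriction theorem for \ppp{}s (splitting $\settraj_{\timeseq{0}{k}}$ into two independent \ppp{}s according to whether $\timeset{\tb}{\td}\cap\timeset{\eta}{\zeta}$ is empty, so that the discarded subprocess integrates away), and then evaluates the remaining set integral directly from the definition, expanding the multi-trajectory Dirac delta term by term and resumming the series into the exponential form of a \ppp\ with intensity $\tilde{\lambda}$ --- i.e., exactly the combinatorial bookkeeping you describe as the ``obstacle'' of the direct route. Your \pgfl\ argument replaces both steps with a single application of the mapping/marking theorem: since $\tau_{\timeseq{\alpha}{\gamma}}^{\timeseq{\eta}{\zeta}}$ acts on each point of the \ppp\ independently and deterministically (clip or delete), the output \pgfl\ is $G[\tilde{h}]$ with $\tilde{h}=h\circ\tau$ on the kept set and $1$ elsewhere, and the change of variables in the exponent produces the pushforward intensity directly; uniqueness of the \pgfl\ then identifies the output as a \ppp. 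This is cleaner and subsumes the paper's splitting step (the deleted trajectories simply contribute $\tilde{h}=1$). What the paper's explicit computation buys is that it never has to appeal to the mapping theorem or to \pgfl\ uniqueness, staying entirely within the set-integral calculus used elsewhere in the appendix, and it makes the parallel with the Bernoulli case (Lemma~\ref{lem:all_Bernoulli_to_Bernoulli_in_time_interval_general}) explicit. Your handling of the mixture specialisation --- each component is supported on a single $(b^{j},e^{j})$, so the indicator keeps or kills it wholesale --- matches the paper's conclusion.
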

\begin{proof}
From \cite[p. 99, Rmk. 12]{Mahler:2014} we know that the \ppp $\settraj_{\timeseq{0}{k}}$ with intensity $\lambda(\traj)$ can be divided into two disjoint and independent \ppp subsets
\begin{subequations}
\begin{align}
	\settraj_{\timeseq{0}{k}} = & \settraj_{\timeseq{0}{k}}^{\timeseq{\eta}{\zeta}} \cup \settraj_{\timeseq{0}{k}}^{\timeseq{\not\eta}{\not\zeta}} , \quad \settraj_{\timeseq{0}{k}}^{\timeseq{\eta}{\zeta}} \cap \settraj_{\timeseq{0}{k}}^{\timeseq{\not\eta}{\not\zeta}} =\emptyset \\
	\settraj_{\timeseq{0}{k}}^{\timeseq{\eta}{\zeta}} = & \left\{ \traj\in \settraj_{\timeseq{0}{k}} \ : \ \timeset{\tb}{\td} \cap \timeset{\eta}{\zeta}\neq\emptyset \right\} \\
	\settraj_{\timeseq{0}{k}}^{\timeseq{\not\eta}{\not\zeta}} = & \left\{ \traj\in \settraj_{\timeseq{0}{k}} \ : \ \timeset{\tb}{\td} \cap \timeset{\eta}{\zeta}=\emptyset \right\}
\end{align}
\end{subequations}
with \ppp intensities
\begin{subequations}
\begin{align}
	\lambda^{\timeseq{\eta}{\zeta}}(\traj) &= \mathbf{1}_{\timeset{\tb}{\td} \cap \timeset{\eta}{\zeta}} \lambda(\traj) , \\
	\lambda^{\timeseq{\not\eta}{\not\zeta}}(\traj) &= \left(1- \mathbf{1}_{\timeset{\tb}{\td} \cap \timeset{\eta}{\zeta}}\right)\lambda(\traj),
\end{align}
\end{subequations}
where 
\begin{align}
	\mathbf{1}_{\timeset{\tb}{\td} \cap \timeset{\eta}{\zeta}} = \begin{cases} 1 & \timeset{\tb}{\td} \cap \timeset{\eta}{\zeta} \neq\emptyset \\ 0 & \text{otherwise.} \end{cases}
\end{align}
It is straightforward to verify that $\lambda^{\timeseq{\eta}{\zeta}}(\traj) + \lambda^{\timeseq{\not\eta}{\not\zeta}}(\traj) = \lambda(\traj)$. The rest of the proof follows in \eqref{eq:proof_all_Poisson_to_Poisson_in_time_interval_general}.
\begin{figure*}
\rule{\textwidth}{1pt}
\begin{subequations}
\begin{align}
		f^{\gamma}(\setY) = & \int \delta_{ \tau_{\timeseq{\alpha}{\gamma}}^{\timeseq{\eta}{\zeta}}(\settraj_{\timeseq{0}{k}}) }( \setY )f(\settraj_{\timeseq{0}{k}})\delta\settraj_{\timeseq{0}{k}} \\
		= & \sum_{\substack{\settraj_{\timeseq{0}{k}}^{\timeseq{\eta}{\zeta}},\settraj_{\timeseq{0}{k}}^{\timeseq{\not\eta}{\not\zeta}} \\ \settraj_{\timeseq{0}{k}}^{\timeseq{\eta}{\zeta}} \uplus \settraj_{\timeseq{0}{k}}^{\timeseq{\not\eta}{\not\zeta}} = \setY}} \int \delta_{ \tau_{\timeseq{\alpha}{\gamma}}^{\timeseq{\eta}{\zeta}}(\settraj_{\timeseq{0}{k}}^{\gamma}) }( \settraj_{\timeseq{0}{k}}^{\timeseq{\eta}{\zeta}} )f(\settraj_{\timeseq{0}{k}}^{\gamma})\delta\settraj_{\timeseq{0}{k}}^{\gamma}  \int \delta_{ \tau_{\timeseq{\alpha}{\gamma}}^{\timeseq{\eta}{\zeta}}(\settraj_{\timeseq{0}{k}}^{\not \gamma}) }( \settraj_{\timeseq{0}{k}}^{\timeseq{\not\eta}{\not\zeta}} )f(\settraj_{\timeseq{0}{k}}^{\not \gamma})\delta\settraj_{\timeseq{0}{k}}^{\not \gamma} \\
		= & \int \delta_{ \tau_{\timeseq{\alpha}{\gamma}}^{\timeseq{\eta}{\zeta}}(\settraj_{\timeseq{0}{k}}^{\gamma}) }( \setY )f(\settraj_{\timeseq{0}{k}}^{\gamma})\delta\settraj_{\timeseq{0}{k}}^{\gamma} \\
		= & \delta_{ \emptyset }( \setY )  e^{-\conv{\lambda^{\timeseq{\eta}{\zeta}}}{1}} + \sum_{n=1}^{\infty} \frac{1}{n!}\idotsint \delta_{ \tau_{\timeseq{\alpha}{\gamma}}^{\timeseq{\eta}{\zeta}}(\{\traj^1,\ldots,\traj^n\}) }( \setY ) e^{-\conv{\lambda^{\timeseq{\eta}{\zeta}}}{1}} \prod_{i=1}^{n}\lambda^{\timeseq{\eta}{\zeta}}(\traj^i) \diff\traj^1\cdots\diff\traj^n \\
		= & \delta_{ \emptyset }( \setY )  e^{-\conv{\lambda^{\timeseq{\eta}{\zeta}}}{1}} + \sum_{n=1}^{\infty} \frac{ e^{-\conv{\lambda^{\timeseq{\eta}{\zeta}}}{1}}}{n!} \sum_{\substack{\setY^{1},\ldots,\setY^n : \\ \uplus_{i=1}^{n} \setY^i = \setY}} \prod_{i=1}^{n} \int \delta_{ \tau_{\timeseq{\alpha}{\gamma}}^{\timeseq{\eta}{\zeta}}(\{\traj^i\}) }( \setY^i ) \lambda^{\timeseq{\eta}{\zeta}}(\traj^i)\diff\traj^i \\
		= & \delta_{ \emptyset }( \setY )  e^{-\conv{\lambda^{\timeseq{\eta}{\zeta}}}{1}} +  \sum_{n=1}^{\infty} \frac{ e^{-\conv{\lambda^{\timeseq{\eta}{\zeta}}}{1}}}{n!} \sum_{\substack{\setY^{1},\ldots,\setY^n : \\ \uplus_{i=1}^{n} \setY^i = \setY}} \prod_{i=1}^{n} \sum_{\substack{\tb,\td : \\ \timeset{\tb}{\td} \cap \timeset{\eta}{\zeta} }} \int \delta_{ \{X_{}^i\} }( \setY^i ) \lambda(\traj^i)\diff\stseq_{\timeseq{\tb}{\td}\backslash\timeseq{b}{e}}^i \\
		= & e^{-\conv{\lambda^{\timeseq{\eta}{\zeta}}}{1}} \delta_{ \emptyset }( \setY ) + e^{-\conv{\lambda^{\timeseq{\eta}{\zeta}}}{1}} \sum_{n=1}^{\infty} \delta_{\{Y^1,\ldots,Y^n\}}(\setY ) \prod_{i=1}^{n} \sum_{\substack{\tb,\td : \\ \timeset{\tb}{\td} \cap \timeset{\eta}{\zeta} }} \int \delta_{ \{X_{}^i\} }( \{Y^i\} ) \lambda(\traj^i)\diff\stseq_{\timeseq{\tb}{\td}\backslash\timeseq{b}{e}}^i  \\
		= & e^{- \mu \sum_{\substack{\tb,\td : \timeset{\tb}{\td} \cap \timeset{\eta}{\zeta} }} P(\tb,\td) } \prod_{Y\in\setY} \mu p(Y) \sum_{\substack{\tb,\td :  \timeset{\tb}{\td} \cap \timeset{\eta}{\zeta} }} P(\tb,\td)
\end{align} 
\label{eq:proof_all_Poisson_to_Poisson_in_time_interval_general}
\end{subequations}
\rule{\textwidth}{1pt}
\end{figure*}
\end{proof}

\subsection{Proof of Theorem~\ref{thm:all_to_trajs_in_a2g_alive_in_e2z}}

Given a \pmbm density $f_{}(\settraj_{\timeseq{0}{k}})$ for the set of all trajectories from $0$ to $k$, the density for $\settraj_{\timeseq{\alpha}{\gamma}}^{\timeseq{\eta}{\zeta}}$, where $0\leq\alpha\leq\eta\leq\zeta\leq\gamma\leq k$, is \cite[Thm. 11]{GarciaFernandezSM:2019}
	\begin{align}
		f(\settraj_{\timeseq{\alpha}{\gamma}}^{\timeseq{\eta}{\zeta}}) = \int \delta_{ \tau_{\timeseq{\alpha}{\gamma}}^{\timeseq{\eta}{\zeta}}(\settraj_{\timeseq{0}{k}}) }( \settraj_{\timeseq{\alpha}{\gamma}}^{\timeseq{\eta}{\zeta}} )f(\settraj_{\timeseq{0}{k}})\delta\settraj_{\timeseq{0}{k}}
	\end{align}
Theorem~\ref{thm:all_to_trajs_in_a2g_alive_in_e2z} then follows from Lemmas~\ref{lem:pmbm_function_integration_general}, \ref{lem:all_Bernoulli_to_Bernoulli_in_time_interval_general} and \ref{lem:all_Poisson_to_Poisson_in_time_interval_general}.

\subsection{Proof of Theorem~\ref{thm:density_traj_alpha2gamma_alive_eta2zeta_meas_xi2chi}}
\label{sec:proof_of_theorem}
	It follows from Theorem~\ref{th:PredictionAllTrajectories} and Theorem~\ref{th:Update} that
	\begin{align}
	f_{\timeseq{\min(\alpha,\xi)}{\max(\gamma,\chi)} | \timeseq{\xi}{\chi} }\left(\settraj_{\timeseq{\min(\alpha,\xi)}{\max(\gamma,\chi)}}\right), \label{eq:proof_of_theorem}
	\end{align}
	is a \pmbm density, where no update (only prediction) is performed for time steps outside the interval $\timeseq{\xi}{\chi}$. Applying Theorem~\ref{thm:all_to_trajs_in_a2g_alive_in_e2z} to \eqref{eq:proof_of_theorem} concludes the proof.

\section{Additional details on experimental results}\label{sec:experiments_appendix}
In this appendix, we provide further details on the experimental results. First, we show the number of targets at each time step in each of the scenarios in Figure \ref{fig_groundtruth_card}. Second, we show the \trajmetric for the estimated trajectories at the final time step in Figure \ref{fig_traj_metric} and Table \ref{tab:traj_metric}. In this case, the \multiscanpmbmif is also the best performing algorithm in Scenario 1, and \pmbmif in Scenarios 2 and 3.

\begin{figure*}[!t]
	\centering
	\subfloat[Scenario 1]{\includegraphics[width = 0.33\textwidth]{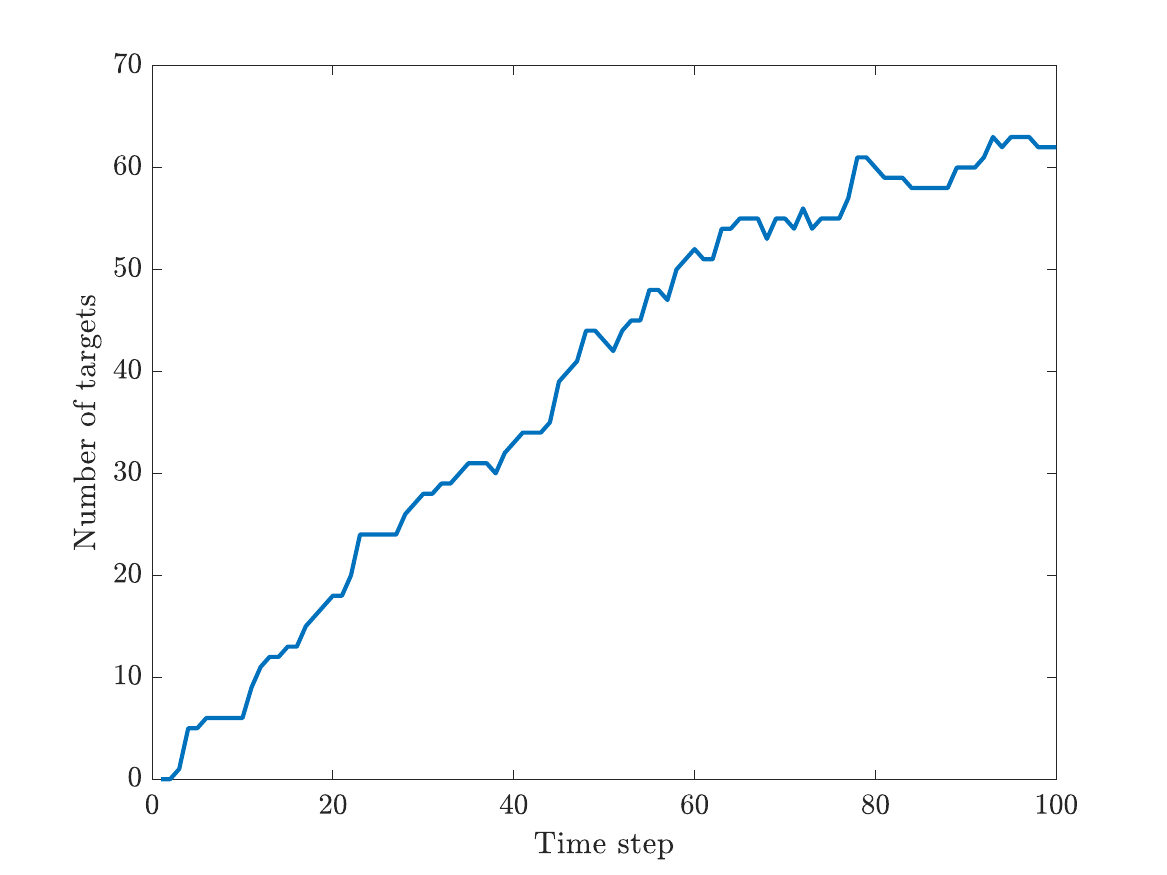}}
	\subfloat[Scenario 2]{\includegraphics[width = 0.33\textwidth]{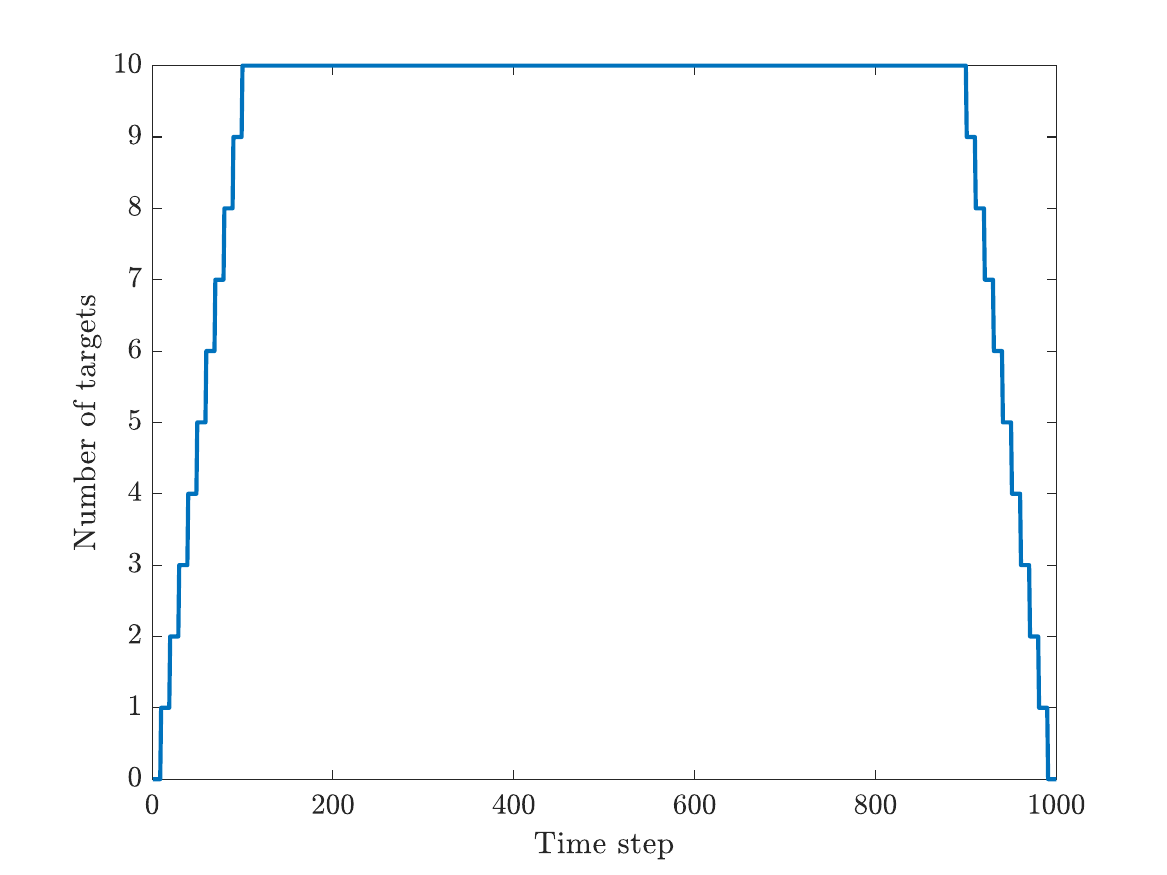}}
	\subfloat[Scenario 3]{\includegraphics[width = 0.33\textwidth]{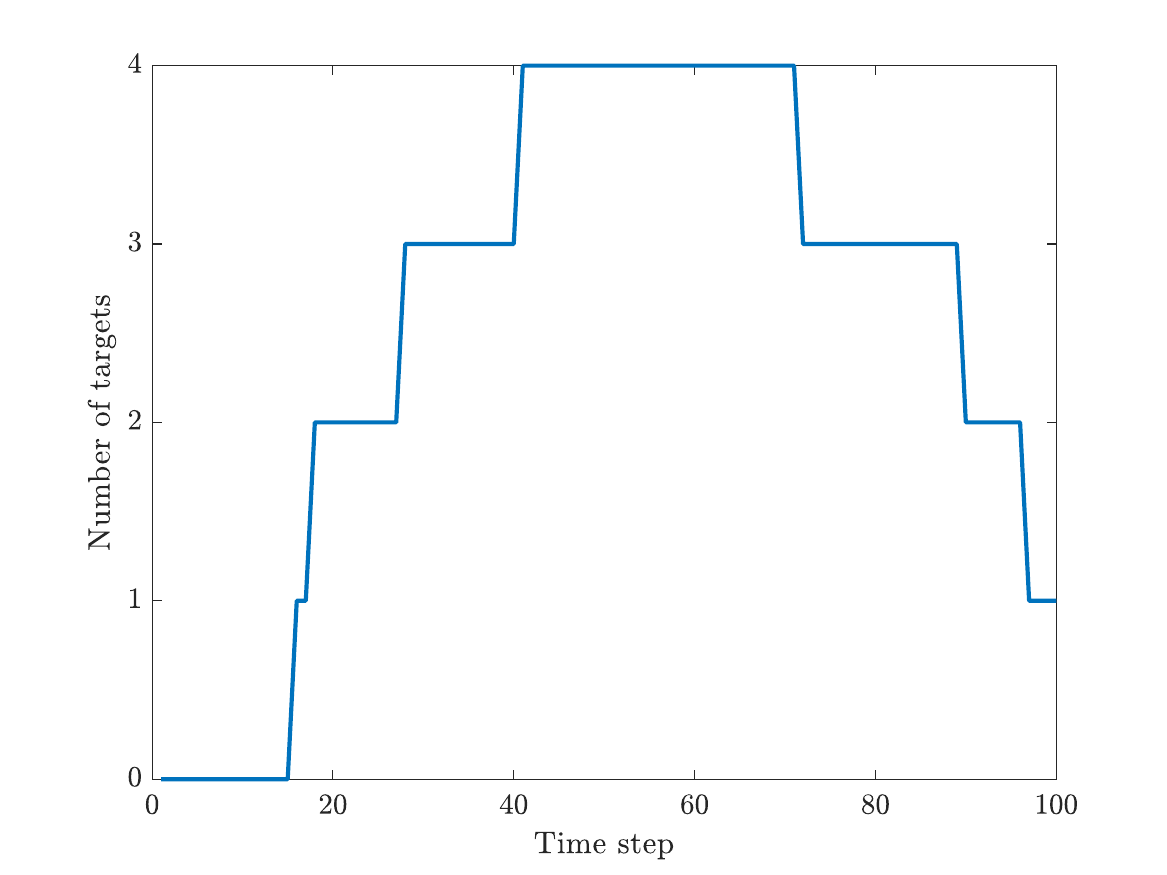}}
	\caption{Target cardinalities against time in each of the three scenarios in the simulations.}
	\label{fig_groundtruth_card}
\end{figure*}

\begin{figure*}[!t]
	\centering
	\subfloat[Scenario 1]{\includegraphics[width = 0.33\textwidth]{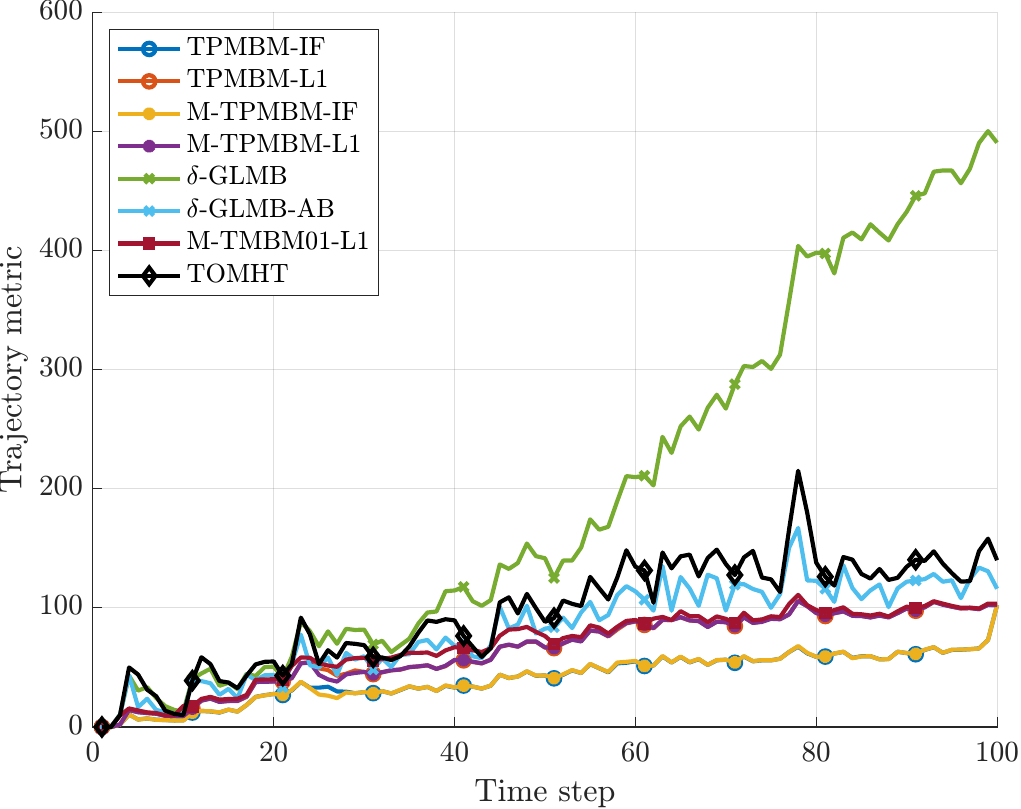}}
	\subfloat[Scenario 2]{\includegraphics[width = 0.33\textwidth]{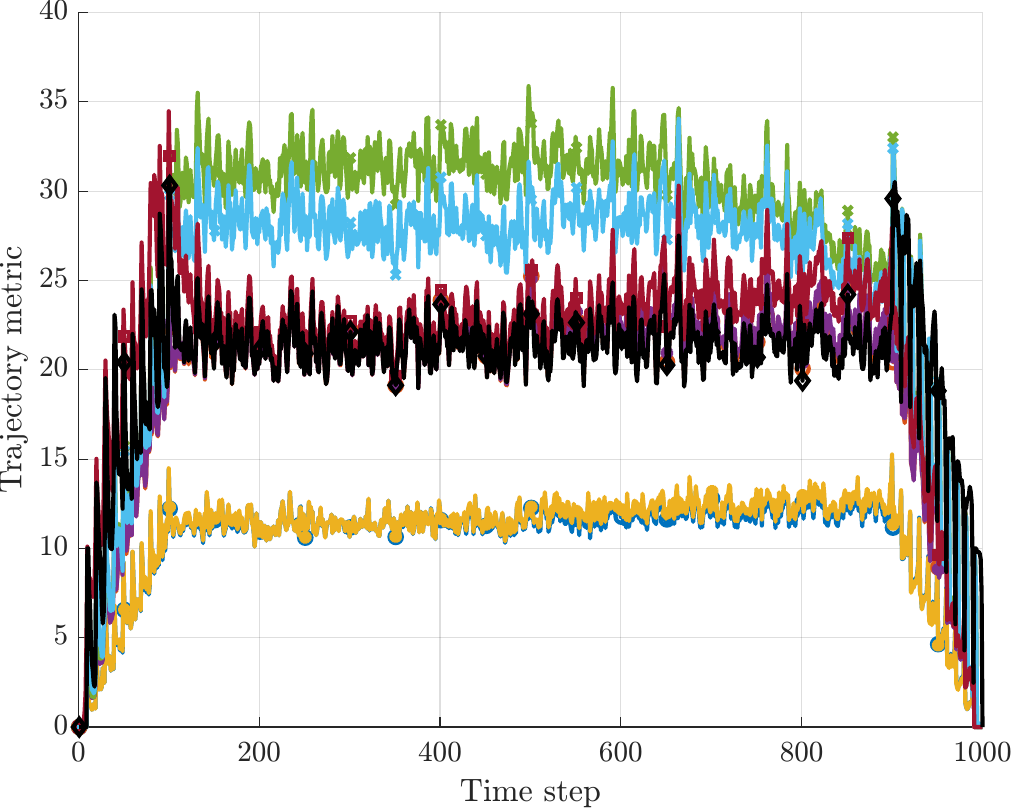}}
	\subfloat[Scenario 3]{\includegraphics[width = 0.33\textwidth]{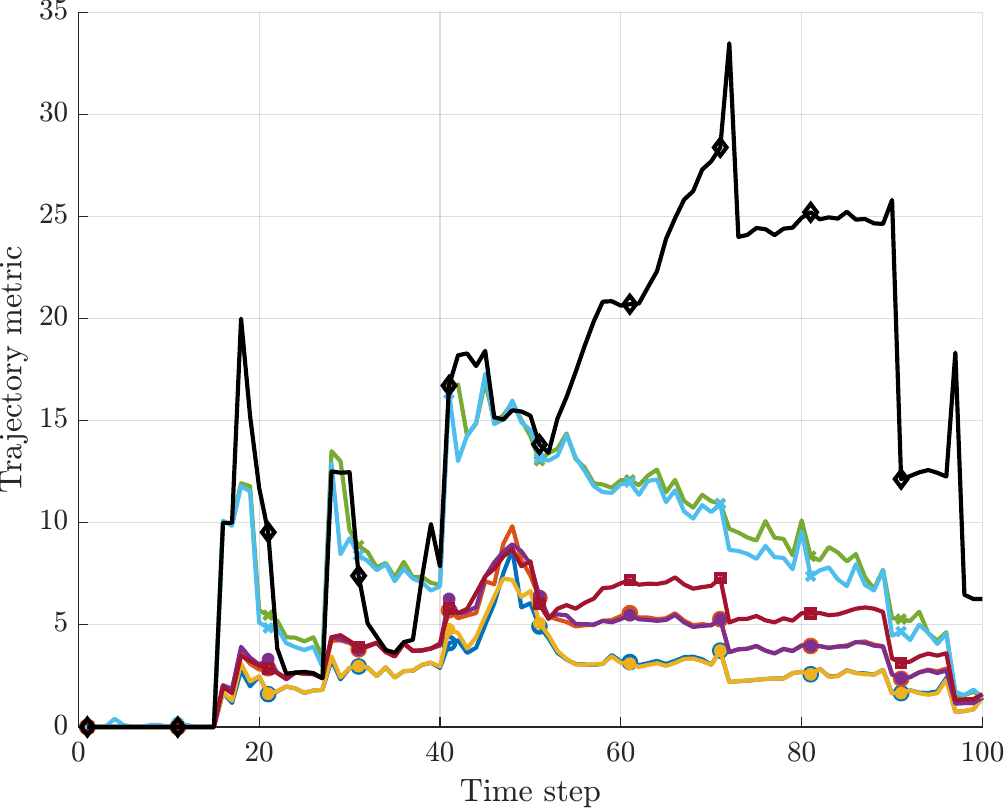}}
	\caption{Trajectory estimation performance (evaluated at the final time step) in terms of the normalised \trajmetric over time}
	\label{fig_traj_metric}
\end{figure*}

\begin{table*}[!t]
	\caption{Trajectory estimation performance in terms of the normalised \trajmetric (evaluated at the final time step) and its decomposition}
	\label{tab:traj_metric}
	\centering
	\resizebox{\textwidth}{!}
	{
		\begin{tabular}{c|ccccc|ccccc|ccccc}
			& \multicolumn{5}{c|}{Scenario 1} & \multicolumn{5}{c|}{Scenario 2} & \multicolumn{5}{c}{Scenario 3} \\ \hline
			Filter & TM   & LE   & ME   & FE   & SE  & TM   & LE   & ME   & FE   & SE  & TM   & LE   & ME   & FE  & SE  \\ \hline
			\pmbmif	 &   4170.1   &   3884.8   &   273.4   &   10.5   &  1.5   &    \underline{10532.8}  &   10182.7   &   349.7   &   0   &  0.4   &   \underline{247.3}   &  235.5    &   4.3   &  0   &  7.5   \\
			\pmbmLone	   &   6487.2   &   6116.4   &  316.1    &   53.2   &  1.5   &  19584.2    &  18918.5    &   507.5   &   157.8   &  0.4   &   357.6   &  344.5    &   4.3   &  0   &  8.8   \\
			\multiscanpmbmif	   &   \underline{4158.4}   &   3885.7   &   262.7   &   9.3   &  0.7   &   10692.7   &   10165.3   &   527.0   &   0   &  0.4   &   250.1   &   235.5   &  6.5    &  0   &  8.0   \\
			\multiscanpmbmLone	   &   6444.5   &   6112.8   &   272.0   &   58.0   &  1.7   &    19698.8  &  18859.0    &   683.2   &   156.2   &  0.4   &   361.6   &   345.9   &   6.6   &  0.1   &  9.1   \\
			\dglmb	   &   19590.7   &   5157.7   &   6433.3   &   8134.9   &   78.3  &   26998.5   &   16496.8   &   10356.0   &   282.5   &  1.1   &   760.8   &  325.1    &   438.4   &  19.5   &  20.2   \\
			\dglmbab	   &   8175.5   &   5855.7   &   1641.7   &   718.7   &   13.3  &   24696.2   &   17269.2   &   7337.5   &   364.0   &  8.7   &   729.0   &   325.2   &   393.8   &  24.8   &  21.7   \\
			\multiscanmbmolLone	   &   6939.8   &   6033.3   &   706.7   &   197.7   &  2.1   &   21628.8   &   18608.0   &   1934.2   &   1085.0   &  1.6   &   422.6   &   344.8   &   38.9 & 30.2   &  8.7   \\
			\tomht & 9606.2 & 5650.2 & 2730.8 & 1223.1 & 2.1 & 20518.0 & 18967.3 & 587.4 & 963.1 & 0.1 & 1390.1 & 507.8 & 542.2 & 325.4 & 15.4
		\end{tabular}
	}
\end{table*}

\section{GOSPA metric for sets of trajectories}\label{sec:Trajectory_metric}
\label{app:Trajectory_metric}
This appendix reviews the GOSPA metric for sets of trajectories based on linear programming \cite[Prop. 2]{RahmathullahGS16a}, which we abbreviate as \trajmetric in this paper. We consider two sets of trajectories $\mathbf{X}$ and $\mathbf{Y}$ from time step 0 to $T$. The number of trajectories is these sets is $n_{\mathbf{X}}$ and $n_{\mathbf{Y}}$, respectively. For $1\leq p < \infty$, $c > 0$, $\gamma >0$ and a base metric $d_{b}(\cdot,\cdot)$ for single targets, the \trajmetric is 
\begin{align}
d\left(\mathbf{X},\mathbf{Y}\right) & =\min_{\substack{W^{k}\in\mathcal{\overline{W}}_{\mathbf{X},\mathbf{Y}}\\
		k=1,\ldots,T
	}
}\Bigg(\sum_{k=1}^{T}\mathrm{tr}\big[\big(D_{\mathbf{X},\mathbf{Y}}^{k}\big)^{\dagger}W^{k}\big]\nonumber \\
& \quad+\frac{\gamma^{p}}{2}\sum_{k=1}^{T-1}\sum_{i=1}^{n_{\mathbf{X}}}\sum_{j=1}^{n_{\mathbf{Y}}}|W^{k}(i,j)-W^{k+1}(i,j)|\Bigg)^{\frac{1}{p}},\label{eq:LP_metric}
\end{align}
where $D_{\mathbf{X},\mathbf{Y}}^{k}$ is a $(n_{\mathbf{X}}+1)\times(n_{\mathbf{Y}}+1)$
matrix whose $(i,j)$ element is
\begin{align}
D_{\mathbf{X},\mathbf{Y}}^{k}(i,j) & =d_{G}\left(\mathbf{x}_{i}^{k},\mathbf{y}_{j}^{k}\right)^{p}.
\end{align}
with $\mathbf{x}_{i}^{k}$ being the target set of the $i$-th trajectory in $\mathbf{X}$ at time step $k$ and  $\mathbf{y}_{j}^{k}$ being the target set of the $j$-th trajectory in $\mathbf{Y}$ at time step $k$, and
\begin{align}
d_{G}\left(\mathbf{x},\mathbf{y}\right) & \triangleq\begin{cases}
\min\left(c,d_{b}\left(x,y\right)\right) & \mathbf{x}=\left\{ x\right\} ,\mathbf{y}=\left\{ y\right\} \\
0 & \mathbf{x}=\mathbf{y}=\emptyset\\
\frac{c}{2^{1/p}} & \mathrm{otherwise.}
\end{cases}\label{eq:baseMetric}
\end{align}
A matrix $W^{k}\in\mathcal{\overline{W}}_{\mathbf{X},\mathbf{Y}}$ whose $(i,j)$ element is $W^{k}(i,j)$ meets
\begin{align}
\sum_{i=1}^{n_{\mathbf{X}}+1}W^{k}(i,j) & =1,\ j=1,\ldots,n_{\mathbf{Y}}\\
\sum_{j=1}^{n_{\mathbf{Y}}+1}W^{k}(i,j) & =1,\ i=1,\ldots,n_{\mathbf{X}}\\
W^{k}(n_{\mathbf{X}}+1,n_{\mathbf{Y}}+1) & =0,\\
W^{k}(i,j) & \geq0,\forall\ i,j.
\end{align}
Detailed explanations of these terms are provided in \cite{RahmathullahGS16a}. The \trajmetric normalised by the time window is then $d\left(\mathbf{X},\mathbf{Y}\right)/T$.

\end{document}